\documentclass[12pt]{amsart}
\usepackage{epsfig}
\usepackage{amsfonts}
\usepackage{amsmath}
\setcounter{MaxMatrixCols}{10}

\textheight=600pt

\textwidth=480pt
\hoffset=-50pt
\newtheorem{thm}{Theorem}[section]

\newtheorem{prop}[thm]{Proposition}
\newtheorem*{prob*}{Problem}
\newtheorem*{thm*}{Theorem}

\theoremstyle{definition}

\newtheorem*{defn*}{Definition}
\newtheorem{rem}[thm]{Remark}

\newtheorem*{rem*}{Remark}
\numberwithin{equation}{section}

\newcommand{\C}{\mathbb C}
\newcommand{\R}{\mathbb R}

\DeclareMathOperator{\Res}{Res}
\DeclareMathOperator{\Ind}{Indep}
\DeclareMathOperator{\Laguerre}{Laguerre}
\DeclareMathOperator{\Bessel}{Bessel}
\DeclareMathOperator{\e}{e}

\DeclareMathOperator{\Mat}{Mat}

\newcommand{\re}{\mathop{\mathrm{Re}}}

\begin{document}
\title[Hard edge limit of two strongly coupled matrices]
 {\bf{Hard edge limit of the product of two strongly coupled random matrices}}

\author{Gernot Akemann}
\address{
Department of Physics,
Bielefeld University,  Postfach 100131, D-33501 Bielefeld, Germany
} \email{akemann@physik.uni-bielefeld.de}

\author{Eugene Strahov}
\address{Department of Mathematics, The Hebrew University of
Jerusalem, Givat Ram, Jerusalem
91904, Israel}\email{strahov@math.huji.ac.il}
\thanks{One of the authors (G.~A.) was partly supported
by LabEx PALM grant ANR-10-LABX- 0039-PALM and by the German research council DFG grant AK35/2-1. He
thanks the LPTMS Orsay and the Simons Center for Geometry and Physics, Stony Brook University, (program on Random Matrices 2015) for their kind hospitality where part of this work was done. Furthermore, fruitful discussions with Mario Kieburg are acknowledged.\\ }
\keywords{Products of complex random matrices, determinantal point processes,  Meijer G-kernel,
Bessel-kernel, interpolating ensembles}

\commby{}

\begin{abstract}
We investigate the hard edge scaling limit of the ensemble defined by the squared singular values
of the product of two coupled complex
random matrices.
When taking the coupling parameter to be dependent on the size of the product matrix, in a certain double scaling regime at the origin
the two matrices become strongly coupled and
we obtain a new hard edge limiting kernel. It
interpolates between the classical Bessel-kernel
describing the hard edge scaling limit of the Laguerre ensemble
of a single matrix,
and the Meijer G-kernel of Kuijlaars and Zhang
describing the hard edge scaling limit for the product of two independent Gaussian complex matrices.
It differs from the interpolating kernel of Borodin to which we compare as well.
\end{abstract}
\maketitle

\section{Introduction and results}
\subsection{Introduction}
\ \\

It is well known that the squared singular values of a complex rectangular Gaussian random matrix form a determinantal point process on $\R_{>0}$.
The correlation kernel of this determinantal point process  can be expressed in terms of the classical Laguerre polynomials, and can be studied
in different asymptotic regimes. In particular, the bulk, soft edge, and hard edge scaling limits of the correlation kernel
are
obtained
using known asymptotic expansions of the Laguerre polynomials, see, for example, Forrester \cite[Ch. 7]{ForresterLogGases}.
The hard edge scaling limit of the
kernel can be written in terms of Bessel functions, and it is called the Bessel-kernel.
The limiting determinantal point process defined by the Bessel-kernel was studied by many authors, see, for example,
Forrester \cite{Forrester93} and Tracy and Widom \cite{Tracy},
and we will use the representation of the former.
It enjoys important applications in Physics, including the Dirac operator spectrum in Quantum Chromodynamics \cite{SV93}.

Products of $M$ independent complex Gaussian random matrices have seen much progress recently, and we refer to \cite{AIp} and the thesis \cite{JespersThesis} for recent reviews including more general product matrices.
In particular one of the authors and his coworkers
\cite{AkemannIpsenKieburg,AkemannKieburgWei}
obtained a determinantal point process describing the squared singular values of such a product matrix.
It was shown by Kuijlaars and Zhang \cite{KuijlaarsZhang} that the kernel of
this point process
has  a remarkable hard edge scaling limit. The limiting
correlation kernel called Meijer G-kernel labelled by $M$
can be considered as a generalisation of the classical Bessel-kernel
with $M=1$.
It is universal as it appears in several other random matrix ensembles, including the Cauchy multi-matrix model
\cite{Bertola1,Bertola2,Bertola3}, in the Muttalib-Borodin ensemble
\cite{Muttalib,Borodin:1998} as was pointed out by Kuijlaars and Stivigny  \cite{ArnoDries}, and in the product of two coupled matrices as was shown by the authors \cite{AkemannStrahov}.

The question arises whether there is a family of determinantal point process that interpolates between the Bessel-kernel point process,
and the Meijer G-kernel point process for a given $M$. We note that
interpolating ensembles are of great interest in Random Matrix Theory.
A classical example is the ensemble solved by Pandey and Mehta \cite{PandeyMehta}
that interpolates between the Gaussian Orthogonal and Gaussian Unitary Ensemble
and thus describes the onset of time reversal symmetry breaking.
Another example is the
ensemble of Hermitian matrices introduced by  Moshe,  Neuberger, and Shapiro \cite{Moshe}, and studied
further by Johansson \cite{JohanssonGumbel}. This ensemble gives rise to a family of determinantal processes whose edge behaviour interpolates between the Poisson process of uncorrelated eigenvalues
and the Airy-kernel point process that appears in many applications.
For completeness one could also mention the works of Bohigas and coworkers \cite{Bohigas06,Bohigas09} on deformations of the Tracy-Widom distribution and their application to thinning processes. Also there interpolations between random matrix ensembles and classical statistical ensembles occur.

In this paper we present a new family of determinantal point processes that indeed interpolates between the Bessel-kernel
of a single Gaussian matrix $M=1$,
and the Meijer G-kernel
responsible for the hard edge scaling limit of the product of $M=2$ independent Gaussian complex matrices.
This new family arises from
the product of two coupled random matrices. The squared singular values of the product matrix of such matrices form a 
family of 
determinantal point process on $\R_{>0}$, as was shown in \cite{AkemannStrahov} for finite size $N$ of the product matrix.
Once we assume that the coupling parameter $\mu$ depends on $N$,
the asymptotic investigation of this point process
leads to several scaling regimes. In particular, if the coupling parameter
depends on $N$ as $\mu(N)=gN^{-1}$, then
the two matrices become strongly coupled.
The resulting hard edge scaling limit gives a new family of limiting determinantal processes parameterised by the parameter $g$, that is defined by the correlation kernel $\mathbb{S}(x,y;g)$. We show that
this family
is integrable in the sense of Its, Izergin, Korepin and Slavnov \cite{Its} and
can be considered as a deformation
of the Bessel-kernel point process, where
$g$ plays the role of the deformation parameter. This means that as $g\rightarrow 0$, the correlation kernel $\mathbb{S}(x,y;g)$ converges to the Bessel-kernel. Moreover, as $g\rightarrow\infty$ the correlation kernel $\mathbb{S}(x,y;g)$ (under a certain rescaling) converges to the corresponding Meijer G-kernel of the product of $M=2$ independent complex Gaussian matrices. Thus $\mathbb{S}(x,y;g)$
has the desired interpolation property.

The existence of such a limiting kernel
could have been expected for the following reason. When studying the complex eigenvalues instead of the singular values of two coupled matrices,
there exists the so-called weak non-Hermiticity limit \cite{FKS,FKS2}, where the two coupled matrices that are multiplied become almost complex conjugates of each other \cite{James}.
In the same sense as above, the limiting kernel of Osborn \cite{James} interpolates between real eigenvalues described by the Bessel-kernel, and the corresponding kernel of complex eigenvalues at strong non-Hermiticity (which can also be written in terms of Meijer G-functions).
In our case the singular values remain always real and positive for the entire family of kernels  $\mathbb{S}(x,y;g)$,
and the limit of strongly coupled matrices provides a new interpolating
process as well.

A further example should be mentioned here.
The family of kernels found by Borodin \cite{Borodin:1998}
originally given in terms of Wright's generalised Bessel functions also depends on a parameter $\theta\geq0$. It interpolates between the Bessel-kernel of the Laguerre ensemble
at $\theta=1$,
and the Meijer G-kernel
at $M=2$ with specific parameter values when choosing $\theta=1/2$ or
$\theta=2$ as it was shown by Kuijlaars and Stivigny \cite{ArnoDries}.
Initially the
Muttalib-Borodin ensemble \cite{Muttalib,Borodin:1998}
was introduced as an eigenvalue model, but
recently  upper triangular matrix representations have been constructed in \cite{Che} and \cite{FW}. Furthermore, in \cite{FW} an alternative double contour representation of Borodin's kernel was found for arbitrary real $\theta>0$. It extends
an alternative representation derived in \cite{LSZ} for 
$\theta=2$ 
where the Muttalib-Borodin ensemble reduces to a random matrix model of disordered bosons.
In the complex contour integral form
it is obvious that Borodin's kernel is
different from our limiting kernel given by a different double contour integral as well. Furthermore, they
lead to
different Meijer G-kernels at $M=2$.

Several open questions would be interesting to address. It is well known that the Fredholm determinant of the Bessel-kernel can be related to Painlev\'e V \cite{Tracy}. It is open if such
a relation can be established for the Meijer G-kernel,
see however
\cite{ES14} for progress in that directions. And
it is even less clear if
a parameter dependent relation to Painlev\'e equations exists for our interpolating kernel.
From the physical side it would be very interesting to see if our kernel appears in the singular value spectrum of Quantum Chromodynamics with iso-spin chemical potential which was studied in \cite{KW} on the level of an effective field theory.

The remainder of this article is organised as follows. In the following subsections we summarise the known results for the joint density (Subsection \ref{Ensembles}) and kernel (Subsection \ref{ExactFormulae}) of
the product of two coupled matrices \cite{AkemannStrahov}, two independent matrices \cite{AkemannIpsenKieburg,KuijlaarsZhang} and the well-known Laguerre ensemble of a single matrix, respectively. Our main results on the new limiting kernel (Subsection \ref{HardEdge})
which is integrable (Subsection \ref{integrable}) and interpolating between the latter two (Subsection \ref{Interpolating}) are given at the end of this section.
Before turning to the proofs, in Section \ref{Comparison} we illustrate our results by comparing with Borodin's kernel and by plotting the corresponding unfolded interpolating microscopic densities at different parameter values.
In Section \ref{Proof-Thm1.5} we present the proof for our new kernel,
in Section \ref{ProofInterpolatingKernel} the proof that it interpolates,
and in Section \ref{ProofInterpolatingDensity} the proofs for the corresponding density.
Appendix \ref{Heuristic} gives a heuristic argument for the limit from the interpolating kernel to the Bessel-kernel and Appendix \ref{AppB} provides further technical details.

\subsection{The ensembles}\label{Ensembles}
\subsubsection{Singular values for products of two coupled matrices}\label{SectionEnsemble1}
Let $A$, $B\in\Mat\left(\C,N\times M\right)$ be two independent matrices of size $N\times M$ with i.i.d.  standard normal complex Gaussian entries ${\mathcal N}(0,1/2)$. Set
\begin{equation}\label{X12def}
X_1=\frac{1}{\sqrt{2}}\left(A-i\sqrt{\mu}B\right),\;\; X_2=\frac{1}{\sqrt{2}}\left(A^{*}-i\sqrt{\mu}B^{*}\right),
\end{equation}
where $\mu\in(0,1)$ is a coupling parameter. We call $X_1$ and $X_2$ two coupled matrices (with the coupling parameter $\mu$). It was shown
by the authors in
\cite{AkemannStrahov} that the squared singular values of the product matrix $X_1X_2$ form a determinantal point process on $\R_{>0}$. Namely, the following
proposition holds true.
\begin{prop}\label{PropositionMainDensity}
Assume that $M\geq N$, and set
\begin{equation}\label{nu}
\nu=M-N.
\end{equation}
Then the joint probability density function for
the squared singular values $y_1$, $\ldots$, $y_N>0$ of the matrix $Y=X_1X_2$ is given by\footnote{Here and thoughout the article we use the notation $x^{\frac{1}{2}}=\sqrt{x}$ synonymously for positive real numbers $x>0$.}
\begin{equation}\label{TheMainJointProbabilityDensityFunction}
\begin{split}
&P(y_1,\ldots,y_N;\mu)=\frac{1}{Z_N}\det\left[y_i^{\frac{j-1}{2}}I_{j-1}\left(\frac{1-\mu}{\mu}{y_i}^{\frac{1}{2}}\right)\right]_{i,j=1}^N
\det\left[y_i^{\frac{j+\nu-1}{2}}K_{j+\nu-1}\left(\frac{1+\mu}{\mu}{y_i}^{\frac{1}{2}}\right)\right]_{i,j=1}^N,
\end{split}
\end{equation}
where
\begin{equation}\label{ZN}
Z_N=\frac{N!}{2^{N\nu+N^2}}\mu^N(1+\mu)^{N\nu}(1-\mu^2)^{\frac{N(N-1)}{2}}\prod\limits_{j=1}^N\Gamma(j)\Gamma(j+\nu).
\end{equation}
\end{prop}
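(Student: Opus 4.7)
The plan is to derive the joint density of squared singular values of $Y=X_1X_2$ directly from the Gaussian density on $(A,B)$ via a sequence of changes of variables, with the Bessel $I_{j-1}$ and $K_{j+\nu-1}$ functions emerging as the outputs of compact unitary and non-compact radial integrations respectively.

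First I would invert the linear change of variables in (\ref{X12def}) to get $A = (X_1+X_2^{*})/\sqrt{2}$ and $B = i(X_1-X_2^{*})/\sqrt{2\mu}$, and substitute into the Gaussian weight $\exp(-\Tr AA^{*}-\Tr BB^{*})$. The joint density of $(X_1,X_2)$ then becomes proportional to
\[
\exp\!\left(-\frac{1+\mu}{2\mu}\bigl(\Tr X_1X_1^{*}+\Tr X_2X_2^{*}\bigr)+\frac{1-\mu}{\mu}\,\re\Tr(X_1X_2)\right),
\]
with an explicit Jacobian in $\mu$. The two coefficients $(1\pm\mu)/\mu$ that are visible in the statement already appear at this step.

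Next I would perform the singular value decompositions $X_1=U_1\Sigma_1V_1^{*}$ and $X_2=U_2\Sigma_2V_2^{*}$, with $\Sigma_1,\Sigma_2$ rectangular diagonal. The two quadratic terms $\Tr X_iX_i^{*}$ depend only on singular values, so the entire dependence on the unitary parts is concentrated in the cross term $\re\Tr(X_1X_2)=\re\Tr(\Sigma_1V_1^{*}U_2\Sigma_2V_2^{*}U_1)$. Introducing the product $Y=X_1X_2$ via a delta function and doing the polar-like decomposition $Y=U\Lambda^{1/2}V^{*}$, the remaining task is to integrate over the unitary matrices at fixed spectrum $y_1,\dots,y_N$ of $YY^{*}$. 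The compact integration coming from $\re\Tr(X_1X_2)$, after a Berezin--Karpelevich / Harish-Chandra--Itzykson--Zuber--type reduction for rectangular factors, produces the Bessel $I$ functions through the canonical contour representation $I_\nu(x)=\tfrac{1}{2\pi i}\oint t^{-\nu-1}e^{(x/2)(t+1/t)}\,dt$; the remaining radial integration over the singular values of $X_1$ that are not pinned by the spectrum of $Y$ is a Gaussian-times-power integral which generates the $K$ functions via $K_\nu(x)=\tfrac{1}{2}\int_0^\infty t^{\nu-1}e^{-(x/2)(t+1/t)}\,dt$. The shift $\nu=M-N$ enters through the rectangular Jacobian of the SVD for $N\times M$ matrices, which explains why the second determinant carries index $j+\nu-1$ while the first carries $j-1$.

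The main obstacle is organizing the integrations so that the output assumes the biorthogonal form $\det[\phi_j(y_i)]\det[\psi_j(y_i)]$ with precisely the indices $j-1$ and $j+\nu-1$, and keeping careful track of every Jacobian and Selberg-type normalization so that the constant $Z_N$ in (\ref{ZN}) comes out exactly. A useful internal sanity check is to verify that the formula reduces, in the limits $\mu\to0$ and $\mu\to1$, respectively to the classical Laguerre density for $AA^{*}/2$ and to the Akemann--Ipsen--Kieburg density for the product of two independent complex Gaussian matrices.
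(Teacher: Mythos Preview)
The present paper does not actually prove this proposition; it is quoted from the authors' earlier work \cite{AkemannStrahov}, where the full derivation is carried out. Your outline is essentially the strategy used there. The change $(A,B)\to(X_1,X_2)$ produces exactly the weight you wrote, and the cross term $\tfrac{1-\mu}{\mu}\,\re\Tr(X_1X_2)=\tfrac{1-\mu}{\mu}\,\re\Tr Y$ depends only on $Y$. Integrating over the angular part of $Y$ via a Leutwyler--Smilga/Berezin--Karpelevich group integral then yields the factor $\det\bigl[y_i^{(j-1)/2}I_{j-1}\bigr]$; after passing to the pair $(Y,X_2)$ and integrating out the singular values of $X_2$ (not of $X_1$, as you wrote---once $Y$ and $X_2$ are fixed, $X_1$ is determined), the Laplace representation of $K_\nu$ produces $\det\bigl[y_i^{(j+\nu-1)/2}K_{j+\nu-1}\bigr]$. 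The consistency checks $\mu\to0,1$ that you propose are precisely those recorded in (\ref{FirstLimtingRelation}) and (\ref{SecondLimtingRelation}).

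So the approach is correct; the only genuine gap is the one you already identify. Since $X_2$ is $M\times N$ with $M>N$, it cannot be inverted when passing to $(Y,X_2)$: one must parametrise the extra $\nu=M-N$ degrees of freedom (e.g.\ via the rectangular SVD of $X_2$) and integrate them out separately. This is exactly what shifts the $K$-index to $j+\nu-1$ and injects the factors $(1+\mu)^{N\nu}$ and $\Gamma(j+\nu)$ into $Z_N$. Obtaining $Z_N$ in closed form additionally requires the Jacobian of the rectangular SVD and an Andr\'eief/Cauchy--Binet reduction. None of this goes beyond your plan conceptually, but it constitutes the bulk of the actual work in \cite{AkemannStrahov}.
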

In the proposition above $I_{\kappa}(z)$ denotes  the modified Bessel function of the first kind  defined by
\begin{equation}\label{BesselFunctionI}
I_{\kappa}(z)=
\sum\limits_{m=0}^{\infty}\frac{1}{m!\Gamma(\kappa+m+1)}
\left( \frac{z}{2}\right)^{2m+\kappa},
\end{equation}
valid for $z$ and $\kappa\in\C$. For $(z/2)^\kappa$ we choose the principal branch analytic on $\C\setminus(-\infty,0]$.
Above $K_{\kappa}(z)$ denotes the modified Bessel function of the second kind
and can be
defined by the integral
\begin{equation}
K_{\kappa}(z)=\frac{\Gamma\left(\kappa+\frac{1}{2}\right)(2z)^{\kappa}}{\sqrt{\pi}}\int\limits_0^{\infty}\frac{dt\,\cos (t)}{(t^2+z^2)^{\kappa^2+\frac{1}{2}}},
\ \ \re(\kappa)>-\frac12, \ \ |\arg(z)|<\frac{\pi}{2}\ .
\end{equation}
see, for example,
\cite{NIST}. Also here we choose the principal branch analytic on $\C\setminus(-\infty,0]$.

\subsubsection{Singular values for products of two independent complex Gaussian  matrices}\label{SectionComplexGaussianMatrices}
Let $X_1\in\Mat\left(\C,N\times M\right)$, and
$X_2\in\Mat\left(\C,M\times N\right)$
be two independent
matrices whose entries are i.i.d
standard normal complex Gaussian variables ${\mathcal N}(0,1/2)$. As before in Subsection \ref{SectionEnsemble1} assume that $M\geq N$, and
define $\nu$ by equation (\ref{nu}). It is known (see
\cite{AkemannIpsenKieburg}, formulae (18) and (21)) that the squared singular values $y_1,\ldots,y_N>0$ of the matrix $Y=X_1X_2$
have the joint density
\begin{equation}\label{TheTwoIndependentJointProbabilityDensityFunction}
P_{\Ind}(y_1,\ldots,y_N)=\frac{1}{Z_N^{\Ind}}\det\left[y_i^{j-1}\right]_{i,j=1}^N\det\left[2y^{\frac{j+\nu-1}{2}}K_{j+\nu-1}\left(2{y_i}^\frac{1}{2}\right)\right]_{i,j=1}^N,
\end{equation}
where $Z_N^{\Ind}=N!\prod_{j=1}^N\Gamma(j)^2\Gamma(j+\nu)$.  Equation (\ref{TheTwoIndependentJointProbabilityDensityFunction})
states that the squared singular values of the product matrix $Y$
form a determinantal point process.
Clearly, we have
\begin{equation}\label{FirstLimtingRelation}
\underset{\mu\rightarrow 1}{\lim}
P(y_1,\ldots,y_N;\mu)=P_{\Ind}(y_1,\ldots,y_N).
\end{equation}
This fact is obvious from the very definition of the ensembles.
It can also be seen directly from the explicit formula for the joint probability density function  $P(y_1,\ldots,y_N;\mu)$,
equation (\ref{TheMainJointProbabilityDensityFunction}), as shown
by the authors in \cite[Appendix A]{AkemannStrahov}.

Note that we are not considering the most general product of two independent matrices here. One could choose a different dimension for $X_2\in\Mat\left(\C,M\times N^\prime\right)$
with the joint distribution (\ref{TheTwoIndependentJointProbabilityDensityFunction}) then depending on two parameters $\nu_1=\nu=M-N$ and $\nu_2=N^\prime-N$, cf. \cite{AkemannIpsenKieburg}. Because we are interested in the interpolation between the ensembles of one and two random matrices we have to restrict the product matrix $Y=X_1X_2$ to be square, setting $\nu_2=0$ in \cite{AkemannIpsenKieburg}.

\subsubsection{The Laguerre ensemble}
The Laguerre ensemble relevant in the context of this paper is defined by the joint probability density function of the variables $y_1$, $\ldots$, $y_N>0$
\begin{equation}\label{TheLaguerreProbabilityDensityFunction}
P_{\Laguerre}(y_1,\ldots,y_N)
=\frac{2^{N(N+\nu-1)}}{N!\prod\limits_{j=1}^N\Gamma(j)\Gamma(j+\nu)}
\left(\det\left[y_i^{\frac{j-1}{2}}\right]_{i,j=1}^N\right)^2
\prod\limits_{i=1}^Ny_i^{\frac{\nu-1}{2}}\exp\left[-2y_i^{\frac{1}{2}}\right].
\end{equation}
Changing variables in equation (\ref{TheLaguerreProbabilityDensityFunction}),
\begin{equation}\label{change}
y_i\mapsto v_i=2y_i^{\frac{1}{2}},
\end{equation}
we map eq. (\ref{TheLaguerreProbabilityDensityFunction}) to the joint probability density function of the squared singular values $v_1,\ldots, v_N>0$ of a single random matrix $X_1\in\Mat\left(\C,N\times M\right)$ of size $N\times M$ with i.i.d. standard normal complex Gaussian entries ${\mathcal N}(0,1/2)$, the classical Laguerre ensemble. In these standard variables the joint density is given by
\begin{equation}\label{StandardLaguerrejpdf}
\frac{1}{N!\prod\limits_{j=1}^N\Gamma(j)\Gamma(j+\nu)}
\left(\det\left[v_i^{j-1}\right]_{i,j=1}^N\right)^2\prod\limits_{i=1}^Nv_i^\nu \e^{-v_i},
\end{equation}
see
\cite[Ch. 7]{ForresterLogGases}. To relate this ensemble to that defined in Subsection \ref{SectionEnsemble1} we use formula
(\ref{TheMainJointProbabilityDensityFunction}), and replace the modified Bessel functions inside the determinants by their large argument asymptotic expressions. A short calculation yields
\begin{equation}\label{SecondLimtingRelation}
\underset{\mu\rightarrow 0}{\lim}
P(y_1,\ldots,y_N;\mu)=P_{\Laguerre}(y_1,\ldots,y_N),
\end{equation}
see
\cite[Appendix A]{AkemannStrahov}
for details.

\subsection{Exact formulae for the correlation kernels}\label{ExactFormulae}
\subsubsection{The correlation kernel for the singular values of products of two coupled matrices}
Proposition \ref{PropositionMainDensity} implies that  the squared singular values $y_1$, $\ldots$, $y_N>0$ of
the product $Y=X_1X_2$ (where $X_1$ and $X_2$ are two coupled matrices) form a determinantal point process,
\begin{equation}\label{detPP}
P(y_1,\ldots,y_N;\mu)=
 \det\left[K_N(y_i,y_j;\mu)\right]_{i,j=1}^N\ .
\end{equation}
It was shown by the authors in
\cite{AkemannStrahov} that the correlation kernel of this process can be written as
\begin{equation}\label{K}
K_N(x,y;\mu)=\sum\limits_{n=0}^{N-1}P_n(x)Q_n(y),\ \ x,y>0,
\end{equation}
constituting a biorthogonal ensemble.
The functions $P_{n}(x)$ and  $Q_{n}(y)$ are defined
by
\begin{equation}\label{P1}
P_n(x)=(-1)^n\frac{(\nu+n)!n!}{\nu!}\left(\frac{1}{\mu}\right)^{\frac{1}{2}}
\sum\limits_{k=0}^n\left(\frac{2}{1-\mu}x^{\frac{1}{2}}\right)^k\frac{(-n)_k}{(\nu+1)_kk!}
I_k\left(\frac{1-\mu}{\mu}x^{\frac{1}{2}}\right),
\end{equation}
and
\begin{equation}\label{Q1}
Q_{n}(y)=(-1)^n\frac{2}{(n!)^2\nu!}\left(\frac{1}{\mu}\right)^{\frac{1}{2}}
\sum\limits_{l=0}^{n}\left(\frac{2}{1+\mu}y^{\frac{1}{2}}\right)^{l+\nu}
\frac{(-n)_l}{(\nu+1)_ll!}K_{l+\nu}\left(\frac{1+\mu}{\mu}y^{\frac{1}{2}}\right).
\end{equation}
Here $(a)_n=\Gamma(a+n)/\Gamma(a)=a(a+1)\cdots(a+n-1)$ denotes the Pochhammer symbol for $n\in\mathbb{N}$. In particular for negative integers $-a=k\in\mathbb{N}$ with
$k\geq n$ we have $(-k)_n=(-1)^nk!/(k-n)!$.
The starting point for the subsequent asymptotic analysis is
a Christoffel-Darboux
type formula for the correlation kernel $K_N(x,y;\mu)$.
\begin{prop}\label{TheoremChristoffelDarboux}
The Christoffel-Darboux type formula for the correlation kernel $K_{N}(x,y;\mu)$ is
for $N\geq 2$ and $x\neq y$ given by
\begin{equation}\label{CDKernel}
\begin{split}
K_{N}(x,y;\mu)=-\frac{a_{-2,N}P_{N-2}(x)Q_{N}(y)+a_{-2,N+1}P_{N-1}(x)Q_{N+1}(y)+a_{-1,N}P_{N-1}(x)Q_{N}(y)}{x-y}\\
\ \ +\frac{a_{1,N-1}P_{N}(x)Q_{N-1}(y)+a_{2,N-2}P_{N}(x)Q_{N-2}(y)+a_{2,N-1}P_{N+1}(x)Q_{N-1}(y)}{x-y},
\end{split}
\end{equation}
where the coefficients $a_{-2,N}$, $a_{-1,N}$, $a_{1,N}$ and $a_{2,N}$
read
\begin{equation}\label{xPn6}
\begin{split}
a_{2,N}&=\frac{(1-\mu)^2}{4(N+2)(N+1)},\\
a_{1,N}&=\mu+\frac{(1-\mu)^2(2N+\nu+2)}{2(N+1)},\\
a_{-1,N}&=\mu N^2(N+\nu)(3N+\nu)+\frac{(1-\mu)^2}{2}N^2(\nu+2N)(\nu+N),\\
a_{-2,N}&=\mu N^2(N-1)^2(N+\nu)(N+\nu-1)
+\frac{(1-\mu)^2}{4}(\nu+N)(\nu+N-1)N^2(N-1)^2.
\end{split}
\end{equation}
\end{prop}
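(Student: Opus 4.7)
The plan is to deduce the Christoffel--Darboux identity by telescoping after substituting a five-term recurrence for multiplication by $x$ into the biorthogonal sum $K_N(x,y;\mu) = \sum_{n=0}^{N-1} P_n(x) Q_n(y)$. The biorthogonality $\int_0^\infty P_n(x) Q_m(x)\,dx = \delta_{nm}$ established in \cite{AkemannStrahov} is the key structural input: it forces the recurrence for $yQ_n$ to be the index-transpose of the one for $xP_n$, so that when we subtract the ``$Q$-side'' sum from the ``$P$-side'' sum, every off-diagonal band cancels apart from boundary terms at the top of the summation window.

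The central technical step is to establish a recurrence of the form
\begin{equation*}
xP_n(x) \;=\; a_{2,n}\,P_{n+2}(x) + a_{1,n}\,P_{n+1}(x) + a_{-1,n}\,P_{n-1}(x) + a_{-2,n}\,P_{n-2}(x),
\end{equation*}
with the coefficients of \eqref{xPn6}. Working directly from the series \eqref{P1}, one sets $t = x^{1/2}$ and $c = (1-\mu)/\mu$; the summand contains $t^k I_k(ct)$ weighted by the Pochhammer factors $(-n)_k/((\nu+1)_k\,k!)$. Multiplying the series by $t^2$ and invoking the Bessel contiguous relations
\[
\frac{2k}{z}\,I_k(z) = I_{k-1}(z) - I_{k+1}(z), \qquad 2\,I_k'(z) = I_{k-1}(z) + I_{k+1}(z),
\]
converts $t^2P_n(x)$ into a new series in $\{I_j(ct)\}$, which a hypergeometric-type rearrangement matches term by term against a linear combination of $P_{n-2},\ldots,P_{n+2}$. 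A direct, if laborious, computation shows that the $P_n$ coefficient vanishes and that the remaining four coefficients coincide with \eqref{xPn6}. Transposing through biorthogonality then yields the dual recurrence
\begin{equation*}
yQ_n(y) \;=\; a_{-2,n+2}\,Q_{n+2}(y) + a_{-1,n+1}\,Q_{n+1}(y) + a_{1,n-1}\,Q_{n-1}(y) + a_{2,n-2}\,Q_{n-2}(y).
\end{equation*}

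With both recurrences in hand, I would compute
\begin{equation*}
(x-y)K_N(x,y;\mu) \;=\; \sum_{n=0}^{N-1}\bigl(xP_n(x)\bigr)Q_n(y) \;-\; \sum_{n=0}^{N-1}P_n(x)\bigl(yQ_n(y)\bigr),
\end{equation*}
substitute both recurrences, and reindex each band by $n \mapsto n \pm 1$ or $n \mapsto n \pm 2$. The bulk of the two sums cancels pairwise: the $a_{\pm 2}$-bands each leave two surviving boundary terms at the upper endpoint and the $a_{\pm 1}$-bands each leave one, reassembling exactly into the six terms of \eqref{CDKernel} with the correct signs and shifts.

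The main obstacle is the explicit derivation of the five-term recurrence and the identification of the coefficients \eqref{xPn6}. The Bessel and Pochhammer manipulations needed to show that the recurrence closes in only four (rather than five) off-diagonal bands are delicate, and pinning down the precise values in \eqref{xPn6} requires careful bookkeeping. Once this algebraic identity is settled, the Christoffel--Darboux formula follows as a formal consequence of telescoping and biorthogonality.
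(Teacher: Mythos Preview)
Your approach is correct and matches the paper's: the proof here is simply a citation to \cite[Thm.~3.6]{AkemannStrahov}, and that result is obtained exactly as you describe, from the five-term recurrence for $xP_n$ (stated as \cite[Prop.~3.5]{AkemannStrahov}) together with its biorthogonal transpose for $yQ_n$, followed by telescoping.

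One minor inaccuracy worth flagging: the diagonal coefficient $a_{0,n}$ in the recurrence $xP_n=\sum_j a_{j,n}P_{n+j}$ does \emph{not} vanish. This is already visible in the Laguerre limit $\mu\to 0$ of Appendix~\ref{Heuristic}: there $P_n(x)\propto L_n^{(\nu)}(2x^{1/2})\cdot(\text{weight})$, so $xP_n$ involves $u^2 L_n^{(\nu)}(u)$ with $u=2x^{1/2}$, and applying the three-term Laguerre recurrence twice produces a nonzero $L_n^{(\nu)}$ coefficient $\tfrac14\bigl[(2n+\nu+1)^2+(n+1)(n+\nu+1)+n(n+\nu)\bigr]$. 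This is harmless for the Christoffel--Darboux identity, since the $a_{0,n}P_n(x)Q_n(y)$ contributions from the two sides of $(x-y)K_N$ cancel identically; that is why the paper lists only $a_{\pm 1,N}$ and $a_{\pm 2,N}$ in \eqref{xPn6}. Your telescoping argument goes through unchanged once you allow the diagonal term and note that it drops out.
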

\begin{proof}
See \cite[Thm. 3.6]{AkemannStrahov}.
\end{proof}
Next we will need the following contour integral representations for $P_n(x)$ and $Q_n(y)$, and for the correlation kernel $K_{N}(x,y;\mu)$.
\begin{prop}\label{PropositionPQContourIntegralRepresentation}
We have
for $x,y>0$
\begin{equation}
\label{Pn}
P_n(x)=\frac{\Gamma(\nu+n+1)\Gamma^2(n+1)}{\mu^{\frac{1}{2}}2\pi i}\oint\limits_{\Sigma_n}dt
\frac{\Gamma(t-n)\left(\frac{2}{1-\mu}x^{\frac{1}{2}}\right)^t}{\Gamma(\nu+1+t)\Gamma(t+1)}I_t\left(\frac{1-\mu}{\mu}x^{\frac{1}{2}}\right),
\end{equation}
and
\begin{equation}
\label{Qn}
Q_n(y)=\frac{2}{\mu^{\frac{1}{2}}\Gamma(n+1)2\pi i}\oint\limits_{\Sigma_n}ds
\frac{\Gamma(s-n)\left(\frac{2}{1+\mu}y^{\frac{1}{2}}\right)^{s+\nu}}{\Gamma(\nu+1+s)\Gamma(s+1)}K_{s+\nu}\left(\frac{1+\mu}{\mu}y^{\frac{1}{2}}\right).
\end{equation}
In the formulae just written above $\Sigma_n$ is a closed contour that
encircles $0,1,\ldots, n$ once in
counterclockwise
direction,
such that $\re(s+\nu)>-1/2$.
\end{prop}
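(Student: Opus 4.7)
The plan is to prove both contour integral formulas simply by residue calculus: for each, we identify the poles of the integrand inside $\Sigma_n$, compute the residues, and check that the resulting finite sum coincides with the defining formulas~(\ref{P1}) and (\ref{Q1}).

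For the formula~(\ref{Pn}) for $P_n(x)$, the prefactor aside, I would focus on
\[
\frac{1}{2\pi i}\oint_{\Sigma_n}\Gamma(t-n)\,g(t)\,dt, \qquad g(t):=\frac{\bigl(\tfrac{2}{1-\mu}x^{1/2}\bigr)^t}{\Gamma(\nu+1+t)\Gamma(t+1)}\,I_t\!\Bigl(\tfrac{1-\mu}{\mu}x^{1/2}\Bigr).
\]
The series \eqref{BesselFunctionI} shows that $I_t(z)$ is entire in the order parameter $t$ for fixed $z>0$ (with the principal branch of $(z/2)^t$), and the factors $1/\Gamma(\nu+1+t)$, $1/\Gamma(t+1)$ are entire, so $g$ is analytic on a neighborhood of $\Sigma_n$ and its interior. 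Only the simple poles of $\Gamma(t-n)$ at $t=k\in\{0,1,\dots,n\}$ lie inside $\Sigma_n$, with residues
\[
\mathop{\mathrm{Res}}_{t=k}\Gamma(t-n)=\frac{(-1)^{n-k}}{(n-k)!}.
\]
Using $\Gamma(k+1)=k!$ and $\Gamma(\nu+1+k)=\nu!\,(\nu+1)_k$ and applying the residue theorem, the integral reduces to the sum
\[
\sum_{k=0}^n\frac{(-1)^{n-k}}{(n-k)!\,\nu!\,(\nu+1)_k\,k!}\left(\frac{2x^{1/2}}{1-\mu}\right)^{\!k}I_k\!\Bigl(\tfrac{1-\mu}{\mu}x^{1/2}\Bigr).
\]
Finally I would rewrite $(-1)^{n-k}/(n-k)! = (-1)^n(-n)_k/n!$, multiply by $\Gamma(\nu+n+1)\Gamma^2(n+1)/\mu^{1/2}=(\nu+n)!(n!)^2/\mu^{1/2}$, and recognize exactly~\eqref{P1}.

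The argument for~(\ref{Qn}) is parallel, with $g(s)$ now containing the factor $K_{s+\nu}\bigl(\tfrac{1+\mu}{\mu}y^{1/2}\bigr)$. The residue computation is identical, yielding a sum that I would match against~\eqref{Q1} after inserting $\Gamma(\nu+1+k)=\nu!(\nu+1)_k$ and the prefactor $2/(\mu^{1/2}\Gamma(n+1))$.

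The only subtlety — and the closest thing to an obstacle — is to justify that $s\mapsto K_{s+\nu}(z)$ is analytic in a neighborhood of $\{0,1,\dots,n\}$, so that the residue theorem applies. The integral representation for $K_\kappa(z)$ given in the excerpt is valid for $\re(\kappa)>-\tfrac12$, which is precisely the restriction imposed on the contour via $\re(s+\nu)>-1/2$; for such $s$ the representation is manifestly analytic in $s$ under the integral sign (by dominated convergence and Morera). With this analyticity in hand, the residue computation goes through verbatim, and the restriction on $\nu$ and $\Sigma_n$ stated in the proposition is exactly what guarantees that nothing beyond the poles of $\Gamma(s-n)$ is enclosed.
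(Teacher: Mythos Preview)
Your proof is correct and follows essentially the same approach as the paper: apply the Residue Theorem using $\Res_{t=k}\,\Gamma(t-n)=(-1)^{n-k}/(n-k)!$ at the integer points $k=0,1,\dots,n$ enclosed by $\Sigma_n$, and match the resulting finite sums to the defining formulas~(\ref{P1}) and~(\ref{Q1}). Your additional care in verifying analyticity of $I_t$ and $K_{s+\nu}$ in the order parameter --- and in particular linking the constraint $\re(s+\nu)>-1/2$ to the validity of the integral representation for $K_\kappa$ --- is a welcome elaboration of what the paper leaves implicit.
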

\begin{proof}
Use the Residue Theorem, and the fact that $\underset{z=k}{\Res}\;\Gamma(z-n)=\frac{(-1)^{n-k}}{(n-k)!}$.
The representation (\ref{Pn}) can also be inferred from
\cite[Prop. 3.7]{AkemannStrahov},
using
$I_\nu(z)=\frac{1}{\Gamma(\nu+1)}\left(\frac{z}{2}\right)^\nu {}_0F_1\left(-;\nu+1;\left(\frac{z}{2}\right)^2\right)$.
\end{proof}
Eq. (\ref{Qn}) is an alternative  to the contour integral representation in  \cite[Prop. 3.7]{AkemannStrahov} containing  Gauss' hypergeometric function.

\begin{prop}\label{PropositionContourRepresentationCorrelationKernel} The correlation kernel $K_N(x,y;\mu)$ admits the following representation
\begin{equation}\label{MainExactKernel}
\begin{split}
K_N(x,y;\mu)=&\frac{1}{(2\pi i)^2\mu(x-y)}
\oint\limits_{\Sigma_N}dt\oint\limits_{\Sigma_N}ds
\frac{\Gamma(-t)\left(\frac{2}{1-\mu}x^{\frac{1}{2}}\right)^t\Gamma(-s)\left(\frac{2}{1+\mu}y^{\frac{1}{2}}\right)^{s+\nu}}{\Gamma(t+\nu+1)\Gamma(s+\nu+1)}
\\
&\times I_t\left(\frac{1-\mu}{\mu}x^{\frac{1}{2}}\right)K_{s+\nu}\left(\frac{1+\mu}{\mu}y^{\frac{1}{2}}\right)
A_N(s,t;\mu)\frac{\Gamma(\nu+N+1)\Gamma(N+1)}{\Gamma(1-t+N)\Gamma(1-s+N)},
\end{split}
\end{equation}
where $\Sigma_N$ is a closed contour that encircles $0,1,\ldots, N$ once in
counterclockwise
direction
with $\re(s+\nu)>-1/2$, and
\begin{equation}\label{MainPolynomial}
\begin{split}
A_N(s,t;\mu)=&-\frac{(1+\mu)^2}{4}(t-N)(t-N+1)
-\frac{(1+\mu)^2}{4}(\nu+N+1)(N+1)\frac{(t-N)}{(s-N-1)}\\
&-\mu(3N+\nu)(t-N)-\frac{(1-\mu)^2}{2}(2N+\nu)(t-N)\\
&+\mu N(s-N)+\frac{(1-\mu)^2}{2}(2N+\nu)(s-N)\\
&+\frac{(1-\mu)^2}{4}(s-N)(s-N+1)
+\frac{(1-\mu)^2}{4}(N+1)(\nu+N+1)\frac{(s-N)}{(t-N-1)}.
\end{split}
\end{equation}
In particular we have at equal arguments
\begin{equation}\label{MainExactKernelyy}
\begin{split}
K_N(y,y;\mu)=&\frac{1}{(2\pi i)^2\mu^2}
\oint\limits_{\Sigma_N}dt\oint\limits_{\Sigma_N}ds
\frac{\Gamma(-t)\left(\frac{2}{1-\mu}y^{\frac{1}{2}}\right)^{t-1}\Gamma(-s)\left(\frac{2}{1+\mu}y^{\frac{1}{2}}\right)^{s+\nu}}{\Gamma(t+\nu+1)\Gamma(s+\nu+1)}
\\
&\times I_{t-1}\left(\frac{1-\mu}{\mu}y^{\frac{1}{2}}\right)K_{s+\nu}\left(\frac{1+\mu}{\mu}y^{\frac{1}{2}}\right)
A_N(s,t;\mu)\frac{\Gamma(\nu+N+1)\Gamma(N+1)}{\Gamma(1-t+N)\Gamma(1-s+N)}.
\end{split}
\end{equation}
\end{prop}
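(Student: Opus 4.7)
The strategy is to substitute the single contour integral representations (\ref{Pn}) and (\ref{Qn}) for $P_n$ and $Q_n$ into the Christoffel--Darboux identity (\ref{CDKernel}), and then to collapse the six resulting double contour integrals into a single one with a common contour $\Sigma_N$, the combinatorics of the six terms being absorbed into the polynomial $A_N(s,t;\mu)$. The six pairs of indices $(m,n)$ for which a term $P_m(x)Q_n(y)$ appears on the right-hand side of (\ref{CDKernel}) are
\[
(m,n)\in\{(N-2,N),\,(N-1,N+1),\,(N-1,N),\,(N,N-1),\,(N,N-2),\,(N+1,N-1)\}.
\]

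First I would deform the $t$- and $s$-contours $\Sigma_m$ appearing in the representations of each of the six terms to the common contour $\Sigma_N$. For $m\leq N$ this is immediate, because $\Gamma(t-m)/\Gamma(t+1)$ has poles only at $t=0,1,\ldots,m$, so enlarging $\Sigma_m$ to $\Sigma_N$ introduces no new residues. For the two terms involving $P_{N+1}$ or $Q_{N+1}$ one has to account explicitly for the residue that lies at $t=N+1$ (respectively $s=N+1$); this accounting is exactly what is encoded by the rational factors $(t-N)/(s-N-1)$ and $(s-N)/(t-N-1)$ that appear in $A_N(s,t;\mu)$ and that have no polynomial counterpart.

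Second I would normalise the prefactors of the six terms to the common form displayed in (\ref{MainExactKernel}). The Gamma ratios $\Gamma(\nu+N+i+1)/\Gamma(\nu+N+1)$ and $\Gamma(N+i+1)/\Gamma(N+1)$ collapse to polynomials in $N$; the ratios $\Gamma(t-m)\Gamma(N+1)/[\Gamma(t+1)\Gamma(1-t+N)]$ and their $s$-analogues collapse to polynomials in $t-N$ (respectively $s-N$), except in the $P_{N+1},Q_{N+1}$ pieces where they produce the $(t-N-1)^{-1},(s-N-1)^{-1}$ factors mentioned above. Multiplying each contribution by its Christoffel--Darboux coefficient $a_{\pm 1,N}$ or $a_{\pm 2,N}$ from (\ref{xPn6}) and summing, I would verify by elementary but tedious algebra that the six pieces collapse precisely to the polynomial/rational function $A_N(s,t;\mu)$ of (\ref{MainPolynomial}); the six summands of $A_N$ then correspond transparently to the six terms of (\ref{CDKernel}).

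Finally, the equal-argument formula (\ref{MainExactKernelyy}) would follow from (\ref{MainExactKernel}) by taking $x\to y$ via L'H\^opital's rule: the numerator of the integrand of (\ref{MainExactKernel}), viewed as a function of $x$ for fixed $y$, vanishes at $x=y$ (since $K_N(y,y;\mu)$ is finite), so $K_N(y,y;\mu)$ equals the $x$-derivative of the integrand at $x=y$. The $x$-dependence sits entirely in $(\tfrac{2}{1-\mu}x^{1/2})^t I_t(\tfrac{1-\mu}{\mu}x^{1/2})$, and the classical identity $I_\nu'(z)=I_{\nu-1}(z)-(\nu/z)I_\nu(z)$ yields the differentiation rule $\tfrac{d}{dx}[(\tfrac{2}{1-\mu}x^{1/2})^t I_t(\tfrac{1-\mu}{\mu}x^{1/2})]=\tfrac{1}{\mu}(\tfrac{2}{1-\mu}x^{1/2})^{t-1}I_{t-1}(\tfrac{1-\mu}{\mu}x^{1/2})$, which accounts simultaneously for the index shift $t\mapsto t-1$ of the Bessel function, for the shift $t\mapsto t-1$ of the accompanying power of $x^{1/2}$, and for the additional factor $\mu^{-1}$ in the prefactor of (\ref{MainExactKernelyy}).

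The main obstacle will be the bookkeeping in the second step: one must correctly attribute the six pieces of $A_N(s,t;\mu)$ to the six Christoffel--Darboux contributions, keeping precise track of the rational (as opposed to polynomial) pieces that arise from the enlarged contour needed for $P_{N+1}$ and $Q_{N+1}$, and of the various cancellations between the $a_{\pm 1,N},a_{\pm 2,N}$ coefficients and the Gamma ratios.
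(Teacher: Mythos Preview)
Your proposal is correct and follows exactly the route the paper takes: substitute the contour integral representations (\ref{Pn}) and (\ref{Qn}) into the Christoffel--Darboux formula (\ref{CDKernel}), normalise the six terms to the common Gamma ratio $\Gamma(\nu+N+1)\Gamma(N+1)/[\Gamma(1-t+N)\Gamma(1-s+N)]$ and the common contour $\Sigma_N$, and identify the result with $A_N(s,t;\mu)$; then obtain the equal-argument formula by Taylor expanding (equivalently, L'H\^opital) in $x$ at $x=y$, using precisely the differentiation rule $\tfrac{d}{dx}\bigl[(\tfrac{2}{1-\mu}x^{1/2})^t I_t(\tfrac{1-\mu}{\mu}x^{1/2})\bigr]=\tfrac{1}{\mu}(\tfrac{2}{1-\mu}x^{1/2})^{t-1}I_{t-1}(\tfrac{1-\mu}{\mu}x^{1/2})$ that you wrote down. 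The paper's own proof is terse and states only these two ingredients, so your plan is a fleshed-out version of it rather than a different argument.
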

\begin{proof}
Use the Christoffel-Darboux type formula for the correlation kernel $K_N(x,y;\mu)$ (see Proposition \ref{TheoremChristoffelDarboux}), and
the contour integral representation for the functions $P_n(x)$ and $Q_n(y)$ (see Proposition \ref{PropositionPQContourIntegralRepresentation}) to obtain eq. (\ref{MainExactKernel}). Clearly due to eq. (\ref{K}) the kernel is regular at equal arguments $x=y$. Hence the integral in eq.  (\ref{MainExactKernel}) that follows from eq. (\ref{CDKernel}) at equal arguments must vanish at equal arguments to compensate the pole at $x=y$ in front of the integral,
\begin{equation}\label{Integrand}
\begin{split}
0=&\frac{1}{(2\pi i)^2\mu}
\oint\limits_{\Sigma_N}dt\oint\limits_{\Sigma_N}ds
\frac{\Gamma(-t)\left(\frac{2}{1-\mu}y^{\frac{1}{2}}\right)^t\Gamma(-s)\left(\frac{2}{1+\mu}y^{\frac{1}{2}}\right)^{s+\nu}}{\Gamma(t+\nu+1)\Gamma(s+\nu+1)}
\\
&\times I_t\left(\frac{1-\mu}{\mu}y^{\frac{1}{2}}\right)K_{s+\nu}\left(\frac{1+\mu}{\mu}y^{\frac{1}{2}}\right)
A_N(s,t;\mu)\frac{\Gamma(\nu+N+1)\Gamma(N+1)}{\Gamma(1-t+N)\Gamma(1-s+N)}.
\end{split}
\end{equation}
A simple Taylor expansion of the function $\left(\frac{2}{1-\mu}x^{\frac{1}{2}}\right)^tI_t\left(\frac{1-\mu}{\mu}x^{\frac{1}{2}}\right)$ at $x=y$ inside the integral (\ref{MainExactKernel}) in the limit $x\to y$ leads to the desired eq.  (\ref{MainExactKernelyy}). Here we have simply taken the limit under the integral. This can be justified by Taylor expanding $P_n(x)$ in eq. (\ref{CDKernel}) around $x=y$, using the finite sum in eq. (\ref{P1}) for the expansion, and then applying a contour integral representation for $P^\prime_n(x)$ in analogy to eq. (\ref{Pn}).
\end{proof}
\begin{rem}
In \cite{AkemannStrahov} two alternative double contour representations for the kernel (\ref{MainExactKernel}) were derived. They are based on a different contour integral representation of the function $Q_n(y)$, inserted into the Christoffel-Darboux type formula (\ref{CDKernel}).
A further resummation leads to a nested sum of double contour integrals in \cite[Thm. 3.8]{AkemannStrahov}, which is however not used in the asymptotic analysis.
\end{rem}
\subsubsection{The correlation kernel for the singular values of products of two independent matrices}
Denote by $K_N^{\Ind}(x,y)$ the correlation kernel of the determinantal point process on $\R_{>0}$ formed by the squared singular values of $Y=X_1X_2$,
where $X_1$ and $X_2$ are two independent complex Gaussian matrices defined in Subsection \ref{SectionComplexGaussianMatrices}. The correlation kernel
$K_N^{\Ind}(x,y)$ can be represented in different ways.
In particular,  it was shown
\cite[Prop. 5.1]{KuijlaarsZhang},
including more general products of independent random matrices,
that $K_N^{\Ind}(x,y)$ admits a double contour integral representation.
Namely,
\begin{equation}\label{KNInd}
K_N^{\Ind}(x,y)=\frac{1}{(2\pi i)^2}\int\limits_{-\frac{1}{2}-i\infty}^{-\frac{1}{2}+i\infty}ds\oint\limits_{\Sigma_N}dt\frac{\Gamma^2(s+1)}{\Gamma^2(t+1)}\frac{\Gamma(s+\nu+1)}{\Gamma(t+\nu+1)}
\frac{\Gamma(t-N+1)}{\Gamma(s-N+1)}\frac{x^ty^{-s-1}}{(s-t)},
\end{equation}
where the integration contour $\Sigma_N$ is defined in the same way as in Proposition \ref{PropositionContourRepresentationCorrelationKernel}.

Using equation (\ref{KNInd}), it is not
difficult to find the hard edge scaling limit of $K_N^{\Ind}(x,y)$. Namely, the following limiting relation holds true
uniformly for $x$, $y$ in compact subsets of the positive real axis
(see
\cite[Thm. 5.3]{KuijlaarsZhang})
\begin{equation}
\underset{N\rightarrow\infty}\lim\left(\frac{1}{N}K_N^{\Ind}\left(\frac{x}{N},\frac{y}{N}\right)\right)=S_{\nu,0}^{\Ind}(x,y),
\end{equation}
where the limiting kernel $S_{\nu_1,\nu_2}^{\Ind}(x,y)$ for the most general product of two independent rectangular Gaussian matrices is defined by the formula
\begin{equation}\label{SIndependent}
S_{\nu_1,\nu_2}^{\Ind}(x,y)=\frac{1}{(2\pi i)^2}\int\limits_{-\frac{1}{2}-i\infty}^{-\frac{1}{2}+i\infty}ds\int\limits_{\Sigma}dt\frac{\Gamma(s+1)\Gamma(s+\nu_1+1)\Gamma(s+\nu_2+1)}{\Gamma(t+1)\Gamma(t+\nu_1+1)\Gamma(t+\nu_2+1)}
\frac{\sin(\pi s)}{\sin(\pi t)}\frac{x^ty^{-s-1}}{(s-t)}.
\end{equation}
The contour $\Sigma$ starts at $+\infty$ in the upper half plane, encircles the positive real axis keeping $\re(t)>-\frac12$ to avoid the second contour, and returns to  $+\infty$ in the lower
half plane. We will need the form with general index pair $\nu_1,\nu_2$ later.
As it is shown by Kuijlaars and Zhang \cite[Thm. 5.3]{KuijlaarsZhang}), this limiting kernel can be also written as
\begin{equation}\label{SIndependentDefiniteIntegral}
S_{\nu_1,\nu_2}^{\Ind}(x,y)=\int\limits_0^1 du\ G^{1,0}_{0,3}\left(\begin{array}{ccc}
                       & - &  \\
                      0, & -\nu_1, & -\nu_2
                    \end{array}
\biggr|ux\right)G^{2,0}_{0,3}\left(\begin{array}{ccc}
                       & - &  \\
                      \nu_1,&\nu_2, & 0
                    \end{array}
\biggr|uy\right).
\end{equation}
Here
$G^{n,0}_{0,q}\left(\begin{array}{ccc}
                       & - &  \\
                      a, & b, & c
                    \end{array}
\biggr| z\right)$
is a Meijer G-function  with corresponding parameters, see e.g.
\cite{GradshteinRyzhik} for its definition.
\subsubsection{The correlation kernel for the Laguerre ensemble}
It is well known (see, for example,
\cite{ForresterLogGases}) that the correlation kernel for the Laguerre ensemble can be written in terms of the Laguerre polynomials.
Namely, denote by $K_N^{\Laguerre}(x,y)$ the correlation kernel for the ensemble defined by equation (\ref{TheLaguerreProbabilityDensityFunction}).
Using standard methods of Random Matrix Theory we find after the change of variables (\ref{change})
\begin{equation}\label{KLaguerre}
\begin{split}
K_N^{\Laguerre}(x,y)=&\sum\limits_{n=0}^{N-1}\frac{\Gamma(n+1)}{\Gamma(n+1+\nu)}L_n^{(\nu)}\left(2x^{\frac{1}{2}}\right)L_n^{(\nu)}\left(2y^{\frac{1}{2}}\right)
\left[\left(2x^{\frac{1}{2}}\right)^{\frac{\nu}{2}}\frac{\e^{-x^{\frac{1}{2}}}}{x^{\frac{1}{4}}}\right]
\left[\left(2y^{\frac{1}{2}}\right)^{\frac{\nu}{2}}\frac{\e^{-y^{\frac{1}{2}}}}{y^{\frac{1}{4}}}\right]\\
&\times\left(\frac{y^{\frac{1}{2}}}{x^{\frac{1}{2}}}\right)^{\frac{\nu}{2}}\frac{\e^{\frac{x^{\frac{1}{2}}}{\mu}}}{\e^{\frac{y^{\frac{1}{2}}}{\mu}}}.
\end{split}
\end{equation}
Here the factors in the second line take the form $h(y)/h(x)$. They have been added to the standard kernel of the Laguerre ensemble, without any effect for the following reason.
The density correlation functions of singular values are given by the determinant of this kernel and are thus the same as for the kernel without these factors, as they cancel out.
The two determinantal point processes are thus equivalent.

Applying the Christoffel-Darboux identity for the Laguerre polynomials we can rewrite the sum in the formula (\ref{KLaguerre}) for
$K_N^{\Laguerre}(x,y)$ in  a closed form. This gives
\begin{equation}\label{Kmu0}
\begin{split}
K_N^{\Laguerre}(x,y)=&-\frac{\Gamma(N+1)}{\Gamma(N+\nu)}\frac{L_N^{(\nu)}\left(2x^{\frac{1}{2}}\right)L_{N-1}^{(\nu)}\left(2y^{\frac{1}{2}}\right)
-L_N^{(\nu)}\left(2y^{\frac{1}{2}}\right)L_{N-1}^{(\nu)}\left(2x^{\frac{1}{2}}\right)}{2x^{\frac{1}{2}}-2y^{\frac{1}{2}}}\\
&\times\left[\left(2x^{\frac{1}{2}}\right)^{\frac{\nu}{2}}\frac{\e^{-x^{\frac{1}{2}}}}{x^{\frac{1}{4}}}\right]
\left[\left(2y^{\frac{1}{2}}\right)^{\frac{\nu}{2}}\frac{\e^{-y^{\frac{1}{2}}}}{y^{\frac{1}{4}}}\right]
\left(\frac{y^{\frac{1}{2}}}{x^{\frac{1}{2}}}\right)^{\frac{\nu}{2}}\frac{\e^{\frac{x^{\frac{1}{2}}}{\mu}}}{\e^{\frac{y^{\frac{1}{2}}}{\mu}}}.
\end{split}
\end{equation}
The hard edge scaling limit of the kernel $K_N^{\Laguerre}(x,y)$ leading to the Bessel-kernel can be obtained by standard methods, using
well-known
asymptotic formulae for the
uniform convergence of the rescaled
Laguerre polynomials
near the orign, see
\cite[Sec. 7.2.1]{ForresterLogGases}. We have
\begin{equation}\label{BesselLim}
\underset{N\rightarrow\infty}\lim\left(\frac{1}{N^2}K_N^{\Laguerre}\left(\frac{x^2}{4N^2},\frac{y^2}{4N^2}\right)\right)=S^{\sqrt{\Bessel}}(x,y),
\end{equation}
where
\begin{equation}\label{BesselKernel}
S^{\sqrt{\Bessel}}(x,y)= -
\frac{2x^{\frac{1}{2}}J_{\nu-1}\left(2x^{\frac{1}{2}}\right)J_{\nu}\left(2y^{\frac{1}{2}}\right)
-2y^{\frac{1}{2}}J_{\nu-1}\left(2y^{\frac{1}{2}}\right)
J_{\nu}\left(2x^{\frac{1}{2}}\right)}{(x-y)x^{\frac{1}{2}}y^{\frac{1}{2}}}\left(\frac{y}{x}\right)^{\frac{\nu}{2}}.
\end{equation}
We have introduced a superscript $({\scriptsize \sqrt{\Bessel}})$ to indicate that this representation
is obtained after the change of variables (\ref{change}).
It differs from the standard Bessel-kernel \cite{ForresterLogGases} by a factor of $4/(xy)^\frac12$. This is due to the square root of the Jacobian for the limit (\ref{BesselLim}) in terms of squared variables.
Due to identities among Bessel functions (see eq. (\ref{BesselId})) an alternative and equivalent form of the Bessel-kernel exists, replacing $J_{\nu-1}$ by $-J_{\nu+1}$. We emphasise that this is the limiting kernel for the singular values of a single random
matrix.
\subsection{The hard edge scaling limit of $K_N(x,y;\mu)$}\label{HardEdge}
\ \\

In what follows we drop the assumption that $\mu$ is constant, and consider the situation when
this parameter is a function of $N$ taking values in the interval $\mu\in(0,1)$. For simplicity, set
$$
\mu(N)=gN^{-\kappa},
$$
where
$g$ is some fixed positive constant, and $\kappa\geq0$.
Depending on the values of $\kappa$ we expect to obtain different hard edge scaling limits of our correlation kernel
$K_N(x,y;\mu(N))$.

Now we  formulate the  main result of the present paper.
\begin{thm}\label{TheoremTransitionToBesselKernel}
Assume that $\mu(N)=gN^{-\kappa}$, where $g>0$ is a positive constant. Assume that $x$,$y$ take values in a compact subset of $(0,+\infty)$.
Then we have\\
\textbf{(a)} For
$\kappa>1$
\begin{equation}
\begin{split}
&\underset{N\rightarrow\infty}{\lim}
\left[\frac{1}{N^2}K_N\left(\frac{x^2}{4N^2},\frac{y^2}{4N^2};
\mu=\frac{g}{N^{\kappa}}
\right) \e^{-\frac{x}{2N\mu(N)}+\frac{y}{2N\mu(N)}}\right]
=S^{\sqrt{\Bessel}}(x,y).
\end{split}
\nonumber
\end{equation}

\noindent
\textbf{(b)} For $\kappa=1$
\begin{equation}
\begin{split}
&\underset{N\rightarrow\infty}{\lim}
\left[\frac{1}{N^2}K_N\left(\frac{x^2}{4N^2},\frac{y^2}{4N^2};\mu=\frac{g}{N}
\right)\right]=\mathbb{S}(x,y;g),
\end{split}
\nonumber
\end{equation}
where the kernel $\mathbb{S}(x,y;g)$ is defined by
\begin{equation}\label{Skernel}
\begin{split}
\mathbb{S}(x,y;g)=&\frac{4}{(x^2-y^2)g}
\frac{1}{(2\pi i)^2}\oint\limits_{\Sigma}dt\oint\limits_{\Sigma}ds
\frac{\Gamma(-t)\Gamma(-s)x^ty^{s+\nu}}{\Gamma(t+\nu+1)\Gamma(s+\nu+1)}\\
&\times\left[\mathcal{P}(s,t,\nu)-g(s^2+t^2+\nu s-st)\right]I_t\left(\frac{x}{2g}\right)K_{s+\nu}\left(\frac{y}{2g}\right),
\end{split}
\end{equation}
and where we have introduced the following polynomial
\begin{equation}\label{MainPolynomial2}
\mathcal{P}(s,t,\nu)=\frac{1}{4}(t-s)\biggl(
t^2+s^2+(t+s)(\nu-1)-\nu\biggr).
\end{equation}
The contour $\Sigma$ starts at $+\infty$ in the upper half plane, encircles
the positive real axis, and returns to $+\infty$ in the lower half plane.
We call $\mathbb{S}(x,y;g)$ the kernel in the strongly correlated limit.\\

\noindent
\textbf{(c)}
For $1>\kappa\geq0$
\begin{equation}
\begin{split}
&\underset{N\rightarrow\infty}{\lim}
\left[\frac{\mu}{N}K_N\left(\frac{\mu x}{N},\frac{\mu y}{N};\mu=gN^{-\kappa}
\right)\right]=S_{\nu,0}^{\Ind}(x,y).
\end{split}
\nonumber
\end{equation}
\end{thm}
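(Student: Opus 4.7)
My plan is to work directly from the exact double contour integral representation of $K_N(x,y;\mu)$ given by Proposition \ref{PropositionContourRepresentationCorrelationKernel}, eq.~\eqref{MainExactKernel}. In each of the three regimes I substitute the prescribed rescaling of $x,y$ together with $\mu=gN^{-\kappa}$, control the ratio $\Gamma(N+\nu+1)\Gamma(N+1)/[\Gamma(N+1-t)\Gamma(N+1-s)]$ by Stirling as $N^{s+t+\nu}(1+O(1/N))$ uniformly in $s,t$ on a fixed contour, and pass to the limit under the integral. The three cases differ only in the asymptotic behaviour of the Bessel arguments $\frac{(1\mp\mu)x}{2N\mu}$ and $\frac{(1\mp\mu)y}{2N\mu}$: these scale as $N^{\kappa-1}\to\infty$ in (a), freeze at the finite values $x/(2g)$ and $y/(2g)$ in (b), and vanish like $N^{(\kappa-1)/2}\to 0$ in (c) after the alternative rescaling $\mu x/N,\mu y/N$. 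Deformation of the $N$-dependent contour $\Sigma_N$ to an $N$-independent contour $\Sigma$ is justified at each step by the decay produced by $\Gamma(-t)/\Gamma(t+\nu+1)$ and its counterpart in $s$.

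Case (b) is the main novelty and deserves the bulk of the work. After the substitutions, the prefactors of the form $(2/(1\pm\mu))^t$ and $(2/(1\pm\mu))^{s+\nu}$ together with $(x^2/(4N^2))^{t/2}$ and $(y^2/(4N^2))^{(s+\nu)/2}$ provide $N^{-t}$ and $N^{-(s+\nu)}$, which combine with the Stirling factor $N^{s+t+\nu}$ to leave the clean expression $x^t y^{s+\nu}$. The prefactor $1/[\mu(x-y)]$ with $\mu=g/N$ contributes $4N/[g(x^2-y^2)]$ outside the integral, which must combine with $A_N(s,t;g/N)$ to produce a finite limit. Expanded in descending powers of $N$, the polynomial $A_N(s,t;g/N)$ is of leading order $N^2$; however, its $O(N^2)$ and $O(N)$ pieces are exactly those responsible for the vanishing identity \eqref{Integrand} at equal arguments. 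Taylor expanding the rescaled integrand in $x-y$ about $x=y$ and using \eqref{Integrand} eliminates these top orders, so only the $O(1/N)$ remainder of $A_N$ together with the sub-leading Stirling correction of the Gamma ratio survives. A finite explicit calculation then matches the surviving piece with $\mathcal{P}(s,t,\nu)-g(s^2+t^2+\nu s-st)$, yielding the kernel \eqref{Skernel}. The hardest step is precisely this bookkeeping: combining \eqref{Integrand} with a Taylor expansion in $x-y$ and with sub-leading Stirling corrections so that the leading $O(N^2)$ and $O(N)$ contributions cancel and only the claimed $O(1)$ combination remains.

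Cases (a) and (c) reduce to already known hard edge limits. In (a) the Bessel arguments diverge, so I insert the standard large-argument asymptotics $I_t(z)\sim e^z/\sqrt{2\pi z}$ and $K_{s+\nu}(z)\sim\sqrt{\pi/(2z)}\,e^{-z}$, valid uniformly in $s,t$ on compacta bounded away from the cuts. The exponential prefactor $\exp[(y-x)/(2N\mu)]$ in the statement precisely cancels the residual $e^{(x-y)/(2N\mu)}$ generated by these asymptotics, after which the integrand collapses to the one responsible for the hard edge Bessel kernel \eqref{BesselKernel} via the Laguerre computation \eqref{KLaguerre}--\eqref{BesselLim}. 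In (c) the rescaling $\mu x/N$ drives the Bessel arguments to $0$, and the small-argument expansions $I_t(z)\sim(z/2)^t/\Gamma(t+1)$ and the corresponding expansion of $K_{s+\nu}$ produce power factors that combine with the Stirling ratio to reproduce exactly the integrand of \eqref{KNInd}; the limit $S^{\Ind}_{\nu,0}(x,y)$ then follows from the Kuijlaars--Zhang result of \cite{KuijlaarsZhang}. Throughout all three cases, dominated convergence must be justified by uniform decay estimates on $\Gamma(-s)\Gamma(-t)/[\Gamma(s+\nu+1)\Gamma(t+\nu+1)]$ along $\Sigma$ and by boundedness of the Bessel factors; these estimates are standard but need to be stated with care.
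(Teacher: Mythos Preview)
Your overall framework---start from the double contour integral \eqref{MainExactKernel}, feed in the Stirling asymptotics $\Gamma(N+\nu+1)\Gamma(N+1)/[\Gamma(N+1-t)\Gamma(N+1-s)]=N^{s+t+\nu}(1+O(1/N))$, and treat the three regimes according to whether the Bessel arguments diverge, freeze, or vanish---matches the paper's approach. The scalings you identify in (a), (b), (c) are all correct, and for (a) and (c) your sketch is essentially fine.

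The gap is in case (b), in your account of why the prefactor $N/g$ from $1/\mu$ is compensated. You correctly observe that the individual terms in $A_N(s,t;\mu)$ of \eqref{MainPolynomial} are of order $N^2$, and that a finite limit requires $A_N=O(1/N)$. But the mechanism you propose---Taylor expanding in $x-y$ and invoking the equal-argument identity \eqref{Integrand}---is both unnecessary and incorrect as stated. The identity \eqref{Integrand} says the \emph{full} integral vanishes at $x=y$ for every $N$; it gives no information about the separate $O(N^2)$, $O(N)$, $O(1)$ pieces of $A_N$ at $x\neq y$, and Taylor expanding in $x-y$ does not reduce the $N$-order of the derivative integral. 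What actually happens (Proposition~\ref{PropositionAsymptoticsofA} in the paper) is a purely \emph{algebraic} cancellation inside $A_N$ itself: writing $A_N(s,t;\mu)=(1+\mu^2)\alpha_N(s,t)+\mu\beta_N(s,t)$ and expanding each piece directly, the $N^2$, $N^1$, and $N^0$ contributions cancel term by term, leaving
\[
A_N\bigl(s,t;gN^{-1}\bigr)=\frac{1}{N}\Bigl[\mathcal{P}(s,t,\nu)-g(s^2+t^2+\nu s-st)\Bigr]+O\bigl(N^{-2}\bigr).
\]
This single $1/N$ absorbs the $N/g$ from $1/\mu$ and produces \eqref{Skernel} immediately. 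No Taylor expansion in $x-y$, no appeal to \eqref{Integrand}, and no sub-leading Stirling corrections enter; the leading Stirling term $N^{s+t+\nu}$ suffices. So the ``hardest step'' you flag is in fact a short direct computation, and your proposed route through \eqref{Integrand} would not close the argument.
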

\begin{rem}
1) In what follows we will show that Theorem \ref{TheoremTransitionToBesselKernel} (a) holds true in two different ways. First, we will use a contour integral representation for the correlation
kernel suitable for the asymptotic analysis as $N\rightarrow\infty$. This method is especially important for us because it opens
the possibility to investigate the different asymptotic regimes including Theorem \ref{TheoremTransitionToBesselKernel} (b).
The second method uses heuristic arguments and is collected in Appendix \ref{Heuristic}.
There we will initially assume that $N$ is fixed. Then we will observe that as $\mu$ goes to $0$
our kernel $K_N(x,y;\mu)$  turns to that of a Laguerre-type ensemble, whose large $N$ asymptotics is well known. The fact that we get the same answer shows that the limits $\mu\to0$ and $N\to\infty$ commute in this regime.\\
2) Note that the above polynomial in eq. (\ref{MainPolynomial2}) has an equivalent form to be used later, as can be easily seen:
\begin{equation}\label{MainPolynomial3}
\begin{split}
\mathcal{P}(s,t,\nu)=&-\frac{1}{4}\biggl(s(s+\nu)(s+\nu-1)-t(t+\nu)(t+\nu-1)\\
&-\nu s(s+\nu)+\nu t(t+\nu)+t(t+\nu)s-s(s+\nu)t\biggr).
\end{split}
\end{equation}
3) Theorem \ref{TheoremTransitionToBesselKernel} (c) was proved in our previous work
for $\kappa=0$ only, see \cite[Thm. 3.9]{AkemannStrahov}.\\
\end{rem}
\subsection{Integrable form of the kernel in the strongly correlated limit}
\label{integrable}
\ \\

Recall that a correlation kernel $K(x,y)$ is called integrable (in the sense of Its, Izergin, Korepin and Slavnov \cite{Its}) if it can be represented
as
\begin{equation}\label{IntegrableForm}
K(x,y)=\frac{\sum\limits_{i=1}^kF_i(x)G_i(y)}{x-y},\;\;\;\mbox{where}\;\;\sum\limits_{i=1}^kF_i(x)G_i(x)=0.
\end{equation}
Integrable kernels lead to the theory of integrable Fredholm operators, which has many applications in different areas of mathematics and mathematical physics,
see
\cite{Its1},
\cite{ItsHarnad},
\cite{DeiftIntegrableOperators},
\cite{DIZ}
and
references therein. We argue that the new limiting kernel $\mathbb{S}(x,y;g)$ can be represented in an integrable form (\ref{IntegrableForm}),
with $k=4$.
Indeed, in formula (\ref{Skernel})
when using the representation (\ref{MainPolynomial3})
the integral over $s$ can be rewritten in terms of four functions
$F_1(y;g)$, $F_2(y;g)$, $F_3(y;g)$ and $F_4(y;g)$ defined
for $y>0$ by
\begin{equation}\label{FunctionF1}
F_1(y;g)=\frac{1}{2\pi i}\oint\limits_{\Sigma}ds\frac{\Gamma(-s)}{\Gamma(s+\nu+1)}y^{s+\nu}K_{s+\nu}\left(\frac{y}{2g}\right),
\end{equation}
\begin{equation}
F_2(y;g)=\frac{1}{2\pi i}\oint\limits_{\Sigma}ds\frac{s\Gamma(-s)}{\Gamma(s+\nu+1)}y^{s+\nu}K_{s+\nu}\left(\frac{y}{2g}\right),
\end{equation}
\begin{equation}
F_3(y;g)=\frac{1}{2\pi i}\oint\limits_{\Sigma}ds\frac{s(s+\nu)\Gamma(-s)}{\Gamma(s+\nu+1)}y^{s+\nu}K_{s+\nu}\left(\frac{y}{2g}\right),
\end{equation}
and
\begin{equation}\label{FunctionF4}
F_4(y;g)=\frac{1}{2\pi i}\oint\limits_{\Sigma}ds\frac{s(s+\nu)(s+\nu-1)\Gamma(-s)}{\Gamma(s+\nu+1)}y^{s+\nu}K_{s+\nu}\left(\frac{y}{2g}\right).
\end{equation}
On the other hand, the integral over $t$ in formula (\ref{Skernel}) 
together with eq. (\ref{MainPolynomial3}) 
can be written in terms of four functions $\Phi_1(x;g)$, $\Phi_2(x;g)$,
$\Phi_3(x;g)$, and $\Phi_4(x;g)$ defined
for $x>0$
by
\begin{equation}\label{Phi1}
\Phi_1(x;g)=\frac{1}{2\pi i}\oint\limits_{\Sigma}dt\frac{\Gamma(-t)}{\Gamma(t+\nu+1)}x^tI_t\left(\frac{x}{2g}\right),
\end{equation}
\begin{equation}
\Phi_2(x;g)=\frac{1}{2\pi i}\oint\limits_{\Sigma}dt\frac{t\Gamma(-t)}{\Gamma(t+\nu+1)}x^tI_t\left(\frac{x}{2g}\right),
\end{equation}
\begin{equation}
\Phi_3(x;g)=\frac{1}{2\pi i}\oint\limits_{\Sigma}dt\frac{t(t+\nu)\Gamma(-t)}{\Gamma(t+\nu+1)}x^tI_t\left(\frac{x}{2g}\right),
\end{equation}
and
\begin{equation}\label{Phi4}
\Phi_4(x;g)=\frac{1}{2\pi i}\oint\limits_{\Sigma}dt\frac{t(t+\nu)(t+\nu-1)\Gamma(-t)}{\Gamma(t+\nu+1)}x^tI_t\left(\frac{x}{2g}\right).
\end{equation}
With this notation,  formula (\ref{Skernel}) for the correlation kernel  $\mathbb{S}(x,y;g)$ can be written as
\begin{equation}\label{SIntegrableForm}
\begin{split}
\mathbb{S}(x,y;g)=&-\frac{1}{g(x^2-y^2)}\biggl[\Phi_1(x;g)F_4(y;g)-\Phi_4(x;g)F_1(y;g)\\
&-\nu\Phi_1(x;g)F_3(y;g)+\nu\Phi_3(x;g)F_1(y;g) + \Phi_3(x;g)F_2(y;g)-\Phi_2(x;g)F_3(y;g)\biggr]\\
&-\frac{4}{x^2-y^2}\biggl[\Phi_1(x;g)F_3(y;g)+\Phi_3(x;g)F_1(y;g)\\
&-\nu\Phi_2(x;g)F_1(y;g)-\Phi_2(x;g)F_2(y;g)\biggr].
\end{split}
\end{equation}
After
a change of variables, the new kernel  $\mathbb{S}(x,y;g)$ takes  an integrable form (\ref{IntegrableForm}),
with $k=4$. It is not obvious to directly link eq. (\ref{SIntegrableForm}) to the limit of the Christoffel-Darboux type formular (\ref{CDKernel}).

\subsection{The interpolating process}\label{Interpolating}
\ \\

Consider the determinantal point process on $\R_{>0}$ defined by the kernel $\mathbb{S}(x,y;g)$ of Theorem \ref{TheoremTransitionToBesselKernel} (b). We will show
that this process interpolates between the Bessel-kernel process, i.e. the determinantal process that has kernel $S^{\sqrt{\Bessel}}(x,y)$,
and the Meijer
G-kernel process, i.e. the determinantal point process that has kernel
$S_{\nu,0}^{\Ind}(x,y)$ defined by equation (\ref{SIndependent}).
This is stated in the next theorem.
\begin{thm}\label{TheoremTransition} Let the kernel
$\mathbb{S}(x,y;g)$
in the strongly correlated limit be defined by equation (\ref{Skernel}). Then for $x$, $y$
chosen from compact subsets
of $(0,+\infty)$ we have
\begin{equation}\label{TransitionKuijlaarsZhang} \underset{g\rightarrow+\infty}{\lim}\left[2g\mathbb{S}\left(2(gx)^{\frac{1}{2}},2(gy)^{\frac{1}{2}};g\right)\right]=S_{\nu,0}^{\Ind}(x,y).
\end{equation}
Furthermore,
\begin{equation}\label{deformation}
\underset{g\rightarrow 0}{\lim}\mathbb{S}\left(x,y;g\right)=S^{\sqrt{\Bessel}}(x,y).
\end{equation}
\end{thm}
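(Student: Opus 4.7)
For (\ref{TransitionKuijlaarsZhang}) I will substitute $x\mapsto 2(gx)^{1/2}$, $y\mapsto 2(gy)^{1/2}$ in the double contour representation (\ref{Skernel}). The Bessel arguments shrink to $\sqrt{x/g}$ and $\sqrt{y/g}$, so I use the series (\ref{BesselFunctionI}) to obtain $(2\sqrt{gx})^{t}I_{t}(\sqrt{x/g})\to x^{t}/\Gamma(t+1)$ at leading order, and the connection formula $K_{\alpha}(z)=\frac{\pi}{2\sin(\pi\alpha)}(I_{-\alpha}(z)-I_{\alpha}(z))$ to split $(2\sqrt{gy})^{s+\nu}K_{s+\nu}(\sqrt{y/g})$ into a $(4g)^{s+\nu}/\Gamma(1-s-\nu)$-piece and a $\tfrac{1}{2}\Gamma(-s-\nu)\,y^{s+\nu}$-piece. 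The prefactor $4/((x^{2}-y^{2})g)$ becomes $1/(g^{2}(x-y))$ after the substitution, so multiplying by $2g$ and using that $-g(s^{2}+t^{2}+\nu s-st)$ dominates $\mathcal{P}$ as $g\to\infty$ yields a finite outer factor $-2/(x-y)$ against the $y^{s+\nu}$-piece. The crucial step is showing that the $(4g)^{s+\nu}$-piece integrates to zero: after the reflection $\pi/(\sin(\pi(s+\nu))\Gamma(1-s-\nu))=\Gamma(s+\nu)$ the rewritten integrand is analytic to the left of $\re(s+\nu)=0$, so deforming $\Sigma$ leftward collapses its contribution. For the surviving piece I then apply $\Gamma(-t)=-\pi/(\sin(\pi t)\Gamma(t+1))$ and the change of variable $s\mapsto -s-\nu-1$, which converts the closed contour $\Sigma$ into the vertical line $\re s=-1/2$ of (\ref{SIndependent}); the quadratic $(s^{2}+t^{2}+\nu s-st)$ then reassembles with $x^{t}y^{-s-1}$ into the Cauchy denominator $1/(s-t)$, matching $S^{\text{Ind}}_{\nu,0}(x,y)$ term by term.

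For (\ref{deformation}) I plan to use the integrable decomposition (\ref{SIntegrableForm}) and evaluate each $\Phi_{i}(x;g)$ and $F_{j}(y;g)$ in the large-argument regime of $I_{t}$ and $K_{s+\nu}$, summing residues at the nonnegative integer poles of $\Gamma(-t)$, respectively $\Gamma(-s)$. The polynomial weights $a_{i}(t)\in\{1,t,t(t+\nu),t(t+\nu)(t+\nu-1)\}$ act as index shifts that lift these residue sums into Bessel $J$-functions at $2\sqrt{x}$: explicitly $\Phi_{1}\sim -J_{\nu}(2\sqrt{x})$, $\Phi_{2}\sim \sqrt{x}\,J_{\nu+1}(2\sqrt{x})$, $\Phi_{3}\sim x\,J_{\nu}(2\sqrt{x})$, $\Phi_{4}\sim x^{3/2}\,J_{\nu-1}(2\sqrt{x})$, up to a common $g$-dependent prefactor, and analogously for $F_{j}$ at $2\sqrt{y}$. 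Inserting these into the first bracket of (\ref{SIntegrableForm}) and invoking the Bessel recurrence $\sqrt{x}\,J_{\nu+1}(2\sqrt{x})=\nu J_{\nu}(2\sqrt{x})-\sqrt{x}\,J_{\nu-1}(2\sqrt{x})$ collapses the six terms to $(x+y)\bigl[y^{1/2}J_{\nu}(2\sqrt{x})J_{\nu+1}(2\sqrt{y})-x^{1/2}J_{\nu+1}(2\sqrt{x})J_{\nu}(2\sqrt{y})\bigr]$. The factor $(x+y)$ cancels one factor of $x^{2}-y^{2}=(x-y)(x+y)$, leaving exactly the Wronskian structure of the Bessel-kernel (\ref{BesselKernel}) in the $J_{\nu-1}\mapsto -J_{\nu+1}$ form noted there.

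The principal obstacle is the $g\to 0$ analysis, because the leading Debye prefactors of $I_{t}(x/(2g))$ and $K_{s+\nu}(y/(2g))$ produce a common $t,s$-independent factor $g\,\e^{(x-y)/(2g)}/\sqrt{xy}$, so a naive substitution of the leading asymptotics yields $\mathbb{S}(x,y;g)\sim\tfrac{1}{2}\,\e^{(x-y)/(2g)}\,S^{\sqrt{\Bessel}}(x,y)$, carrying both a spurious exponential and an erroneous factor of two. Removing these artifacts requires working with a steepest-descent deformation of $\Sigma$ so that the Debye asymptotic is used only in its region of uniform validity (away from the transition zone $|t|\sim\sqrt{x/g}$), and the second bracket of (\ref{SIntegrableForm}) provides the compensating $\tfrac{1}{2}$ at subleading order in $g$. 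A complementary consistency check comes from the Appendix~\ref{Heuristic} exchange-of-limits argument, which uses (\ref{SecondLimtingRelation}) and (\ref{BesselLim}) to arrive at $S^{\sqrt{\Bessel}}$ through the Laguerre ensemble, bypassing the delicate Bessel asymptotics on the unbounded contour $\Sigma$ altogether.
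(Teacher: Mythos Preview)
Your approaches to the two limits are essentially reversed relative to the paper's.

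For $g\to\infty$ the paper works through the integrable decomposition (\ref{SIntegrableForm}): it derives Mellin--Barnes representations for the $F_j$ in terms of confluent hypergeometric functions (Proposition~\ref{Prop4.1}), then uses the large-negative-argument asymptotics $\Phi(a;b;-4g)\sim\Gamma(b)\Gamma(b-a)^{-1}(4g)^{-a}$ to obtain explicit asymptotics for all $\Phi_i,F_j$ (Proposition~\ref{PropositionFAsymptotics}). Under the rescaling the first bracket in (\ref{SIntegrableForm}) is $O(g^{-1})$ relative to the second, and the second yields the double integral (\ref{ScalingLimitasDoubleIntegral}) with polynomial $t(t+\nu)+s(s+\nu)+ts$; the identification with $S^{\Ind}_{\nu,0}$ then quotes \cite[Sec.~10]{AkemannStrahov}. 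Your direct route via the connection formula for $K_{s+\nu}$ has a concrete gap: after the reflection the $(4g)^{s+\nu}$--piece carries the factor $\Gamma(-s)\Gamma(s+\nu)/\Gamma(s+\nu+1)=\Gamma(-s)/(s+\nu)$, which has a pole at $s=-\nu$, so ``deforming $\Sigma$ leftward'' does not collapse the contribution but picks up a residue. Your final step, that the quadratic $s^2+t^2+\nu s-st$ ``reassembles into the Cauchy denominator $1/(s-t)$'', is not an algebraic identity and would itself require the nontrivial rewriting carried out in \cite[Sec.~10]{AkemannStrahov}.

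For $g\to 0$ the paper does \emph{not} use (\ref{SIntegrableForm}); it substitutes the leading large-argument asymptotics (\ref{KIasymptotic}) directly into (\ref{Skernel}), notes that $I_t(x/2g)K_{s+\nu}(y/2g)\sim (g/\sqrt{xy})\,\e^{(x-y)/(2g)}$ is independent of $s,t$, and identifies the remaining double integral with (\ref{Besselstep}). You are right that this leaves a prefactor $\e^{(x-y)/(2g)}$, and the paper does not comment on it; the resolution is that it is a gauge factor $h(x)/h(y)$, invisible to the determinantal point process. However your claimed extra factor $\tfrac12$ is a miscalculation: the $g$ from the Bessel product cancels the $1/g$ in (\ref{Skernel}), and the remaining $4/((x^2-y^2)\sqrt{xy})$ is exactly the prefactor in (\ref{Besselstep}). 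Your proposed fix --- a steepest-descent deformation of $\Sigma$ with the second bracket of (\ref{SIntegrableForm}) supplying a compensating $\tfrac12$ --- is therefore chasing a phantom, and the Appendix~\ref{Heuristic} route you invoke as a consistency check is explicitly heuristic (it exchanges $\mu\to 0$ and $N\to\infty$) and cannot stand in for a proof of (\ref{deformation}).
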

Equation (\ref{deformation}) suggests to interpret the determinantal point process defined by the kernel  $\mathbb{S}(x,y;g)$ as a deformation of the Bessel-kernel process.
In such an interpretation $g$ plays a role of a deformation parameter.

One of the characteristics of the Bessel-kernel process is the one-point function $S^{\sqrt{\Bessel}}(x,x)$, also called microscopic density.
It is well know that $S^{\sqrt{\Bessel}}(x,x)$ can be represented in terms of Bessel functions, after applying l'H{\^o}pital's rule to the Bessel-kernel (\ref{BesselKernel}) at equal arguments, see, for example,
\cite[Sec. 7.2.1] {ForresterLogGases}.
We have
\begin{equation}\label{SBesselXX} S^{\sqrt{\Bessel}}(x,x)=\frac{1}{x}\left[\left(J_{\nu}\left(2x^{\frac{1}{2}}\right)\right)^2 -J_{\nu+1}\left(2x^{\frac{1}{2}}\right)J_{\nu-1}\left(2x^{\frac{1}{2}}\right)\right].
\end{equation}
Below we also give an explicit formula for the one-point function  $\mathbb{S}\left(x,x;g\right)$
characterising the density of particles of the deformed determinantal point process.
\begin{prop}\label{PropositionDensity1}
We have
\begin{equation}\label{SXX}
\begin{split}
\mathbb{S}(x,x;g)=&
\frac{1}{g^2x(2\pi i)^2}\oint\limits_{\Sigma}dt\oint\limits_{\Sigma}ds
\frac{\Gamma(-t)\Gamma(-s)x^{t+s+\nu}}{\Gamma(t+\nu+1)\Gamma(s+\nu+1)}\\
&\times\left[\mathcal{P}(s,t,\nu)-g(s^2+t^2+\nu s-st)\right]I_{t-1}\left(\frac{x}{2g}\right)K_{s+\nu}\left(\frac{x}{2g}\right),
\end{split}
\end{equation}
where $\mathcal{P}(s,t,\nu)$ is defined by equation (\ref{MainPolynomial2}), and $\Sigma$ is defined in the same way as in the statement of Theorem \ref{TheoremTransitionToBesselKernel} (b).
\end{prop}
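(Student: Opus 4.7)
The plan is to mimic the derivation of the equal-argument formula (\ref{MainExactKernelyy}) from (\ref{MainExactKernel}) carried out in the proof of Proposition \ref{PropositionContourRepresentationCorrelationKernel}. I would rewrite (\ref{Skernel}) as
\[
\mathbb{S}(x,y;g)=\frac{4}{g(x-y)(x+y)}\,J(x,y),
\]
where $J(x,y)$ denotes the double contour integral without the rational prefactor, and first argue that $J(x,x)=0$. This follows immediately from Theorem \ref{TheoremTransitionToBesselKernel}(b): the uniform limit $\mathbb{S}(x,y;g)$ inherits regularity at $y=x$ from the finite-$N$ kernels $\tfrac{1}{N^{2}}K_{N}(\cdot,\cdot;g/N)$, each of which is smooth on $(0,\infty)^{2}$ by the biorthogonal sum representation (\ref{K}); hence the apparent simple pole of the rational prefactor must be cancelled by a zero of $J$ at $y=x$.

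With $J(x,x)=0$ in hand, I would apply l'H\^opital's rule by differentiating the numerator in the first argument and dividing by $\partial_{x}(x^{2}-y^{2})|_{y=x}=2x$, yielding
\[
\mathbb{S}(x,x;g)=\frac{2}{xg}\,\partial_{x}J(x,y)\Big|_{y=x}.
\]
The only $x$-dependent factor in the integrand of $J(x,y)$ is $x^{t}I_{t}(x/(2g))$, and the standard modified Bessel identity $\tfrac{d}{dz}[z^{\kappa}I_{\kappa}(z)]=z^{\kappa}I_{\kappa-1}(z)$ combined with the chain rule $z=x/(2g)$ gives
\[
\partial_{x}\!\left[x^{t}I_{t}\!\left(\frac{x}{2g}\right)\right]=\frac{x^{t}}{2g}\,I_{t-1}\!\left(\frac{x}{2g}\right).
\]
Substituting this back, setting $y=x$ so that $y^{s+\nu}K_{s+\nu}(y/(2g))\to x^{s+\nu}K_{s+\nu}(x/(2g))$, and combining the new factor $1/(2g)$ with the external $2/(xg)$ produces the stated prefactor $1/(g^{2}x)$ together with the replacement $I_{t}\mapsto I_{t-1}$ in (\ref{SXX}). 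The polynomial $\mathcal{P}(s,t,\nu)-g(s^{2}+t^{2}+\nu s-st)$ is untouched because it depends neither on $x$ nor on $y$.

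The main obstacle is the justification of the interchange of $\partial_{x}$ with the two contour integrations, equivalently the commutation of the limit $y\to x$ with the $s$- and $t$-integrals along the unbounded contour $\Sigma$. This is the same technical point encountered in the proof of Proposition \ref{PropositionContourRepresentationCorrelationKernel}, and I would handle it in the same spirit: perform the Taylor expansion in $x$ at the level of the integrand and invoke dominated convergence, using the exponential decay of $I_{t}(x/(2g))$ and $K_{s+\nu}(x/(2g))$ along the legs of $\Sigma$ going to $+\infty$ together with the polynomial growth bounds on $\mathcal{P}(s,t,\nu)-g(s^{2}+t^{2}+\nu s-st)$ and the Gamma-function ratios. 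An alternative route, if pointwise control of the tails proves delicate, is to first verify $J(x,x)=0$ by a direct residue computation (deforming both contours to encircle only the non-negative integer poles of $\Gamma(-s)\Gamma(-t)$ and checking the resulting double sum cancels because of the structure of $\mathcal{P}$), after which the l'H\^opital step is reduced to an absolutely convergent one-variable contour integral.
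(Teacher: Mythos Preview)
Your proposal is correct and follows essentially the same approach as the paper: the paper likewise argues that the double contour integral vanishes at $x=y$ by regularity of the limiting kernel, applies l'H\^opital's rule, and uses the same Bessel identity $\partial_{x}\bigl[x^{t}I_{t}(x/(2g))\bigr]=\tfrac{x^{t}}{2g}I_{t-1}(x/(2g))$ to obtain (\ref{SXX}). The paper is terser on the interchange of limit and integral, noting as an alternative that (\ref{SXX}) can be obtained directly as the $N\to\infty$ limit of the finite-$N$ equal-argument formula (\ref{MainExactKernelyy}), which is the analogue of your remark that the argument mirrors the proof of Proposition~\ref{PropositionContourRepresentationCorrelationKernel}.
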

Also, we will demonstrate independently that that the one-point function $\mathbb{S}(x,x;g)$ converges to $S^{\sqrt{\Bessel}}(x,x)$ as $g\rightarrow0$.
This is seen in the next proposition.
\begin{prop}\label{PropositionDensity2}
For fixed $x>0$ we have
\begin{equation}
\underset{g\rightarrow 0}{\lim}\ \mathbb{S}\left(x,x;g\right)
=
\frac{1}{(2\pi i)^2}\oint\limits_{\Sigma}dt\oint\limits_{\Sigma}ds
\frac{2t\Gamma(-t)\Gamma(-s)x^{t+s+\nu-3}}{\Gamma(t+\nu+1)\Gamma(s+\nu+1)}\mathcal{P}(s,t,\nu)=S^{\sqrt{\Bessel}}(x,x).
\end{equation}
\end{prop}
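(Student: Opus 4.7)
The strategy is to substitute the classical large-argument asymptotic expansion of the Bessel function product $I_{t-1}(z)K_{s+\nu}(z)$ into the exact integral formula \eqref{SXX}, then extract the finite limit as $g\to 0$ after showing that the apparent divergence cancels by antisymmetry, and finally identify the surviving double contour integral with the microscopic Bessel density.

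Setting $z=x/(2g)$, the Bessel product admits the uniform asymptotic expansion
\[
I_{t-1}(z)K_{s+\nu}(z)\;\sim\;\frac{1}{2z}\left[1+\frac{(s+\nu)^2-(t-1)^2}{2z}+O(z^{-2})\right]
\]
for large $z$, valid on a suitable deformation of $\Sigma$. Inserting this into \eqref{SXX} yields an expansion of $\mathbb{S}(x,x;g)$ in powers of $g$ starting at order $g^{-1}$. The $g^{-1}$ coefficient is proportional to
\[
\frac{\Gamma(-t)\Gamma(-s)x^{t+s+\nu-2}}{\Gamma(t+\nu+1)\Gamma(s+\nu+1)}\,\mathcal{P}(s,t,\nu).
\]
The $\Gamma$-ratio together with $x^{t+s+\nu-2}$ is manifestly symmetric in $(s,t)$, while $\mathcal{P}(s,t,\nu)$ contains the antisymmetric factor $(t-s)$ (see \eqref{MainPolynomial2}). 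Since both integration contours coincide, this divergent contribution vanishes.

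The finite limit, the $g^0$ coefficient, has two sources: the subleading Bessel correction (with coefficient $\tfrac12[(s+\nu)^2-(t-1)^2]$) paired with $\mathcal{P}(s,t,\nu)$, and the leading Bessel asymptotic paired with the polynomial $-(s^2+t^2+\nu s-st)$ from \eqref{SXX}. Decomposing the Bessel coefficient and the polynomial $Q=s^2+t^2+\nu s-st$ into symmetric and antisymmetric parts in $(s,t)$, only the antisymmetric part of the Bessel coefficient survives against the odd $\mathcal{P}$, while the symmetric part of $Q$ survives against the even weight. These two contributions a priori carry different powers of $x$ ($x^{t+s+\nu-3}$ and $x^{t+s+\nu-2}$), so reconciling them requires a contour shift $s\to s-1$ together with the functional identities $s\Gamma(-s)=-\Gamma(1-s)$ and $(s+\nu)/\Gamma(s+\nu+1)=1/\Gamma(s+\nu)$. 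After this bookkeeping the integrand collapses to $2t\,\mathcal{P}(s,t,\nu)$ multiplying $\Gamma(-t)\Gamma(-s)x^{t+s+\nu-3}/[\Gamma(t+\nu+1)\Gamma(s+\nu+1)]$, establishing the first equality.

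For the second equality I would expand $\mathcal{P}$ via its alternative form \eqref{MainPolynomial3}, so that $2t\mathcal{P}$ decomposes into a sum of products of $s$- and $t$-monomials, factorising the double integral into sums of products of single contour integrals. Each single integral is evaluated by residues at the simple poles of $\Gamma(-s)$ (respectively $\Gamma(-t)$) inside $\Sigma$, yielding series of the form $\sum_{k\ge 0}(-1)^k x^k/[k!\,\Gamma(k+\nu+1)]$ and their shifts, which are identified with $J_{\nu-1}(2\sqrt{x})$, $J_\nu(2\sqrt{x})$ and $J_{\nu+1}(2\sqrt{x})$ via the basic identity $\tfrac{1}{2\pi i}\oint_\Sigma\Gamma(-t)x^t/\Gamma(t+\nu+1)\,dt=-x^{-\nu/2}J_\nu(2\sqrt{x})$. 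Recombining the pieces yields the closed form $\tfrac{1}{x}[J_\nu(2\sqrt{x})^2-J_{\nu-1}(2\sqrt{x})J_{\nu+1}(2\sqrt{x})]=S^{\sqrt{\Bessel}}(x,x)$ from \eqref{SBesselXX}; as an independent check, the second equality also follows from Theorem \ref{TheoremTransition} by continuity at equal arguments. The main obstacle is the reconciliation in the previous paragraph: matching the two different $x$-powers via the contour shift while keeping careful track of symmetric versus antisymmetric parts is the most delicate algebraic step, and one must separately verify the uniform validity of the Bessel asymptotic in $(s,t)$ along the unbounded contour $\Sigma$ (cf.\ Appendix \ref{AppB}).
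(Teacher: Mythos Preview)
Your overall architecture matches the paper's: expand $I_{t-1}K_{s+\nu}$ for large argument in \eqref{SXX}, show the would-be $g^{-1}$ singularity cancels, extract the $g^0$ coefficient, then evaluate the remaining double contour integral via residues as Bessel functions. Your antisymmetry argument for the $g^{-1}$ cancellation is correct and is exactly the content of the paper's identity \eqref{Identity1}. Your treatment of the second equality (factorise via \eqref{MainPolynomial3}, evaluate single contour integrals by residues) is also essentially what the paper does using Proposition~\ref{PropositionB} together with the differentiated identities \eqref{Identity4} and the recursion \eqref{BesselId}.

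The genuine divergence is in the $g^0$ step, and here there is a gap. You propose to reconcile the two pieces carrying $x^{t+s+\nu-3}$ and $x^{t+s+\nu-2}$ by a contour shift $s\to s-1$ plus Gamma identities. The paper does \emph{not} proceed this way. Instead it exploits a structural fact you have not used: the vanishing of the integrand of \eqref{Skernel} at equal arguments, equation \eqref{SkernelIntegrand}. Expanding \emph{that} identity in powers of $g$ via \eqref{WXX} produces, at orders $g$ and $g^2$, the relations \eqref{Identity1} and \eqref{Identity2}. The second of these is precisely the bridge between the two $x$-powers: it says $\tfrac{1}{x}\oint\!\oint W\,\mathcal{P}\,[(s+\nu)^2-t^2] = \oint\!\oint W\,(s^2+t^2+\nu s-st)$, so that after inserting the expansion with $t$ replaced by $t-1$ into \eqref{SXX} the surviving piece is exactly $2t\,\mathcal{P}$ (the $-1$ dying by \eqref{Identity1}). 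This identity is not a consequence of $(s,t)$-antisymmetry alone---it genuinely relates integrals with different $x$-dependence---and your contour-shift sketch does not obviously reproduce it; after the shift the denominator $\Gamma(s+\nu+1)$ becomes $\Gamma(s+\nu)$ and the prefactor picks up stray factors that do not collapse cleanly to $2t\mathcal{P}$ without further nontrivial identities. I would replace the contour-shift paragraph by the regularity argument: derive \eqref{Identity2} from \eqref{SkernelIntegrand} expanded to $O(g^2)$, then apply it directly.
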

This limit to the one-point function of the Bessel process will be further discussed and illustrated in Section \ref{Comparison} below where we will also compare to Borodin's interpolating kernel.

\section{Comparing interpolating kernels}\label{Comparison}

In this section we will illustrate the interpolating process defined by the correlation kernel $\mathbb{S}\left(x,y;g\right)$
from Theorem \ref{TheoremTransitionToBesselKernel} (b) by plotting its density in the hard edge scaling limit from Proposition \ref{PropositionDensity1}, and by comparing it to the corresponding density in the
Muttalib-Borodin  (MB) ensemble \cite{Muttalib,Borodin:1998} which is another but different interpolating process.

We begin by briefly introducing  the results for this 
latter
ensemble,
for a very recent discussion of the Muttalib-Borodin ensemble we refer the reader to Forrester and Wang \cite{FW}.
Its joint density of points $y_1,\ldots,y_N>0$ is given by
\begin{equation}
\label{MBjpdf}
P_{\rm MB}(y_1,\ldots,y_N)=C \prod_{j=1}^N y_j^\alpha \e^{-y_j} \det\left[ y_k^{l-1}\right]_{k,l=1}^N \det\left[ y_k^{\theta(l-1)}\right]_{k,l=1}^N\ ,
\end{equation}
where $\alpha>-1$, $\theta>0$ and we suppress the known normalisation constant $C$. Clearly for $\theta=1$ this gives the joint density of the standard Laguerre ensemble (\ref{StandardLaguerrejpdf}) with $\nu=\alpha$.
The corresponding hard edge scaling limit of this ensemble is a determinantal point process on 
$(0,\infty)$ whose correlation kernel, $K^{(\alpha,\theta)}(x,y)$, is defined
for $x,y>0$
by
\begin{equation}
\label{BorodinsKernel}
K^{(\alpha,\theta)}(x,y)=\theta x^{\alpha}\int\limits_0^1du J_{{(\alpha+1)}{\theta}^{-1},{\theta}^{-1} }(xu)
J_{\alpha+1,\theta}\left((yu)^{\theta}\right)u^{\alpha},
\end{equation}
following
the notation of \cite{ArnoDries,FW}\footnote{In Borodin's notation \cite{Borodin:1998} it is defined without the prefactor $x^\alpha$.}.
Here
$J_{a,b}(x)$ is Wright's generalisation of the Bessel function,
$$
J_{a,b}(x)=\sum\limits_{j=0}^{\infty}\frac{(-x)^j}{j!\Gamma(a+bj)}\ ,\ \ x>0\ ,
$$
which satisfies $J_{a+1,b=1}(z^2/4)=(z/2)^{-a}J_{a}(z)$, after comparing to eq. (\ref{BesselJdef}).
The kernel $K^{(\alpha,\theta)}(x,y)$ was obtained 
by Borodin in 
\cite[Sec. 4]{Borodin:1998}.
It is not hard to see (
\cite[Example 3.5]{Borodin:1998}) that for $\theta=1$
$$ \left(\frac{y}{x}\right)^{\frac{\alpha}{2}}K^{(\alpha,\theta=1)}(x,y)=\frac{\varphi_1(x)\varphi_2(y)-\varphi_1(y)\varphi_2(x)}{x-y},
$$
where $\varphi_1(x)$ can be expressed in terms of the Bessel function of the first kind, $\varphi_1(x)=J_{\alpha}(2{x}^{\frac{1}{2}})$, and $\varphi_2(x)=x\varphi_1'(x)$.
Consequently for $\theta=1$ the correlation kernel $K^{(\alpha,\theta=1)}(x,y)$ is proportional to
the classical Bessel-kernel eq. (\ref{BesselKernel}), after changing variables accordingly as in eq. (\ref{change}). In particular, we have the following relation
\begin{equation}\label{MB-Bessel-map}
S^{\sqrt{\Bessel}}(x,y)=\left( \frac{y}{x}\right)^{{\nu}}(xy)^{-\frac{1}{2}} K^{(\nu,\theta=1)}(x,y).
\end{equation}
In \cite{ArnoDries}  Kuijlaars and Stivigny proved the following relation between Borodin's kernel and the Meijer G-kernel $K_{\nu_1,\ldots,\nu_M}(x,y)$:
\begin{thm}
\label{KS2014}
{\bf [Kuijlaars and Stivigny \cite{ArnoDries}]} Let $M\geq 1$ be an integer, then
\begin{equation}\label{RelationBorodinMeijer}
M^MK^{\left(\alpha,\frac{1}{M}\right)}\left(M^Mx,M^My\right)=\left(\frac{x}{y}\right)^{\alpha}K_{\nu_1,\ldots,\nu_M}(x,y),
\end{equation}
with parameters  $\nu_j=\alpha+\frac{j-1}{M},\;\;\; j=1,\ldots,M$ ,
and
\begin{equation}\label{RelationBorodinMeijer1}
x^{\frac{1}{M}-1}K^{\left(\alpha,{M}\right)}\left(Mx^{\frac{1}{M}},My^{\frac{1}{M}}\right)=K_{\tilde{\nu}_1,\ldots,\tilde{\nu}_M}(y,x),
\end{equation}
with parameters  $\tilde{\nu}_j=\frac{\alpha+j}{M}+1,\;\;\; j=1,\ldots,M$.
\end{thm}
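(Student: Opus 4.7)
The plan is to prove both identities by converting each side to an explicit Mellin--Barnes double contour integral and matching them via Gauss's multiplication formula for the Gamma function. Borodin's kernel in (\ref{BorodinsKernel}) is defined through Wright's generalised Bessel function $J_{a,b}$, which admits a Mellin--Barnes representation
$$
J_{a,b}(z)=\frac{1}{2\pi i}\int_{\mathcal{C}}\frac{\Gamma(-w)}{\Gamma(a+bw)}\,z^{w}\,dw,
$$
along a vertical contour $\mathcal{C}$ separating the poles of $\Gamma(-w)$ from the rest. Substituting this into (\ref{BorodinsKernel}) for each of the two Wright factors gives a threefold integral. I would first carry out the $u$--integration on $[0,1]$, which produces a single rational factor of the form $1/(\,\cdot +\,\cdot\,)$ in the two contour variables and reduces $K^{(\alpha,\theta)}(x,y)$ to a clean Mellin--Barnes double contour integral whose integrand depends on $\theta$ only through the arguments of the Gamma functions in the two denominators.

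Next, for (\ref{RelationBorodinMeijer}) I would specialise to $\theta=1/M$, rescale $x\mapsto M^{M}x$, $y\mapsto M^{M}y$, and perform a linear change of the two contour variables (essentially $s\mapsto s/M$ on one of them) so that the Gamma functions appearing in the denominators take the forms $\Gamma((\alpha+1)M+s)$ and $\Gamma(\alpha+1+s/M)$ up to integer shifts. The analytic heart of the argument is then to apply Gauss's multiplication theorem
$$
\Gamma(Mz)=(2\pi)^{(1-M)/2}M^{Mz-\frac{1}{2}}\prod_{j=0}^{M-1}\Gamma\!\left(z+\frac{j}{M}\right),
$$
which splits every single Gamma function of a linear argument into a product of $M$ Gamma functions with arguments $z+j/M$. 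Collecting the factors of $M^{Mz}$ against the rescalings $x\mapsto M^{M}x$, $y\mapsto M^{M}y$, and the prefactor $M^{M}$ from the left--hand side, and observing that the $(2\pi)^{(1-M)/2}$ factors cancel in ratios, one finds that the product of Gamma functions that emerges involves precisely the parameters $\nu_{j}=\alpha+(j-1)/M$, $j=1,\dots,M$. Comparing with the standard Mellin--Barnes representation of the Meijer G--kernel $K_{\nu_{1},\dots,\nu_{M}}(x,y)$ (the $M$--fold generalisation of (\ref{KNInd}) obtained as the hard edge limit of the Kuijlaars--Zhang kernel), the two integrands coincide up to the residual factor $(x/y)^{\alpha}$ that is produced when reassembling $x^{t}y^{-s-1}$ after the shift of contour variables.

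For (\ref{RelationBorodinMeijer1}) I would apply the same scheme with $\theta=M$ instead of $1/M$, which effectively exchanges the roles of the two Wright factors in (\ref{BorodinsKernel}). Under the same strategy of rescaling, a change of variables and Gauss's multiplication theorem, the Gamma function $\Gamma(\alpha+1+Ms)$ from the $J_{\alpha+1,\theta}$ factor now becomes the $M$--fold product with parameters $\tilde{\nu}_{j}=(\alpha+j)/M+1$, while the rescaling $x\mapsto Mx^{1/M}$, $y\mapsto My^{1/M}$ and the Jacobian--like prefactor $x^{1/M-1}$ on the left--hand side account for the swap $x\leftrightarrow y$ in the arguments of the Meijer G--kernel on the right.

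The main obstacle will be the analytic bookkeeping of contours and convergence. Each rescaling of an integration variable and each application of the multiplication formula shifts the vertical line of integration, and one must verify that the contour can be moved back to the standard Mellin--Barnes contour of the Meijer G--kernel without crossing any of the Gamma--function poles. The decay $|\Gamma(a+ib)|\sim\sqrt{2\pi}|b|^{a-\frac{1}{2}}e^{-\pi|b|/2}$ along vertical strips guarantees absolute convergence and justifies both Fubini (to pull the $u$--integral inside) and the contour deformations, but tracking this carefully for arbitrary $M\geq 1$ and $\alpha>-1$ is the technical heart of the proof. A minor secondary point is that the $u$--integration after inserting the Mellin--Barnes form must be performed before taking any contour to infinity, since the individual Mellin--Barnes integrals need not converge absolutely before the $u$--integration produces its smoothing rational factor.
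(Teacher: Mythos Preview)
The paper does not contain a proof of this theorem: it is stated as a result of Kuijlaars and Stivigny and simply cited to \cite{ArnoDries}, so there is no in-paper argument to compare your proposal against. Your strategy---expressing both sides as Mellin--Barnes double contour integrals and matching them through Gauss's multiplication formula for the Gamma function---is exactly the natural route and is in essence what the original reference does; note also that the paper quotes the Forrester--Wang contour form (\ref{KBrep}) of Borodin's kernel, which is precisely the representation you would arrive at after performing the $u$--integration, so you can start from there and spare yourself the Fubini justification.

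One small correction: your Mellin--Barnes formula for $J_{a,b}$ has a sign mismatch, since $\underset{w=k}{\Res}\,\Gamma(-w)=-(-1)^{k}/k!$ while the series for $J_{a,b}$ carries $(-z)^{k}$; the contour integral you wrote equals $-J_{a,b}(-z)$ rather than $J_{a,b}(z)$. This is harmless for the overall argument but should be fixed before the bookkeeping with the multiplication theorem.
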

Here the kernel $K_{\nu_1,\ldots,\nu_M}(x,y)$ is the Meijer G-kernel in the hard edge limit associated with the product of $M$
independent rectangular Gaussian matrices, and is defined in
\cite[Thm 5.3]{KuijlaarsZhang}, see eqs. (\ref{SIndependent}) and (\ref{SIndependentDefiniteIntegral}) for $M=2$.
Because of the range of parameters in the above theorem there is no choice of $\alpha$, for which Borodin's kernel becomes identical to that of the product of $M$ rectangular independent Gaussian matrices with integer values for all $\nu_j$ or all $\tilde{\nu}_j$. The only exception is $M=1$, in which case Borodin's kernel coincides with the Bessel-kernel.
It is in this sense that that Borodin's kernel $K^{(\alpha,\theta)}(x,y)$ interpolates between the Bessel-kernel at value $\theta=1$ on the one hand, and the Meijer G-kernel for $M=2$ and fractional values of $\nu_j$ and  $\tilde{\nu}_j$
specified in Theorem \ref{KS2014} at the values $\theta=\frac12$ and $\theta={2}$, respectively.
In contrast our limiting kernel $\mathbb{S}(x,y;g)$ interpolates between the Bessel-kernel in the limit $g\to0$ and the kernel for the product of $M=2$ independent rectangular Gaussian random matrices with integer parameter values $\nu_1=\nu$ and $\nu_2=0$ in the limit $g\to\infty$.

Furthermore, in \cite{FW} Corollary 5.3 a complex contour integral representation of the kernel eq. (\ref{BorodinsKernel}) was derived valid for arbitrary $\theta>0$:
\begin{equation}\label{KBrep}
K^{(\alpha,\theta)}(x,y)=
\frac{\theta}{(2\pi i)^2}\int\limits_{-\frac{1}{2}-i\infty}^{-\frac{1}{2}+i\infty}ds\int\limits_{\Sigma}dt\frac{\Gamma(s+1)\Gamma(s+\alpha+1)}{\Gamma(t+1)\Gamma(t+\alpha+1)}
\frac{\sin(\pi s)}{\sin(\pi t)}\frac{x^{-\theta s-1}y^{\theta t}}{(s-t)},
\end{equation}
where for $\theta<1$ the first contour has to be slightly modified by tilting it with respect to an angle $\delta$ when going towards $\pm i \infty$, see \cite{FW} for details.
For an alternative earlier contour integral representation valid at $M=2$ only see \cite{LSZ}.
Because the integral representation (\ref{KBrep}) is clearly different from the one in Theorem \ref{TheoremTransitionToBesselKernel} (b) and because of its two different limiting Meijer G-kernels as described above the kernels,  $K^{(\alpha,\theta)}(x,y)$ and $\mathbb{S}(x,y;g)$  clearly represent different interpolating point processes.

Let us illustrate the difference of these two interpolating point processes by plotting 
their respective densities. For that purpose it is advantageous to unfold these densities. A typical property of the density of the singular values is that 
its global density diverges at the origin. This divergence is also seen by taking the asymptotic limit of large argument of the local density. Unfolding then means to change variables such that the mean spacing becomes constant, and that the unfolded local density becomes asymptotically constant.

We begin with the known density of the Bessel-kernel.
Using the asymptotic expansion of the Bessel functions, see \cite{GradshteinRyzhik},
for $x>0$,
\begin{equation}
\label{JBesselasympt}
J_\nu(x)=\sqrt{\frac{2}{\pi x}}\cos\left(x-\frac{\pi}{4}(2\nu+1)\right)\left(1+O\left(\frac1x\right)
\right),
\end{equation}
as $x\rightarrow\infty$,
it is not difficult to obtain the asymptotics of the Bessel density eq. (\ref{SBesselXX})
\begin{equation}
\label{limSBessel}
S^{\sqrt{\Bessel}}(x,x)=\frac{x^{-\frac32}}{\pi}\left(1+O\left(\frac{1}{{x}^{\frac{1}{2}}}\right)\right) \ ,
\end{equation}
as $x\rightarrow\infty$.
We note that due to the change of variables, eq. (\ref{change}), the inverse square root divergence of the global Marchenko-Pastur law is mapped to $-3/2$. The unfolded Bessel density thus reads
\begin{equation}\label{rhomicroBessel}
\rho_{\rm micro}^{\Bessel}(x)=x^3 S^{\sqrt{\Bessel}}(x^2,x^2) = x \left (J_\nu(2x)^2-J_{\nu-1}(2x) J_{\nu+1}(2x)\right) \ ,
\end{equation}
which asymptotically becomes
$1/\pi$. This density is plotted in Figures \ref{FigInterpolating} and \ref{FigureMB} as a reference density compared to the two interpolating densities.

Next we turn to the asymptotics of the density of our interpolating kernel from Proposition \ref{PropositionDensity1}. Here we are only able to present heuristic arguments. Without giving much details we partly use the
proof of Proposition \ref{PropositionDensity2} from Section \ref{ProofInterpolatingDensity} where the asymptotic limit $g\to0$ is considered. There the asymptotic expansion of the two modified Bessel functions in eq. (\ref{SXX}) was determined in eqs. (\ref{KIasymptotic}) and (\ref{WXX}) and we obtain
\begin{eqnarray}
\mathbb{S}(x,x;g)&=&
\oint\limits_{\Sigma}dt\oint\limits_{\Sigma}ds
\frac{\Gamma(-t)\Gamma(-s)x^{t+s+\nu}\left[\mathcal{P}(s,t,\nu)-g(s^2+t^2+\nu s-st)\right]}{g^2x(2\pi i)^2\Gamma(t+\nu+1)\Gamma(s+\nu+1)}\nonumber\\
&&\times
\left(\frac{g}{x}+\frac{g^2}{x^2}\left((s+\nu)^2-(t-1)^2\right)
\right.\nonumber\\
&&\left.
+\frac{g^3}{x^3}\left(
\left((s+\nu)^2-(t-1)^2\right)^2 -2\left((s+\nu)^2+(t-1)^2\right)+1
\right)
+O\left(\frac{1}{x^4}\right)\right)\nonumber\\
&=&\frac{1}{x(2\pi i)^2}\oint\limits_{\Sigma}dt\oint\limits_{\Sigma}ds
\frac{\Gamma(-t)\Gamma(-s)x^{t+s+\nu}}{\Gamma(t+\nu+1)\Gamma(s+\nu+1)}
\left( \frac{1}{x^2}2t\mathcal{P}(s,t,\nu)
\right.\nonumber\\
&&\left.-\frac{g}{x^2}2t(s^2+t^2+\nu s-st)+O\left(\frac{1}{x^3}\right)
\right),
\label{SXXgasympt}
\end{eqnarray}
as $x\rightarrow\infty$.
\begin{figure}[h]
\centering
\epsfig{file=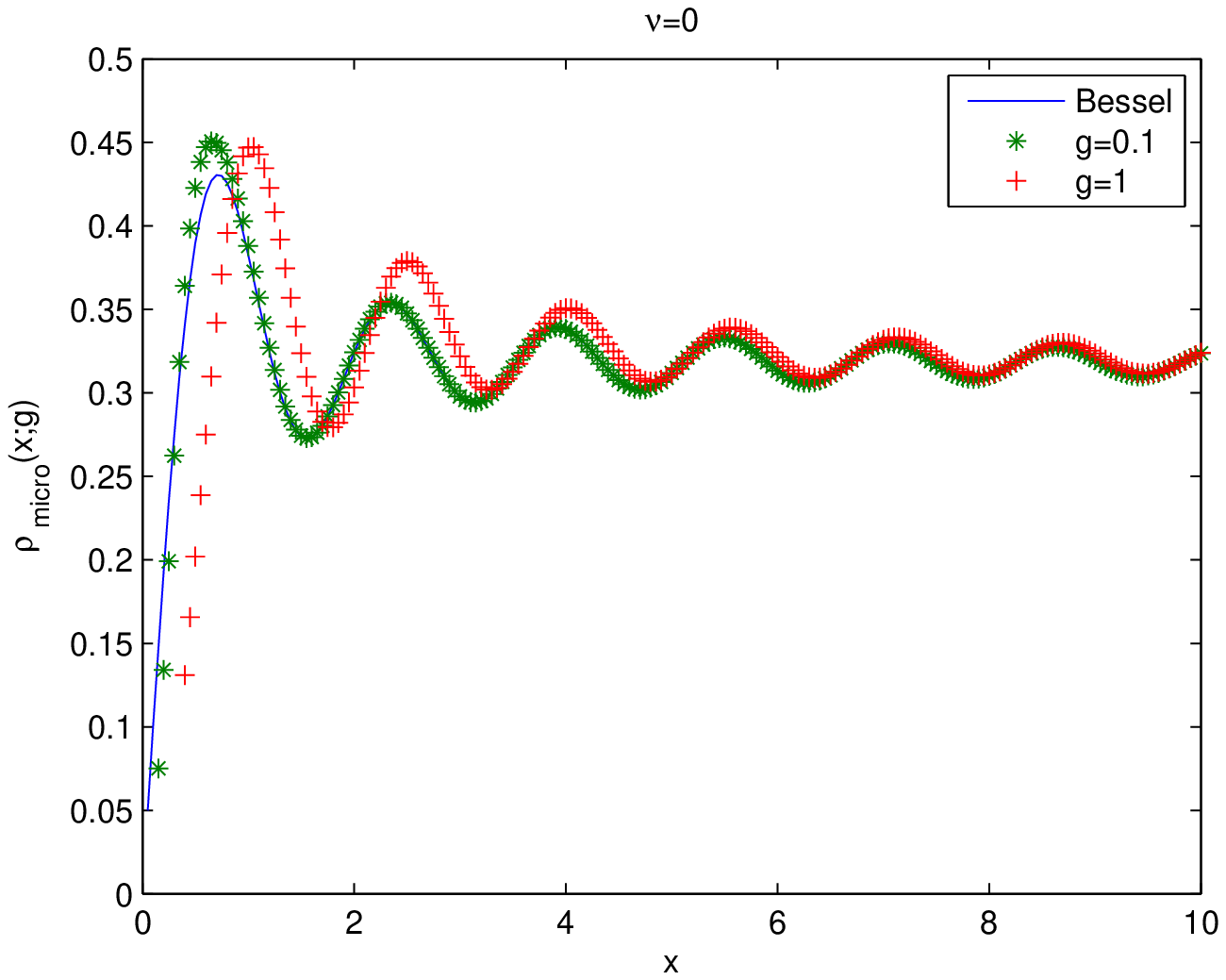,height=8cm,width=12cm}
\epsfig{file=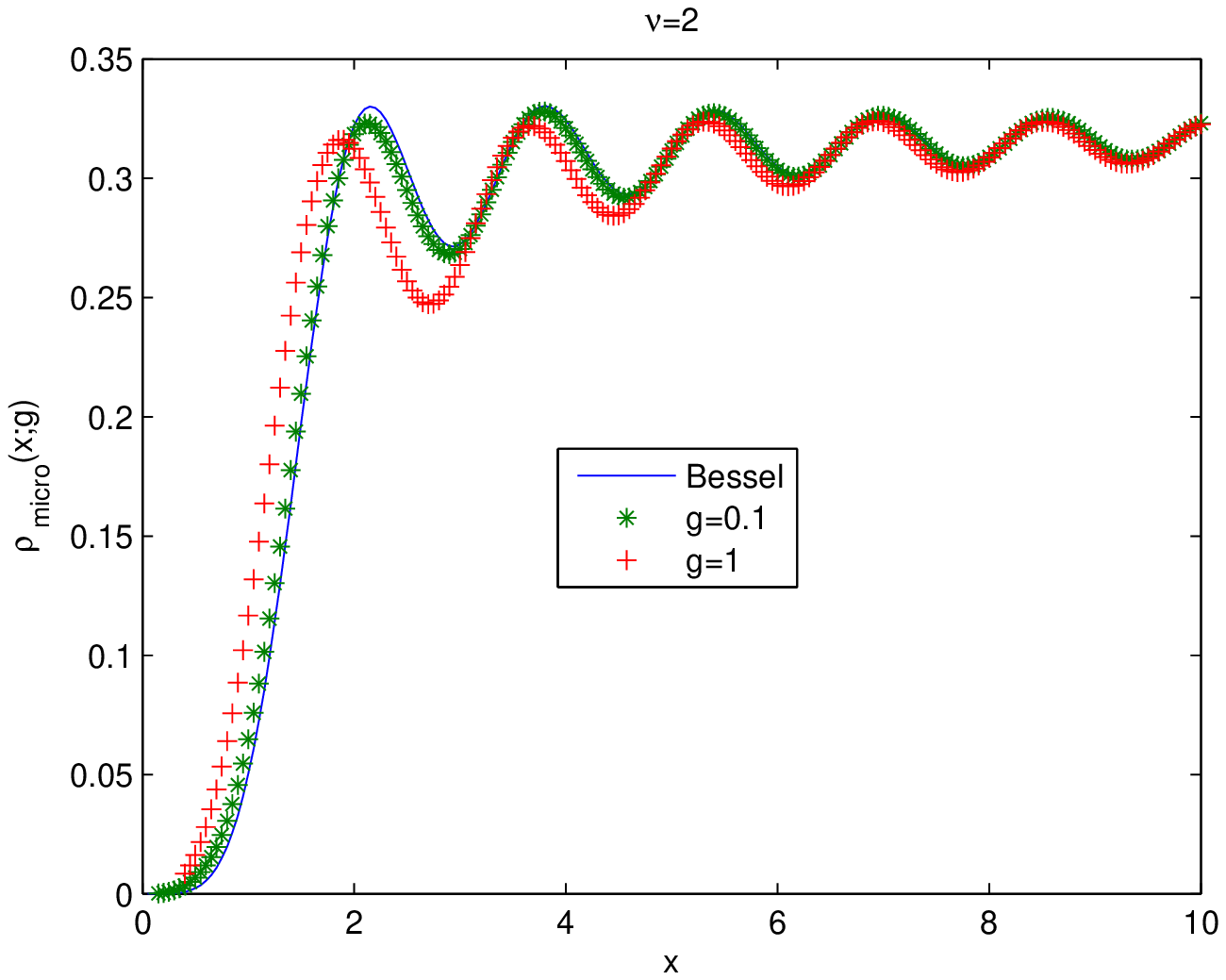,height=8cm,width=12cm}
\caption{We plot $\rho_{\rm micro}(x;g)$ from eq.
(\ref{rhomicroInterpolating})
for different values of $g=0.1$ and $g=1$,  together with
$\rho_{\rm micro}^{\Bessel}(x)$ from eq. (\ref{rhomicroBessel}) corresponding to $g=0$, at $\nu=0$ (upper plot) and $\nu=2$ (lower plot).
We see that for small enough $g$ the two densities
essentially coincide with each other.
}
\label{FigInterpolating}
\end{figure}
In the second step we have used the integral identities (\ref{Identity1})
- (\ref{Identity3}). We have been unable to prove that the
$O\left(x^{-3}\right)$ term does not contribute to the leading asymptotics of the integral. In the following we assume that this is the case, which is confirmed by numerics.

Inside the integral in eq. (\ref{SXXgasympt}) the term proportional to
$\mathcal{P}(s,t,\nu)$ was shown in eq. (\ref{Sfinallim}) to equal the Bessel density. The term proportional to ${g}/{x^2}$ can be expressed in term of Bessel functions using  Proposition \ref{PropositionB} and identities (\ref{Identity4}). An application of the expansion (\ref{JBesselasympt}) shows that it is subleading. From this we obtain the following conjectured asymptotics
\begin{equation}
\label{conjecture}
\mathbb{S}(x,x;g)=S^{\sqrt{\Bessel}}(x,x)\left(1+O\left(\frac{1}{{x}^{\frac{1}{2}}}\right)\right),
\end{equation}
as $x\rightarrow\infty$.
Consequently the unfolded interpolating density should be defined in the same way as the Bessel-density eq. (\ref{rhomicroBessel}), as it shares its asymptotic,
\begin{equation}\label{rhomicroInterpolating}
\rho_{\rm micro}(x;g)=x^3 {\mathbb S}(x^2,x^2;g) \ .
\end{equation}

The interpolating density unfolded in this way
is plotted in the Figure \ref{FigInterpolating} for several small values of $g>0$, compared to the Bessel-density  eq. (\ref{rhomicroBessel}). The close agreement with the Bessel-density for moderate $x$ at different values of $g>0$ further corroborates our conjectured asymptotics (\ref{conjecture}).

\begin{figure}[h]
\centering
\epsfig{file=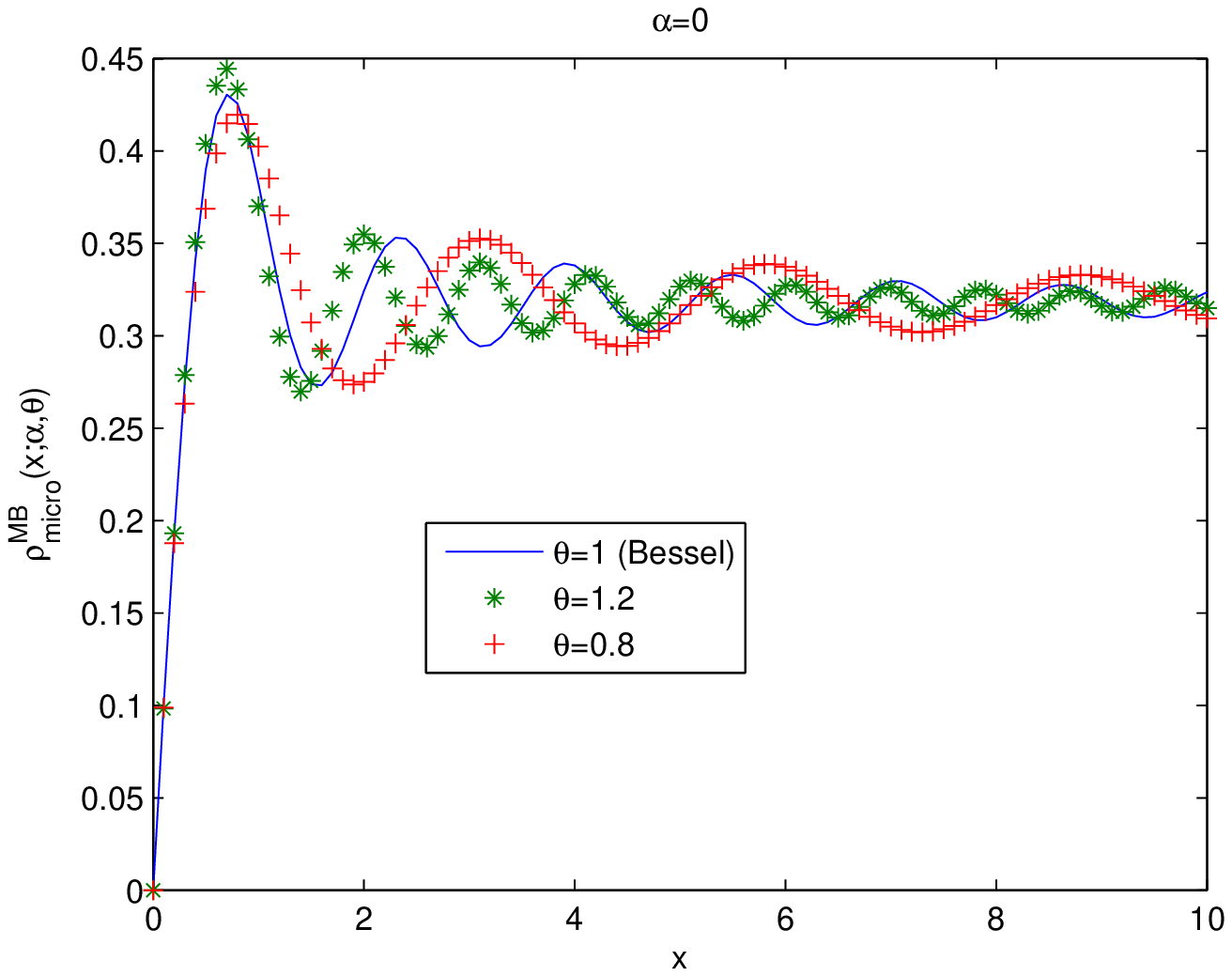,height=8cm,width=12cm}
\caption{This figure shows $\rho_{\rm micro}^{(\rm MB)}(x;\alpha,\theta)$ from eq. (\ref{rhomicroMB})
with $\alpha=0$, and the deformation parameter $\theta=1.2$, and $0.8$, compared to the Bessel-density $\rho_{\rm micro}^{\Bessel}(x)$ from eq. (\ref{rhomicroBessel}), which corresponds to $\theta=1$ at $\nu=0$.
}
\label{FigureMB}
\end{figure}
Finally we turn to the asymptotic expansion of the local density of the Muttalib-Borodin ensemble, using the integral representation (\ref{BorodinsKernel}). Here we use the asymptotic expansion of Wright's generalised Bessel functions from the original work \cite[Thm. 1]{EMW}
for $z>0$,
\begin{equation}\label{JWasymptot}
J_{a+1,\theta}(z)\sim \frac{\left(\theta z \e^{i\pi}\right)^{-{(2a+1)}/{(2(1+\theta))}}}{\sqrt{2\pi(1+\theta)}}
\exp\left[\frac{(1+\theta)}{\theta}\left(\theta z \e^{i\pi}\right)^{{1}/{(1+\theta)}}\right] \ +\ c.c.,
\end{equation}
as $z\rightarrow\infty$.
Here $c.c.$ stands for complex conjugate.
A tedious but straightforward application of this asymptotic expansion leads to the following result for $\theta>0$:
\begin{equation}\label{KMBasymptot}
K^{(\alpha,\theta)}(x,x)\sim \frac1\pi \theta^{{1}/{(1+\theta)}}\sin\left(\frac{\pi}{1+\theta}\right) x^{-{1}/{(1+\theta)}}\ ,
\end{equation}
as $x\rightarrow\infty$, which is independent of $\alpha$
to leading order.
We thus define the unfolded microscopic density as
\begin{equation}\label{rhomicroMB}
\rho_{\rm micro}^{(\rm MB)}(x;\alpha,\theta)=x K^{(\alpha,\theta)}\left(x^{\theta+1},x^{\theta+1}\right)\frac{\theta^{-{1}/{(1+\theta)}}}{\sin\left(\frac{\pi}{\theta+1}\right)} \ ,
\end{equation}
in order to normalise it to an asymptotic value of $1/\pi$.
It is plotted in Figure \ref{FigureMB}
for several values in the vicinity of the Bessel-kernel at $\theta=1$.

Comparing the Figures   \ref{FigInterpolating} and  \ref{FigureMB}
the two interpolating processes are clearly seen to be different.
Because the local maxima of the
microscopic density correspond to the average locations of the individual eigenvalues the difference in behaviour suggests the following interpretation. In the interpolating density $\rho_{\rm micro}(x;g)$ small deviations of the rescaled coupling of the matrices $g$ from zero mainly affect the repulsion from the origin, but not the spacing between eigenvalues further away from the origin. The means that for such a perturbation the eigenvalue repulsion basically remains as
in the unperturbed Laguerre ensemble (\ref{TheLaguerreProbabilityDensityFunction}) at $g=0$.

In contrast in the density $\rho_{\rm micro}^{(\rm MB)}(x;\alpha,\theta)$ the repulsion from the origin is basically unchanged for small deviations from $\theta=1$. However, the repulsion among eigenvalues is immediately increased for $\theta<1$ and reduced for $\theta>1$, as could be expected from the $\theta$-dependence inside the Vandermonde determinant in eq. (\ref{MBjpdf}).

\section{Proof of Theorem \ref{TheoremTransitionToBesselKernel}
}\label{Proof-Thm1.5}
Our aim  is to demonstrate the convergence of the kernel $K_N(x,y;\mu)$ defined by equation
(\ref{MainExactKernel}) in the different scaling limits (a), (b) and (c) using a contour integral representation for the correlation kernel.
For part (a) we need the asymptotic formulae for the Bessel functions
$I_t\left(\frac{1-\mu}{\mu}x^{\frac{1}{2}}\right)$ and $K_{s+\nu}\left(\frac{1+\mu}{\mu}y^{\frac{1}{2}}\right)$
to hold as $\mu\rightarrow 0$. Thus we begin with the following Proposition.
\begin{prop}\label{PropositionBesselLargeArgumentAsymtotics} As $\mu\rightarrow 0$,  the following asymptotic formulae hold true
for $s,t\in\mathbb{C}$ and $x,y>0$:
\begin{equation}
I_t\left(\frac{1-\mu}{\mu}x^{\frac{1}{2}}\right)\sim\frac{\mu^{\frac{1}{2}}}{(2\pi)^{\frac{1}{2}}}
\frac{\e^{\frac{x^{\frac{1}{2}}}{\mu}-x^{\frac{1}{2}}}}{x^{\frac{1}{4}}},\;\;
K_{s+\nu}\left(\frac{1+\mu}{\mu}y^{\frac{1}{2}}\right)\sim\frac{\pi^{\frac{1}{2}}\mu^{\frac{1}{2}}}{2^{\frac{1}{2}}}
\frac{\e^{-\frac{y^{\frac{1}{2}}}{\mu}-y^{\frac{1}{2}}}}{y^{\frac{1}{4}}}.
\label{IKasymptotic}
\end{equation}
\end{prop}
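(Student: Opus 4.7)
The plan is to reduce the proposition to the classical large-argument asymptotic expansions of the modified Bessel functions $I_\nu(z)$ and $K_\nu(z)$. Recall from \cite{NIST} that as $|z|\to\infty$ with $|\arg z|<\pi/2$, uniformly for the order on compact subsets of $\mathbb{C}$, one has
\begin{equation*}
I_\nu(z)=\frac{e^{z}}{\sqrt{2\pi z}}\bigl(1+O(1/z)\bigr),\qquad
K_\nu(z)=\sqrt{\frac{\pi}{2z}}\,e^{-z}\bigl(1+O(1/z)\bigr).
\end{equation*}
Since $\mu\to 0^+$ and $x,y>0$ are fixed, the arguments
$z_1(\mu)=\frac{1-\mu}{\mu}x^{1/2}$ and $z_2(\mu)=\frac{1+\mu}{\mu}y^{1/2}$ tend to $+\infty$ while staying on the positive real axis, so both expansions apply.

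Next I would substitute. For $I_t(z_1)$, split $z_1=\frac{x^{1/2}}{\mu}-x^{1/2}$, so the exponential factor contributes $e^{x^{1/2}/\mu-x^{1/2}}$, while
\begin{equation*}
\frac{1}{\sqrt{2\pi z_1}}=\frac{\mu^{1/2}}{\sqrt{2\pi(1-\mu)}\,x^{1/4}}\;\sim\;\frac{\mu^{1/2}}{\sqrt{2\pi}\,x^{1/4}},
\end{equation*}
as $\mu\to 0$. Combining gives the first asymptotic formula in \eqref{IKasymptotic}. For $K_{s+\nu}(z_2)$, write $z_2=\frac{y^{1/2}}{\mu}+y^{1/2}$, so the exponential contributes $e^{-y^{1/2}/\mu-y^{1/2}}$, while
\begin{equation*}
\sqrt{\frac{\pi}{2z_2}}=\sqrt{\frac{\pi\mu}{2(1+\mu)\,y^{1/2}}}\;\sim\;\frac{\pi^{1/2}\mu^{1/2}}{2^{1/2}\,y^{1/4}},
\end{equation*}
as $\mu\to 0$. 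This yields the second asymptotic formula.

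The main point to check is that the classical asymptotics apply with the given complex orders $t$ and $s+\nu$. Since the proposition is stated pointwise in $s,t\in\mathbb{C}$ (and $x,y>0$), the standard uniformity of the Bessel asymptotics in the order over any compact set is more than sufficient; no estimate in $s,t$ is needed at this stage. The conditions $\re(s+\nu)>-1/2$ and $|\arg z|<\pi/2$ imposed in the integral representation of $K_\kappa$ given in the excerpt are automatically satisfied since $z_2>0$. Consequently the argument is entirely routine, and I anticipate no obstacle beyond the bookkeeping of the prefactors; the nontrivial use of this proposition will come later, when it is inserted into the double contour integral \eqref{MainExactKernel} and uniformity along the (unbounded) contour has to be addressed.
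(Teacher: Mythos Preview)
Your proof is correct and follows exactly the same approach as the paper: apply the standard large-argument asymptotics of $I_\nu(z)$ and $K_\nu(z)$ for real positive $z$, which the paper cites from \cite[Sec.~8.451]{GradshteinRyzhik} in a one-line proof. You have merely made the bookkeeping of the prefactors explicit, and your closing remark about uniformity along the contour correctly anticipates where the real analytic content lies (in the later application, not here).
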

\begin{proof}
These asymptotic formulae can be obtained using the known asymptotic expansions of the Bessel functions $I_{\nu}(z)$ and $K_{\nu}(z)$
for large values of $|z|$
(we only need real positive here), see
\cite[Sec. 8.451]{GradshteinRyzhik}.
\end{proof}
Next, we need the asymptotics of the function $A_N(s,t;\mu)$ defined in (\ref{MainPolynomial}) which appears on the right-hand side
of formula (\ref{MainExactKernel}) for the correlation kernel $K_N(x,y;\mu)$.
\begin{prop}\label{PropositionAsymptoticsofA}
Assume that $\mu(N)=gN^{-\kappa}$, where $\kappa\geq1$
and $g$ is a constant which does not depend on $N$.\\
a) For $\kappa> 1$
$$
A_N(s,t;\mu(N))=\frac{1}{N}\mathcal{P}(s,t,\nu)+O\left(\frac{1}{N^\gamma}\right)\ , \ \ \gamma=\min\{\kappa,2\}\ .
$$
where $\mathcal{P}(s,t,\nu)$ is defined by equation (\ref{MainPolynomial2}).\\
b) For $\kappa=1$
\begin{equation}
\begin{split}
A_N(s,t;\mu(N))&=\frac{1}{N}\biggl\{\mathcal{P}(s,t,\nu)- g(s^2+t^2+\nu s-st)\biggr\}+O\left(\frac{1}{N^2}\right).
\end{split}
\end{equation}
\end{prop}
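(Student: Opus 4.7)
The plan is to prove both parts by a direct Laurent expansion in $1/N$ of each of the eight terms in $A_N(s,t;\mu(N))$ from equation (\ref{MainPolynomial}), treating $s,t,\nu$ as fixed and substituting $\mu(N) = gN^{-\kappa}$. The polynomial factors are expanded exactly, for example $(t-N)(t-N+1) = N^2 - (2t+1)N + t(t+1)$ and $(\nu+N+1)(N+1) = N^2 + (\nu+2)N + (\nu+1)$. For the two rational factors one uses
\[\frac{t-N}{s-N-1} = \frac{1-t/N}{1+(1-s)/N} = 1 + \frac{s-t-1}{N} + \frac{(1-s)(1-s+t)}{N^2} + O(N^{-3}),\]
together with its $s\leftrightarrow t$ counterpart from the last line of (\ref{MainPolynomial}). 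Every term in $A_N$ then becomes an explicit Laurent series in $1/N$ whose coefficients are polynomials in $\mu, s, t, \nu$.

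The core of the argument is a bookkeeping verification of cancellations. Since $A_N$ must be $O(1/N)$ for the correlation kernel (\ref{MainExactKernel}) to admit the hard edge limits of Theorem \ref{TheoremTransitionToBesselKernel}, the coefficients at orders $N^2$, $N^1$, and $N^0$ must all vanish when summed over the eight terms and over every power of $\mu$. For instance at $N^2$ the $\mu^0$ and $\mu^2$ contributions cancel pairwise inside the pairs (line 1, line 7), (line 2, line 8) and (line 4, line 6), while the $\mu^1$ contribution of $-2\mu N^2$ from lines 1, 2, 7, 8 is cancelled by the net $+2\mu N^2$ arising from the explicitly $\mu$-dependent middle four lines. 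Analogous but more elaborate cancellations must be verified at orders $N^1$ and $N^0$. What remains at order $1/N$ is then identified with $\mathcal P(s,t,\nu)$, most conveniently in the symmetric form (\ref{MainPolynomial3}); for $\kappa = 1$, where $\mu N = g$ is finite, the same organisation automatically yields the additional contribution $-g(s^2+t^2+\nu s-st)/N$ predicted in part (b).

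The error estimate in part (a) then follows by observing that the analogous $\mu$-correction which produces the $-g(s^2+t^2+\nu s-st)/N$ term in part (b) has general size $O(\mu) = O(N^{-\kappa})$, which for $\kappa>1$ is smaller than the leading $\mathcal P/N$ and hence is absorbed into the error. Together with the $O(1/N^2)$ Taylor remainders of the rational expansions of lines 2 and 8, the total error is $O(N^{-\min(\kappa,2)})$, confirming $\gamma = \min\{\kappa,2\}$. I expect the main obstacle to be purely combinatorial: the three successive cancellations at orders $N^2$, $N^1$, $N^0$ must be verified at each power of $\mu$ separately, which generates a sizeable but finite symbolic calculation that is straightforward in principle but tedious to perform without arithmetic errors. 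Collecting each power of $\mu$ separately at each order of $N$, and checking term-by-term against the structure of $\mathcal P(s,t,\nu)$ in the asymmetric form (\ref{MainPolynomial2}), keeps the bookkeeping transparent.
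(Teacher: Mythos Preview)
Your approach is correct and is at heart the same direct expansion the paper performs; the only difference is organisational. Rather than expanding all eight terms of (\ref{MainPolynomial}) simultaneously and then sorting by powers of $\mu$ at each order of $N$, the paper first observes that $A_N(s,t;\mu)$ is quadratic in $\mu$ and writes it as
\[
A_N(s,t;\mu)=(1+\mu^2)\,\alpha_N(s,t)+\mu\,\beta_N(s,t),
\]
where $\alpha_N(s,t)=A_N(s,t;0)$ and $\beta_N(s,t)$ is read off directly from (\ref{MainPolynomial}). This separates the $\mu$-dependence at the outset, so one only has to compute the large-$N$ asymptotics of the two $\mu$-independent functions $\alpha_N$ and $\beta_N$; the three-level cancellation you describe (at $N^2$, $N^1$, $N^0$, each at every power of $\mu$) then collapses to a single expansion of $\alpha_N$ giving $\mathcal P(s,t,\nu)/N+O(N^{-2})$ and a single expansion of $\beta_N$ giving $-(s^2+t^2+\nu s-st)+O(N^{-1})$. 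Your brute-force bookkeeping reaches the same endpoint and your sample check at order $N^2$ is accurate, but the paper's decomposition buys a substantially shorter calculation and makes the distinction between the $\kappa>1$ and $\kappa=1$ cases immediate, since the $\mu\beta_N$ term is manifestly $O(N^{-\kappa})$.
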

\begin{proof}
Looking at (\ref{MainPolynomial}) the  function $A_N(s,t;\mu)$ contains quadratic, linear and constant parts in $\mu$. It can thus be written as follows
\begin{equation}
\label{Asplit}
A_N(s,t;\mu)=(1+\mu^2)\alpha_N(s,t)+ \mu\beta_N(s,t)\ ,
\end{equation}
where $\alpha_N(s,t)=A_N(s,t;\mu=0)$ and $\beta_N(s,t)$ can be simply read off from eq. (\ref{MainPolynomial}). Clearly for any value of $\kappa\geq1$ the quadratic term
$O(\mu^2)$ is subleading. A simple calculation reveals the large-$N$ asymptotics for the two functions as
\begin{equation}\label{absplit}
\begin{split}
\alpha_N(s,t)&=\frac{1}{4N}(t-s)(t^2+s^2+(t+s)(\nu-1)-\nu)+O\left(\frac{1}{N^2}\right)= \frac{1}{N}\mathcal{P}(s,t,\nu)+O\left(\frac{1}{N^2}\right)\ ,\\
\beta_N(s,t)&=- (s^2+t^2+\nu s-st)+O\left(\frac{1}{N}\right)\ .
\end{split}
\end{equation}
Clearly the scaling of $\mu$ with $\kappa=1$ is special as only then the two
leading
terms on the right-hand side of eq. (\ref{absplit}) become of the same order
in eq. (\ref{Asplit}).
This concludes the proof.
\end{proof}
Next we turn to the contour integral representation of our kernel. Using formula (\ref{MainExactKernel}) for the correlation kernel $K_N(x,y;\mu)$ we write
\begin{equation}
\begin{split}
\frac{1}{N^2}&K_N\left(\frac{x}{N^2},\frac{y}{N^2};\mu(N)\right)\e^{-\frac{x^{\frac{1}{2}}}{N\mu(N)}+\frac{y^{\frac{1}{2}}}{N\mu(N)}}\\
&=\frac{1}{(2\pi i)^2}\frac{1}{x-y}\oint\limits_{\Sigma_N}dt\oint\limits_{\Sigma_N}ds
\frac{\Gamma(-t)\Gamma(-s)}{\Gamma(t+\nu+1)\Gamma(s+\nu+1)}\Phi_N\left(s,t;\mu(N)\right),
\end{split}
\nonumber
\end{equation}
where the function $\Phi_N\left(s,t;\mu(N)\right)$ is defined by
\begin{equation}\label{Phidef}
\begin{split}
\Phi_N\left(s,t;\mu(N)\right)=&\frac{1}{\mu(N)}I_t\left(\frac{1-\mu(N)}{\mu(N)}\frac{x^{\frac{1}{2}}}{N}\right)
K_{s+\nu}\left(\frac{1+\mu(N)}{\mu(N)}\frac{y^{\frac{1}{2}}}{N}\right)
\left(\frac{2}{1-\mu(N)}\frac{x^{\frac{1}{2}}}{N}\right)^t\\
\times&\left(\frac{2}{1+\mu(N)}\frac{y^{\frac{1}{2}}}{N}\right)^{s+\nu}
A_N(s,t;\mu(N)\frac{\Gamma(\nu+N+1)\Gamma(N+1)}{\Gamma(1-t+N)\Gamma(1-s+N)}\e^{-\frac{x^{\frac{1}{2}}}{N\mu(N)}+\frac{y^{\frac{1}{2}}}{N\mu(N)}}.
\end{split}
\nonumber
\end{equation}
As $N\rightarrow\infty$, we have the following asymptotics for the ratio of the Gamma-functions
\begin{equation}\label{GammaRatioAsymptotics}
\frac{\Gamma(\nu+N+1)\Gamma(N+1)}{\Gamma(1-t+N)\Gamma(1-s+N)}=N^{\nu+s+t}\left(1+O\left(N^{-1}\right)\right).
\end{equation}
\ \\

We will now complete part (a) of Theorem \ref{TheoremTransitionToBesselKernel}.
Assume that $\mu(N)=gN^{-\kappa}$, where
$\kappa>1$,
and $g$ is a constant.
Proposition \ref{PropositionBesselLargeArgumentAsymtotics}, Proposition \ref{PropositionAsymptoticsofA}, and  formula (\ref{GammaRatioAsymptotics})
imply that
$$
\underset{N\rightarrow\infty}{\lim}\Phi_N\left(s,t;\mu(N)\right)=\frac{\left(2x^{\frac{1}{2}}\right)^t\left(2y^{\frac{1}{2}}\right)^{s+\nu}\mathcal{P}(s,t,\nu)}{x^{\frac{1}{4}}y^{\frac{1}{4}}},
$$
where $\mathcal{P}(s,t,\nu)$ is defined by equation (\ref{MainPolynomial2}).
Define the contour $\Sigma$ as in eq. (\ref{SIndependent}) to start at $+\infty$ in the upper half plane, to encircle the positive real axis while keeping to the right of $-1/2$, and to return to $+\infty$ in the lower half plane. By modifying the contour $\Sigma_N$ to $\Sigma$ and taking the limit inside the integral we obtain
\begin{equation}\label{BesselKernelContourIntegralRepresentation}
\begin{split}
&\underset{N\rightarrow\infty}{\lim}\left(\frac{1}{N^2}K_N
\left(\frac{x}{N^2},\frac{y}{N^2};gN^{-\kappa}\right)\e^{-\frac{x^{\frac{1}{2}}}{gN^{1-\kappa}}+\frac{y^{\frac{1}{2}}}{gN^{1-\kappa}}}\right)\\
&=-\frac{1}{8(x-y)}\frac{1}{x^{\frac{1}{4}}y^{\frac{1}{4}}}
\frac{1}{(2\pi i)^2}\oint\limits_{\Sigma}dt\oint\limits_{\Sigma}ds
\frac{\Gamma(-t)\Gamma(-s)}{\Gamma(t+\nu+1)\Gamma(s+\nu+1)}\\
&\quad\times\biggl[s(s+\nu)(s+\nu-1)-t(t+\nu)(t+\nu-1)-\nu s(s+\nu)\\
&\quad\quad\quad+\nu t(t+\nu)+t(t+\nu)s-s(s+\nu)t\biggr]\left(2x^{\frac{1}{2}}\right)^t
\left(2y^{\frac{1}{2}}\right)^{s+\nu},
\end{split}
\end{equation}
where we have used the alternative form of the polynomial $\mathcal{P}(s,t,\nu)$ from eq. (\ref{MainPolynomial3}).
We note that  taking the limit inside the integral can be justified in the same way as in
\cite[Sec. 5.2]{KuijlaarsZhang}, using the dominated convergence theorem.

In order to express the right-hand side of the equation (\ref{BesselKernelContourIntegralRepresentation}) in terms of Bessel functions, we will use the following Proposition. We will state it in terms of the new variables $\xi$ and $\eta$
\begin{equation}
\label{newvariables}
\xi=2x^{\frac{1}{2}},\;\;\;\;\eta=2y^{\frac{1}{2}}.
\end{equation}

\begin{prop}\label{PropositionB}
We have for $\xi>0$ and with $\Sigma$ defined as in eq. (\ref{SIndependent}):
\begin{equation}
\begin{split}
\frac{1}{2\pi i}\oint\limits_{\Sigma}\frac{dt\Gamma(-t)}{\Gamma(\nu+1+t)}\xi^t=&
-\xi^{-\frac{\nu}{2}}J_{\nu}\left(2\xi^{\frac{1}{2}}\right),\\
\frac{1}{2\pi i}\oint\limits_{\Sigma}\frac{dt\,t\Gamma(-t)}{\Gamma(\nu+1+t)}\xi^t=&
\ \xi^{-\frac{\nu}{2}+\frac{1}{2}}J_{\nu+1}\left(2\xi^{\frac{1}{2}}\right),\\
\frac{1}{2\pi i}\oint\limits_{\Sigma}\frac{dt\,t(t+\nu)\Gamma(-t)}{\Gamma(\nu+1+t)}\xi^t=&
\ \xi^{-\frac{\nu}{2}+1}J_{\nu}\left(2\xi^{\frac{1}{2}}\right),\\
\frac{1}{2\pi i}\oint\limits_{\Sigma}\frac{dt\,t(t+\nu)(t+\nu-1)\Gamma(-t)}{\Gamma(\nu+1+t)}\xi^t=&
\ \xi^{-\frac{\nu}{2}+\frac{3}{2}}J_{\nu-1}\left(2\xi^{\frac{1}{2}}\right).
\end{split}
\end{equation}
\end{prop}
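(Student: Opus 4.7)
The plan is to evaluate each of the four integrals by the residue theorem. For fixed $\xi>0$, the integrand in each case is meromorphic on $\C$, the only singularities coming from $\Gamma(-t)$, which has simple poles at the non-negative integers $t=0,1,2,\ldots$ with $\Res_{t=k}\Gamma(-t)=(-1)^{k+1}/k!$. The reciprocal $1/\Gamma(\nu+t+1)$ is entire and the polynomial prefactors introduce no further singularities. Since $\Sigma$ encircles precisely the non-negative integers in the counterclockwise sense, each contour integral should equal the sum of residues over $t=0,1,2,\ldots$.

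First I would carry out the residue calculation for the first identity. The residue at $t=k$ equals $(-1)^{k+1}\xi^{k}/[k!\,\Gamma(\nu+k+1)]$, so summing and comparing with the series
$$
J_\nu\bigl(2\xi^{1/2}\bigr)=\sum_{k=0}^{\infty}\frac{(-1)^k}{k!\,\Gamma(\nu+k+1)}\xi^{k+\nu/2}
$$
yields $-\xi^{-\nu/2}J_\nu(2\xi^{1/2})$. For the remaining three identities the polynomial factors in $t$ kill the $k=0$ term, shifting the index by one. The cleanest route is to use $\Gamma(\nu+t+1)=(\nu+t)\Gamma(\nu+t)=(\nu+t)(\nu+t-1)\Gamma(\nu+t-1)$ to cancel the factors $(t+\nu)$ and $(t+\nu)(t+\nu-1)$ against the Gamma function in the denominator. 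This reduces each integral, after substituting $j=k-1$, to a single Bessel series: the factor $t$ produces $J_{\nu+1}$ for the second identity, while the remaining factor of $t$ after cancellation produces $J_\nu$ for the third identity and $J_{\nu-1}$ for the fourth, each multiplied by the expected power of $\xi$.

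The main technical point is justifying that the contour integral along the unbounded contour $\Sigma$ coincides with the sum of all residues at $t=0,1,2,\ldots$. I would truncate $\Sigma$ to a bounded contour enclosing $t=0,\ldots,N$ by closing it with a vertical arc at large $\re(t)$, and show the contribution of the arc vanishes as $N\to\infty$. Using the reflection formula
$$
\frac{\Gamma(-t)}{\Gamma(\nu+t+1)}=-\frac{\pi}{\sin(\pi t)\,\Gamma(t+1)\,\Gamma(\nu+t+1)},
$$
which exhibits the integrand as $1/[\sin(\pi t)\Gamma(t+1)\Gamma(\nu+t+1)]$ times a polynomial in $t$ and the factor $\xi^t$, Stirling's formula gives super-exponential decay of $|\Gamma(t+1)\Gamma(\nu+t+1)|^{-1}$ in any right half-plane (away from the integers where $|\sin(\pi t)|$ is bounded below). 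This decay dominates $|\xi^t|$ for fixed $\xi>0$ and the polynomial prefactors, which justifies closing the contour and interchanging the sum with the integral. Once this is in place, the four residue computations described above complete the proof.
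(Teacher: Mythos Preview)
Your proposal is correct and follows essentially the same approach as the paper: evaluate each integral by the residue theorem using $\Res_{t=k}\Gamma(-t)=(-1)^{k+1}/k!$ and identify the resulting series with the Bessel series expansion. The paper's proof is a two-line sketch that does not spell out the Gamma cancellations for identities two through four, nor does it address convergence on the unbounded contour $\Sigma$; your inclusion of the reflection-formula/Stirling argument for closing the contour is a welcome addition that the paper simply omits.
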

\begin{proof} Simple residue calculations using
$\underset{z=k}{\Res}\;\Gamma(-z)=-\frac{(-1)^{k}}{k!}$
which follows from the standard residue of $\Gamma(z)$ at $z=-k$
and the series expansion of the Bessel function
\begin{equation}
\label{BesselJdef}
J_\nu(z) = \sum_{m=0}^\infty \frac{(-1)^m}{\Gamma(m+1+\nu)}\left( \frac{z}{2}\right)^{2m+\nu}
\end{equation}
for $z>0$
leads to the desired equations.
\end{proof}

Applying  Proposition \ref{PropositionB} to the right-hand side of equation (\ref{BesselKernelContourIntegralRepresentation}) in terms of the variables eq. (\ref{newvariables}), we obtain
\begin{equation}
\label{Besselstep}
\begin{split}
&-\frac{1}{\xi^{\frac{1}{2}}\eta^{\frac{1}{2}}(\xi^2-\eta^2)}
\frac{1}{(2\pi i)^2}\oint\limits_{\Sigma}dt\oint\limits_{\Sigma}ds
\frac{\Gamma(-t)\Gamma(-s)}{\Gamma(t+\nu+1)\Gamma(s+\nu+1)}\\
&\times\biggl[s(s+\nu)(s+\nu-1)-t(t+\nu)(t+\nu-1)-\nu s(s+\nu)\\
&-\nu s(s+\nu)+\nu t(t+\nu)+t(t+\nu)s-s(s+\nu)t\biggr]\xi^t
\eta^{s+\nu}\\
=&\left(\frac{\eta}{\xi}\right)^{\frac{\nu}{2}}\biggl[
-J_{\nu}\left(2\xi^{\frac{1}{2}}\right)\eta^{\frac{3}{2}}J_{\nu-1}\left(2\eta^{\frac{1}{2}}\right)
+\xi^{\frac{3}{2}}J_{\nu-1}\left(2\xi^{\frac{1}{2}}\right)J_{\nu}\left(2\eta^{\frac{1}{2}}\right)\\
&+J_{\nu}\left(2\xi^{\frac{1}{2}}\right)\nu\eta J_{\nu}\left(2\eta^{\frac{1}{2}}\right)
-\nu\xi J_{\nu}\left(2\xi^{\frac{1}{2}}\right)J_{\nu}\left(2\eta^{\frac{1}{2}}\right)\\
&+\xi J_{\nu}\left(2\xi^{\frac{1}{2}}\right)\eta^{\frac{1}{2}}J_{\nu+1}\left(2\eta^{\frac{1}{2}}\right)
-\xi^{\frac{1}{2}}J_{\nu+1}\left(2\xi^{\frac{1}{2}}\right)\eta J_{\nu}\left(2\eta^{\frac{1}{2}}\right)\biggr]\left(-\frac{1}{\xi^{\frac{1}{2}}\eta^{\frac{1}{2}}(\xi^2-\eta^2)}\right)\\
=&\left(\frac{\eta}{\xi}\right)^{\frac{\nu}{2}}\biggl[
-J_{\nu}\left(2\xi^{\frac{1}{2}}\right)\eta^{\frac{3}{2}}J_{\nu-1}\left(2\eta^{\frac{1}{2}}\right)
+\xi^{\frac{3}{2}}J_{\nu-1}\left(2\xi^{\frac{1}{2}}\right)J_{\nu}\left(2\eta^{\frac{1}{2}}\right)\\
&+\xi^{\frac{1}{2}}J_{\nu-1}\left(2\xi^{\frac{1}{2}}\right)\eta J_{\nu}\left(2\eta^{\frac{1}{2}}\right)
-\xi J_{\nu}\left(2\xi^{\frac{1}{2}}\right)\eta^{\frac{1}{2}}J_{\nu-1}\left(2\eta^{\frac{1}{2}}\right)\biggr]\left(-\frac{1}{\xi^{\frac{1}{2}}\eta^{\frac{1}{2}}(\xi^2-\eta^2)}\right).
\end{split}
\end{equation}
In the last step we have used the recursion formula for the Bessel function, see e.g.
\cite{GradshteinRyzhik},
\begin{equation}
\label{BesselId}
\nu J_{\nu}\left(2\xi^{\frac{1}{2}}\right)=\xi^{\frac{1}{2}}J_{\nu-1}\left(2\xi^{\frac{1}{2}}\right)+\xi^{\frac{1}{2}}J_{\nu+1}\left(2\xi^{\frac{1}{2}}\right).
\end{equation}
The last two lines in equation (\ref{Besselstep}) can be simplified by simple factorisation, with $A=2\xi^{\frac12}$ and $B=2\eta^{\frac12}$
\begin{equation}
\begin{split}
&A^3J_{\nu-1}(A)J_{\nu}(B)-A^2BJ_{\nu-1}(B)J_{\nu}(A)+B^2AJ_{\nu-1}(A)J_{\nu}(B)-B^3J_{\nu-1}(B)J_{\nu}(A)\\
=&\left(A^2+B^2\right)\left[AJ_{\nu-1}(A)J_{\nu}(B)-BJ_{\nu-1}(B)J_{\nu}(A)\right].
\end{split}
\end{equation}
Putting all our results together we finally obtain
\begin{equation}\label{BesselContourIntegralRepresentation2}
\begin{split}
&\underset{N\rightarrow\infty}{\lim}
\left[\frac{1}{N^2}K_N\left(\frac{\xi^2}{4N^2},\frac{\eta^2}{4N^2};\frac{g}{N^{\kappa}}\right)\e^{-\frac{\xi}{2gN^{1-\kappa}}+\frac{\eta}{2gN^{1-\kappa}}}\right]\\
&=\left(\frac{\eta}{\xi}\right)^{\frac{\nu}{2}}\left(-\frac{1}{\xi^{\frac{1}{2}}\eta^{\frac{1}{2}}(\xi-\eta)}\right)
\left[2\xi^{\frac{1}{2}}J_{\nu-1}\left(2\xi^{\frac{1}{2}}\right)J_{\nu}\left(2\eta^{\frac{1}{2}}\right)
-2\eta^{\frac{1}{2}}J_{\nu-1}\left(2\eta^{\frac{1}{2}}\right)
J_{\nu}\left(2\xi^{\frac{1}{2}}\right)\right]\\
&=S^{\sqrt{\Bessel}}(\xi,\eta)\ ,
\end{split}
\end{equation}
where $S^{\sqrt{\Bessel}}(\xi,\eta)$ was defined by equation (\ref{BesselKernel}).
Thus Theorem \ref{TheoremTransitionToBesselKernel}, (a) is proved.
\qed

The proof of Theorem \ref{TheoremTransitionToBesselKernel}, (b) is rather similar, except that we do not
apply Proposition \ref{PropositionBesselLargeArgumentAsymtotics}. Here the arguments of the Bessel functions remain finite when scaling $\mu(N)=gN^{-1}$ in eq. (\ref{MainExactKernel}), where $g>0$ is a constant.
We start from the double contour integral representation for the correlation kernel $K_N(x,y;\mu)$ equation (\ref{MainExactKernel}).
Then we use Proposition
\ref{PropositionAsymptoticsofA}, b) for $\kappa=1$ and equation (\ref{GammaRatioAsymptotics}) for the asymptotics of the Gamma-functions to conclude in the same way that equation (\ref{BesselKernelContourIntegralRepresentation})
is replaced by
\begin{equation}
\begin{split}
\underset{N\rightarrow\infty}{\lim}
\left[\frac{1}{N^2}K_N\left(\frac{x}{4N^2},\frac{y}{4N^2};\frac{g}{N}\right)\right]
=
\frac{4}{(x-y)g}
\frac{1}{(2\pi i)^2}\oint\limits_{\Sigma}dt\oint\limits_{\Sigma}ds
\frac{\Gamma(-t)\Gamma(-s)x^{\frac{t}{2}}y^{\frac{s+\nu}{2}}}{\Gamma(t+\nu+1)\Gamma(s+\nu+1)}&\\
\times\left[\mathcal{P}(s,t,\nu)-g(s^2+t^2+\nu s-st)\right]I_t\left(\frac{x^{\frac{1}{2}}}{2g}\right)K_{s+\nu}\left(\frac{y^{\frac{1}{2}}}{2g}\right)&.
\end{split}
\end{equation}
This formula is equivalent to that in the statement of Theorem \ref{TheoremTransitionToBesselKernel}, (b).
\qed

For the proof of part (c) of Theorem \ref{TheoremTransitionToBesselKernel} we  refer the reader to the proof of Theorem 3.9 in \cite{AkemannStrahov}.
Starting from the
representation of  the correlation kernel eq. (10.1) in \cite{AkemannStrahov} one can repeat all calculations identically. The only difference
is to demonstrate that the term proportional to $B(s,t;N)$ in eq. (10.2) in \cite{AkemannStrahov} still vanishes in the large-$N$ limit in this scaling regime.
It is not hard to check  that $B(s,t; N)$ is of order ${O}(N^{-1})$. Together with $\mu(N)=gN^{-\kappa}$ for $1>\kappa\geq0$ this is sufficient to reach the same conclusions as in \cite{AkemannStrahov} for $\kappa=0$, that the limiting kernel is given by eq. (\ref{SIndependentDefiniteIntegral}).
\qed

\section{Proof of Theorem \ref{TheoremTransition}}\label{ProofInterpolatingKernel}
In this section we consider the new limiting kernel $\mathbb{S}(x,y;g)$ obtained in Theorem  \ref{TheoremTransitionToBesselKernel}, (b). The kernel $\mathbb{S}(x,y;g)$ is defined explicitly by equation
(\ref{Skernel}). Our aim is to show that the kernel $\mathbb{S}(x,y;g)$ defines a determinantal point process
interpolating between the Bessel-kernel process, and the point process describing the hard edge
scaling limit for the product of two independent Gaussian complex matrices.  The  Bessel-kernel process is defined by the kernel $S^{\sqrt{\Bessel}}(x,y)$,
and the  point process describing the hard edge
scaling limit for the product of two independent Gaussian complex matrices that we consider is defined by the kernel $S_{\nu,0}^{\Ind}(x,y)$. Thus it is enough to show that equations
(\ref{TransitionKuijlaarsZhang}) and (\ref{deformation}) hold true.

Let us first prove equation (\ref{TransitionKuijlaarsZhang}). We use equation (\ref{SIntegrableForm}) for the kernel $\mathbb{S}(x,y;g)$, and
note that in order to investigate the asymptotics of the kernel $\mathbb{S}(x,y;g)$ as $g\rightarrow+\infty$ it is convenient to use different (from those given by equations (\ref{FunctionF1})-(\ref{FunctionF4})) integral representations of the functions  $F_1(y;g)$, $F_2(y;g)$, $F_3(y;g)$ and $F_4(y;g)$.
\begin{prop}\label{Prop4.1}
The following formulae hold true
with $c>0$
\begin{equation}
F_1(y;g)=-\frac{(4g)^{\nu}}{2\Gamma(1+\nu)}\frac{1}{2\pi i}\int\limits_{c-i\infty}^{c+i\infty}ds\Gamma(s+\nu)\Gamma(s)
\Phi\left(\begin{array}{c}
             s+\nu \\
             \nu+1
           \end{array}
\biggl|-4g\right)\left(\frac{y}{4g}\right)^{-2s},
\end{equation}
\begin{equation}
F_2(y;g)=\frac{(4g)^{\nu+1}}{2\Gamma(\nu+2)}\frac{1}{2\pi i}
\int\limits_{c-i\infty}^{c+i\infty}ds\Gamma(s+\nu+1)\Gamma(s)
\Phi\left(\begin{array}{c}
             s+\nu+1 \\
             \nu+2
           \end{array}
\biggl|-4g\right)\left(\frac{y}{4g}\right)^{-2s},
\end{equation}
\begin{equation}
F_3(y;g)=\frac{(4g)^{\nu+1}}{2\Gamma(\nu+1)}\frac{1}{2\pi i}\int\limits_{c-i\infty}^{c+i\infty}ds\Gamma(s+\nu+1)\Gamma(s)
 \Phi\left(\begin{array}{c}
             s+\nu+1 \\
             \nu+1
           \end{array}
\biggl|-4g\right)\left(\frac{y}{4g}\right)^{-2s},
\end{equation}
and
\begin{equation}
F_4(y;g)=\frac{(4g)^{\nu+1}}{2\Gamma(\nu)}\frac{1}{2\pi i}\int\limits_{c-i\infty}^{c+i\infty}ds\Gamma(s+\nu+1)\Gamma(s)
\Phi\left(\begin{array}{c}
             s+\nu+1 \\
             \nu
           \end{array}
\biggl|-4g\right)\left(\frac{y}{4g}\right)^{-2s}.
\end{equation}
Here $c>0$, and $\Phi\left(\begin{array}{c}
             a \\
             b
           \end{array}
\biggl|x\right)$ is the confluent hypergeometric function, see \cite{NIST}, 
$$
\Phi\left(\begin{array}{c}
             a \\
             b
           \end{array}
\biggl|x\right)=\sum\limits_{k=0}^{\infty}\frac{(a)_k}{(b)_k}\frac{x^k}{k!}.
$$
\end{prop}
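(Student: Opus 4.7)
The plan is to convert each of the four contour integrals $F_j(y;g)$ from Subsection~\ref{integrable} into a Mellin--Barnes integral along a vertical line in three steps: (i) expand the contour integral around $\Sigma$ as a residue sum, (ii) insert a Mellin--Barnes representation of each resulting $K$-Bessel function, and (iii) resum the inner series into a confluent hypergeometric function $\Phi$.

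For step (i), the only singularities of each integrand inside $\Sigma$ come from the simple poles of $\Gamma(-s)$ at $s=n\in\{0,1,2,\ldots\}$, where $\Res_{s=n}\Gamma(-s)=(-1)^{n+1}/n!$. The polynomial prefactors $1,s,s(s+\nu),s(s+\nu)(s+\nu-1)$ in $F_1,\ldots,F_4$ cancel the first few residues and shift the summation index, so a short bookkeeping calculation shows that each $F_j$ can be written in the form
\begin{equation*}
F_j(y;g)=(-1)^{\delta_j}\sum_{k=0}^{\infty}\frac{(-1)^{k}}{k!\,\Gamma(k+\nu+\beta_j+1)}\,y^{k+\nu+\alpha_j}\,K_{k+\nu+\alpha_j}\!\left(\frac{y}{2g}\right),
\end{equation*}
with explicit small integer shifts $\alpha_j,\beta_j$ read off from the prefactor of $F_j$.

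Step (ii) invokes the classical Mellin--Barnes representation
\begin{equation*}
K_{\mu}(z)=\frac{1}{2}\Bigl(\frac{z}{2}\Bigr)^{-\mu}\frac{1}{2\pi i}\int_{c-i\infty}^{c+i\infty}\Gamma(u)\Gamma(u+\mu)\Bigl(\frac{z^{2}}{4}\Bigr)^{-u}du,\qquad c>\max\{0,-\re\mu\},
\end{equation*}
which follows by Mellin inversion from $\int_{0}^{\infty}t^{\mu-1}K_{\nu}(t)\,dt=2^{\mu-2}\Gamma((\mu-\nu)/2)\Gamma((\mu+\nu)/2)$. Setting $z=y/(2g)$ rewrites each $y^{k+\nu+\alpha_j}K_{k+\nu+\alpha_j}(y/(2g))$ as $\tfrac{1}{2}(4g)^{k+\nu+\alpha_j}$ times a vertical-contour integral whose only $k$-dependence is in the single factor $\Gamma(u+\nu+\alpha_j+k)$. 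In step (iii) one interchanges the sum over $k$ with the Mellin--Barnes integral; the Pochhammer identities $\Gamma(u+\nu+\alpha_j+k)=\Gamma(u+\nu+\alpha_j)(u+\nu+\alpha_j)_k$ and $\Gamma(\nu+\beta_j+1+k)=\Gamma(\nu+\beta_j+1)(\nu+\beta_j+1)_k$ then collapse the resulting series to
\begin{equation*}
\frac{\Gamma(u+\nu+\alpha_j)}{\Gamma(\nu+\beta_j+1)}\,\Phi\!\left(\begin{array}{c}u+\nu+\alpha_j\\ \nu+\beta_j+1\end{array}\biggl|-4g\right).
\end{equation*}
The four shift pairs $(\alpha_j,\beta_j)$ produced in step (i) yield precisely the four parameter pairs and prefactors $-(4g)^{\nu}/(2\Gamma(1+\nu))$, $(4g)^{\nu+1}/(2\Gamma(\nu+2))$, etc., asserted in the statement; any $c>0$ is admissible since $\nu\geq0$ and all $\alpha_j\geq0$.

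The content above is largely routine; the only real analytic point is the sum--integral interchange in step (iii). Along the vertical line $\re u=c$ the Mellin--Barnes integrand decays like a power of $|t|$ times $e^{-\pi|t|}$ by Stirling, hence is absolutely integrable for each $k$. On that contour the Pochhammer symbol $(u+\nu+\alpha_j)_k$ grows only polynomially in $k$, while the outer factor $(-4g)^{k}/k!$ contributes factorial decay. Together these bounds produce an absolutely convergent dominant, so Fubini applies and the interchange is legitimate. This is the only nontrivial technical step; once it is in place, the identification of the four formulas in the statement is immediate.
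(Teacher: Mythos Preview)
Your three-step scheme---residue expansion, Mellin--Barnes representation of $K_\mu$, interchange and resummation into $\Phi$---is exactly the paper's proof. The paper phrases step (ii) as the Meijer $G^{2,0}_{0,2}$ identity for $K_\mu$, but that is the same vertical-line integral you write down; and the interchange in step (iii) is deferred there to Appendix~\ref{AppB}, which carries out precisely the Fubini argument you sketch.

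One correction to your justification in step (iii): the statement that ``the Pochhammer symbol $(u+\nu+\alpha_j)_k$ grows only polynomially in $k$'' is false---for fixed $u$ it grows factorially, like $k!\,k^{u+\nu+\alpha_j-1}$. What actually grows polynomially is the \emph{ratio} $(u+\nu+\alpha_j)_k/(\nu+\beta_j+1)_k\sim k^{u+\alpha_j-\beta_j-1}$, and it is this ratio, combined with the $(-4g)^k/k!$, that gives a summable dominant. The paper's Appendix~\ref{AppB} makes this precise via the elementary bound $|\Gamma(c+\nu+n+iy)|\le\Gamma(c+\nu+n)$ (inequality 5.6.6 in \cite{NIST}) together with the exponential decay of $|\Gamma(c+iy)|$ in $y$; this cleanly separates the $k$- and $y$-dependence and reduces the absolute-convergence check to the convergence of $\sum_n B^n\Gamma(c+\nu+n)/\bigl(n!\,\Gamma(1+\nu+n)\bigr)$. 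With that fix your argument is complete.
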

\begin{proof}
Let us prove the formula for $F_1(y;g)$. By the Residue Theorem
$\underset{z=k}{\Res}\;\Gamma(-z)=-\frac{(-1)^{k}}{k!}$, and
we can write
$$
F_1(y;g)=-\sum\limits_{k=0}^{\infty}\frac{(-1)^k}{k!}\frac{y^{k+\nu}}{\Gamma(k+\nu+1)}K_{k+\nu}\left(\frac{y}{2g}\right).
$$
Note that \cite[Sec. 9.34]{GradshteinRyzhik}
$$
K_{k+\nu}\left(\frac{y}{2g}\right)=\frac{1}{2}\left(\frac{y}{4g}\right)^{-k-\nu}G^{2,0}_{0,2}\left(\begin{array}{cc}
- &  \\
k+\nu & 0
\end{array}
\biggl|\frac{y^2}{16g^2}\right).
$$
Then we use the formula
from \cite{NIST} with $c>0$,
$$
 G^{2,0}_{0,2}\left(\begin{array}{cc}
- &  \\
b_1 & b_2
\end{array}
 \biggl|x\right)=\frac{1}{2\pi i}\int\limits_{c-i\infty}^{c+i\infty}\Gamma(s+b_1)\Gamma(s+b_2)x^{-s}ds,
$$
and the interchange the sum and the integral.
The justification for this is given in Appendix \ref{AppB}.
As a result we obtain the formula for $F_1(y; g)$ stated in Proposition \ref{Prop4.1}. The formulae for $F_1(y; g)$, $F_2(y; g)$, $F_3(y;g)$,
and $F_4(y;g)$ can be derived in the same way.
\end{proof}
Next, we give  integral representations  for the  functions $\Phi_1(x;g)$, $\Phi_2(x;g)$, $\Phi_3(x;g)$, and $\Phi_4(x;g)$ from eqs. (\ref{Phi1})-- (\ref{Phi4}), and
$F_1(y; g)$, $F_2(y; g)$, $F_3(y; g)$,
and $F_4(y; g)$ from above valid asymptotically as $g\rightarrow+\infty$.
\begin{prop}\label{PropositionFAsymptotics} As $g\rightarrow+\infty$,
$$
\Phi_1(x;g)\sim\frac{1}{2\pi i}\oint\limits_{\Sigma}dt\frac{\Gamma(-t)}{\Gamma(t+\nu+1)\Gamma(t+1)}\left(\frac{x^2}{4g}\right)^t,
$$
$$
\Phi_2(x;g)\sim\frac{1}{2\pi i}\oint\limits_{\Sigma}dt\frac{t\Gamma(-t)}{\Gamma(t+\nu+1)\Gamma(t+1)}\left(\frac{x^2}{4g}\right)^t,
$$
$$
\Phi_3(x;g)\sim\frac{1}{2\pi i}\oint\limits_{\Sigma}dt\frac{t(t+\nu)\Gamma(-t)}{\Gamma(t+\nu+1)\Gamma(t+1)}\left(\frac{x^2}{4g}\right)^t,
$$
and
$$
\Phi_4(x;g)\sim\frac{1}{2\pi i}\oint\limits_{\Sigma}dt\frac{t(t+\nu)(t+\nu-1)\Gamma(-t)}{\Gamma(t+\nu+1)\Gamma(t+1)}\left(\frac{x^2}{4g}\right)^t,
$$
Moreover, as $g\rightarrow+\infty$,
$$
F_1(y;g)\sim-\frac{1}{2}\frac{1}{2\pi i}\int\limits_{c-i\infty}^{c+i\infty}ds\frac{\Gamma(s+\nu)\Gamma(s)}{\Gamma(1-s)}\left(\frac{y^2}{4g}\right)^{-s},
$$
$$
F_2(y;g)\sim\frac{1}{2}\frac{1}{2\pi i}\int\limits_{c-i\infty}^{c+i\infty}ds\frac{\Gamma(s+\nu+1)\Gamma(s)}{\Gamma(1-s)}\left(\frac{y^2}{4g}\right)^{-s},
$$
$$
F_3(y;g)\sim-\frac{1}{2}\frac{1}{2\pi i}\int\limits_{c-i\infty}^{c+i\infty}ds\frac{s\Gamma(s+\nu+1)\Gamma(s)}{\Gamma(1-s)}\left(\frac{y^2}{4g}\right)^{-s},
$$
and
$$
F_4(y;g)\sim\frac{1}{2}\frac{1}{2\pi i}\int\limits_{c-i\infty}^{c+i\infty}ds\frac{s(s+1)\Gamma(s+\nu+1)\Gamma(s)}{\Gamma(1-s)}\left(\frac{y^2}{4g}\right)^{-s}.
$$
\end{prop}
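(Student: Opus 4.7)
\textbf{Proof plan for Proposition \ref{PropositionFAsymptotics}.}

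The proposition splits naturally into two independent halves, one for the $\Phi_j$ family and one for the $F_j$ family, both of which amount to substituting a leading-order asymptotic expansion for the Bessel or confluent hypergeometric function inside the integrand and then justifying the interchange of the $g\to\infty$ limit with the contour integral.

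For the $\Phi_j(x;g)$, $j=1,2,3,4$, the argument $\tfrac{x}{2g}$ of the modified Bessel function $I_t$ tends to $0$ as $g\to\infty$. I would insert the small-argument expansion obtained directly from the power series (\ref{BesselFunctionI}),
$$
I_t\!\left(\frac{x}{2g}\right)=\frac{1}{\Gamma(t+1)}\left(\frac{x}{4g}\right)^{t}\left(1+O(g^{-2})\right),
$$
into the definitions (\ref{Phi1})--(\ref{Phi4}). The combination $x^{t}\bigl(x/(4g)\bigr)^{t}=(x^{2}/(4g))^{t}$ then produces exactly the integrals claimed in the proposition, with the extra Gamma factor $\Gamma(t+1)^{-1}$ supplied by the $m=0$ term of the series. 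The polynomial prefactors $t$, $t(t+\nu)$, $t(t+\nu)(t+\nu-1)$ pass through unchanged, yielding the stated formulas for $\Phi_2,\Phi_3,\Phi_4$.

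For the $F_j(y;g)$, I would start from the Mellin--Barnes representations provided by Proposition \ref{Prop4.1} and apply the standard large-argument asymptotic of the confluent hypergeometric function,
$$
\Phi\!\left(\begin{array}{c}a\\ b\end{array}\biggl|z\right)\sim\frac{\Gamma(b)}{\Gamma(b-a)}(-z)^{-a}\qquad\text{as }z\to-\infty,
$$
with $z=-4g$ and $(a,b)$ specialised to each of the four pairs occurring in Proposition \ref{Prop4.1}. For $F_1$ this gives $\Phi\sim \Gamma(\nu+1)/\Gamma(1-s)\cdot(4g)^{-s-\nu}$, whose prefactor cancels the $(4g)^\nu/\Gamma(\nu+1)$ in front, and whose power of $4g$ combines with $(y/(4g))^{-2s}$ to give $(y^{2}/(4g))^{-s}$. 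Entirely analogous cancellations occur for $F_2, F_3, F_4$; for $F_3$ and $F_4$ one uses the identities $\Gamma(s+\nu+1)/\Gamma(-s+1-1)$-type shifts (equivalently, $(a)_k$ shifts of the numerator parameter) to convert the residual Gamma ratios into the polynomial factors $s$ and $s(s+1)$.

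The main obstacle is justifying the exchange of $\lim_{g\to\infty}$ with the contour integrals. For the $\Phi_j$, after deforming the unbounded contour $\Sigma$ to a bounded loop encircling $\{0,1,\ldots,N\}$ for $N$ large and estimating the tails using Stirling decay of $\Gamma(-t)/\Gamma(t+\nu+1)\Gamma(t+1)$ along the rays going to $+\infty$, dominated convergence applies immediately on the bounded portion. For the $F_j$, the subtlety is that the asymptotic $\Phi(a;b;-4g)=\Gamma(b)/\Gamma(b-a)\cdot(4g)^{-a}(1+O(g^{-1}))$ carries an $s$-dependent error, and one must dominate this uniformly along the vertical line $\Re s=c$. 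I would do this via the Euler integral representation
$$
\Phi(a;b;z)=\frac{\Gamma(b)}{\Gamma(a)\Gamma(b-a)}\int_0^1 e^{zu}u^{a-1}(1-u)^{b-a-1}\,du,
$$
valid for $0<\Re a<\Re b$, rescale $u\mapsto v/(4g)$ to expose the leading term plus a genuinely integrable remainder, and combine the resulting bound with the Stirling decay of $\Gamma(s+\nu)\Gamma(s)/\Gamma(1-s)$ along $\Re s=c$ to produce an $s$-integrable majorant. Dominated convergence then concludes the proof.
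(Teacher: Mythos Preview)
Your overall strategy coincides with the paper's: for the $\Phi_j$ replace $I_t(x/(2g))$ by the leading term of its power series, and for the $F_j$ insert the large-argument asymptotic of the confluent hypergeometric function into the Mellin--Barnes representations of Proposition~\ref{Prop4.1}. For the $F_j$ family the paper does exactly what you outline and in fact gives fewer details than you do---it simply quotes $\Phi(a;b;x)=\frac{\Gamma(b)}{\Gamma(b-a)}(-x)^{-a}(1+O(|x|^{-1}))$ without discussing $s$-uniformity, so your Euler-integral argument is a legitimate way to complete that step (observe that it forces $0<c<1$, which is compatible with the statement).

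Where you differ is in the justification for the $\Phi_j$. Rather than keeping the unbounded Hankel contour and arguing via deformation plus dominated convergence, the paper first applies the residue theorem to rewrite, e.g.,
\[
\Phi_1(x;g)=-\sum_{k\ge 0}\frac{(-1)^k x^k}{k!\,\Gamma(k+\nu+1)}\,I_k\!\left(\frac{x}{2g}\right),
\]
then inserts the \emph{full} power series (\ref{BesselFunctionI}) for $I_k$ and splits off the $m=0$ contribution $\varphi(x;g)$ (which is precisely the residue expansion of the claimed right-hand side) from the remainder $\epsilon(x;g)$. Showing $\epsilon(x;g)=o(\varphi(x;g))$ then reduces to bounding two explicit absolutely convergent numerical series, with no contour integrals in sight. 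This sidesteps the issue you flagged---controlling the $t$-dependent error in the Bessel expansion uniformly along $\Sigma$---and is somewhat cleaner than your plan of truncating $\Sigma$ to a loop around $\{0,\dots,N\}$ and estimating tails, since the tails of $\Sigma$ still thread between the poles of $\Gamma(-t)$ at $N+1,N+2,\dots$ and require a careful Stirling/reflection-formula bound. Your route should work, but the paper's series reformulation makes the error control entirely elementary.
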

\begin{proof}
The asymptotic formulae for the functions $\Phi_1(x;g)$, $\Phi_2(x;g)$, $\Phi_3(x;g)$, and $\Phi_4(x;g)$ stated in Proposition \ref{PropositionFAsymptotics} can be obtained replacing $I_t\left(\frac{x}{2g}\right)$
in equations (\ref{Phi1})-(\ref{Phi4}) by the first term of its series representation eq. (\ref{BesselFunctionI}). 
Let us derive in detail the asymptotic formula for $\Phi_1(x;g)$. Using the Residue Theorem we rewrite
$\Phi_1(x;g)$ as 
$$
\Phi_1(x;g)=-\sum\limits_{k=0}^{\infty}\frac{(-1)^kx^k}{k!\Gamma(k+\nu+1)}I_{k}\left(\frac{x}{2g}\right).
$$
Inserting the series representation eq.  (\ref{BesselFunctionI}) for  $I_{k}\left(\frac{x}{2g}\right)$ we get
\begin{equation}
\begin{split}
&\Phi_1(x;g)=
-\sum\limits_{k=0}^{\infty}\frac{(-1)^kx^{2k}}{k!\Gamma(k+\nu+1)(4g)^k}\left(\sum\limits_{m=0}^{\infty}\frac{1}{\Gamma(k+m+1)\,m!}\left(\frac{x}{4g}\right)^{2m}\right)
=\varphi(x;g)+\epsilon(x;g),
\end{split}
\end{equation}
where
\begin{equation}\label{0v}
\varphi(x;g)=-\sum\limits_{k=0}^{\infty}\frac{(-1)^kx^{2k}}{(k!)^2\Gamma(k+\nu+1)(4g)^k},
\end{equation}
is the term with $m=0$, and
$$
\epsilon(x;g)=-\sum\limits_{k=0}^{\infty}\frac{(-1)^kx^{2k}}{k!\Gamma(k+\nu+1)}\left(\sum\limits_{m=1}^{\infty}\frac{1}{\Gamma(k+m+1)\, m!}\left(\frac{x}{4g}\right)^{2m}\right).
$$
Note that the function $\varphi(x;g)$ can be represented as
$$
\varphi(x;g)=\frac{1}{2\pi i}\oint\limits_{\Sigma}dt\frac{\Gamma(-t)}{\Gamma(t+\nu+1)\Gamma(t+1)}\left(\frac{x^2}{4g}\right)^t,
$$
i.e. $\varphi(x;g)$ coincides with the right-hand side of the asymptotic formula for $\Phi_1(x;g)$ in the statement of Proposition \ref{PropositionFAsymptotics}.
We conclude that it is enough to check that
\begin{equation}\label{0000}
\epsilon(x;g)=o\left(\varphi(x;g)\right), 
\end{equation}
as $g\rightarrow+\infty$. To show this we take into account that the series
$$
\sum\limits_{k=0}^{\infty}\frac{(-1)^k}{(k!)^2\Gamma(k+\nu+1)}z^k
$$
converges uniformly in any closed ball with its center at $0$. Therefore the series on the right-hand of equation (\ref{0v}) 
defines a continuous function of $\frac{x^2}{4g}$. This gives
that for any fixed $x>0$ 
\begin{equation}\label{limitphi}
\underset{g\rightarrow+\infty}{\lim}\varphi(x;g)=-\frac{1}{\Gamma(\nu+1)}.
\end{equation}
On the other hand, we have
\begin{equation}\label{iii}
\left|\epsilon(x;g)\right|\leq\left(\sum\limits_{k=0}^{\infty}\frac{x^{2k}}{k!\Gamma(k+\nu+1)(4g)^k}\right)
\left(\sum\limits_{m=1}^{\infty}\frac{1}{(m!)^2}\left(\frac{x}{4g}\right)^{2m}\right).
\end{equation}
Both series 
$$
\sum\limits_{k=0}^{\infty}\frac{1}{k!\Gamma(k+\nu+1)}z^k
$$
and
$$
\sum\limits_{m=1}^{\infty}\frac{1}{(m!)^2}z^k
$$
converge uniformly in any closed ball with its center at $0$. Therefore, both two series on the right-hand side
of inequality (\ref{iii}) define continuous functions of $\frac{x^2}{4g}$ and $\frac{x}{4g}$ correspondingly, so 
\begin{equation}\label{limite}
\underset{g\rightarrow+\infty}{\lim}\epsilon(x;g)=0.
\end{equation}
From the limiting relations (\ref{limitphi}), (\ref{limite}) we conclude that (\ref{0000}) holds true.
The asymptotic formulae  for $\Phi_2(x;g)$, $\Phi_3(x;g)$ and $\Phi_4(x;g)$ follow in the same way.

To derive the asymptotic formulae for the functions
$F_1(y; g)$, $F_2(y; g)$, $F_3(y; g)$,
and $F_4(y; g)$ from Proposition \ref{Prop4.1} 
use the asymptotic formula for the confluent hypergeometric function as $\re(x)\rightarrow-\infty$, namely \cite{NIST}
$$
\Phi\left(\begin{array}{c}
             a \\
             b
           \end{array}
\biggl|x\right)=\frac{\Gamma(b)}{\Gamma(b-a)}\left(-x\right)^{-a}\left(1+O\left(|x|^{-1}\right)\right),\;\;\;\re (x)\rightarrow-\infty.
$$
\end{proof}

Now we are ready to show that equation (\ref{TransitionKuijlaarsZhang}) holds true.
We use equation (\ref{SIntegrableForm}) for the correlation kernel $\mathbb{S}(x,y;g)$,
which represents $\mathbb{S}(x,y;g)$ as a sum of two terms. Considering the scaling limit
of $\mathbb{S}(x,y;g)$ as on the left-hand side of equation (\ref{TransitionKuijlaarsZhang}) (and taking into account the asymptotic expressions obtained
in Proposition \ref{PropositionFAsymptotics})
we see that the first term (= first two lines) on the right-hand side
of equation (\ref{SIntegrableForm}) gives zero contribution. Proposition \ref{PropositionFAsymptotics} implies that the second term (= third and last line) on the right-hand side of equation
(\ref{SIntegrableForm}) is asymptotically equal to
\begin{equation}
\begin{split}
&\frac{2}{(x^2-y^2)}
\frac{1}{(2\pi i)^2}\oint\limits_{\Sigma}dt\int\limits_{c-i\infty}^{c+i\infty}ds
\frac{\Gamma(-t)}{\Gamma(t+\nu+1)\Gamma(t)}\frac{\Gamma(s+\nu)\Gamma(s)}{\Gamma(1-s)}\\
&\times\left[t(t+\nu)+s(s+\nu)+ts\right]\left(\frac{x^2}{4g}\right)^t\left(\frac{y^2}{4g}\right)^{-s},
\end{split}
\nonumber
\end{equation}
as $g\rightarrow+\infty$. This implies the limiting relation
\begin{equation}\label{ScalingLimitasDoubleIntegral}
\begin{split}
&\underset{g\rightarrow+\infty}{\lim}\left[2g\mathbb{S}\left(2(gx)^{\frac{1}{2}},2(gy)^{\frac{1}{2}};g\right)\right]\\
&=\frac{1}{(x-y)}
\frac{1}{(2\pi i)^2}\oint\limits_{\Sigma}dt\int\limits_{c-i\infty}^{c+i\infty}ds
\frac{\Gamma(-t)}{\Gamma(t+\nu+1)\Gamma(t)}\frac{\Gamma(s+\nu)\Gamma(s)}{\Gamma(1-s)}\\
&\quad\times\left[t(t+\nu)+s(s+\nu)+ts\right]x^ty^{-s}.
\end{split}
\end{equation}
As it is shown by the authors in
\cite[Sec.10]{AkemannStrahov},  the right-hand side of equation (\ref{ScalingLimitasDoubleIntegral}) can be rewritten as
$$
\int\limits_0^1G^{1,0}_{0,3}\left(\begin{array}{ccc}
                       & - &  \\
                      0, & -\nu, & 0
                    \end{array}
\biggr|ux\right)G^{2,0}_{0,3}\left(\begin{array}{ccc}
                       & - &  \\
                      \nu,&0, & 0
                    \end{array}
\biggr|uy\right)du.
$$
Now, equation (\ref{SIndependentDefiniteIntegral}) says that the right-hand side of equation (\ref{ScalingLimitasDoubleIntegral})
is equal to $S_{\nu,0}^{\Ind}(x,y)$,
and
equation (\ref{TransitionKuijlaarsZhang})
follows.

Finally, let us show that equation (\ref{deformation}) holds true. For this purpose we use the asymptotic expansions of the Bessel functions $I_{\nu}(z)$
and $K_{\nu}(z)$ for large values of $|z|$, see eqs. (\ref{KIasymptotic}) below, and find that as $g\rightarrow 0$, the right-hand side of equation (\ref{Skernel})
turns into the right-hand side of equation
(\ref{Besselstep}) and thus equation (\ref{BesselContourIntegralRepresentation2}) (where $\xi$ is replaced by $x$, and $\eta$ is replaced by $y$).
This observation gives equation (\ref{deformation}). Theorem \ref{TheoremTransition} is proved.
\qed

\section{Proof of Proposition \ref{PropositionDensity1} and Proposition \ref{PropositionDensity2}}\label{ProofInterpolatingDensity}

Equation (\ref{SXX}) for the kernel at equal arguments
$\mathbb{S}(x,x;g)$  which gives the one-point function
or microscopic density
can be obtained from formula (\ref{Skernel}).
It is clear that the limiting kernel at equal arguments is regular, and that hence the integral in eq.  (\ref{Skernel}) vanishes at equal arguments in order to cancel the pole at $x=y$ in front of the integrand,
\begin{equation}\label{SkernelIntegrand}
\begin{split}
0=&
\frac{1}{(2\pi i)^2}\oint\limits_{\Sigma}dt\oint\limits_{\Sigma}ds
\frac{\Gamma(-t)\Gamma(-s)x^{t+s+\nu}}{\Gamma(t+\nu+1)\Gamma(s+\nu+1)}\\
&\times\left[\mathcal{P}(s,t,\nu)-g(s^2+t^2+\nu s-st)\right]I_t\left(\frac{x}{2g}\right)K_{s+\nu}\left(\frac{x}{2g}\right).
\end{split}
\end{equation}
Taking into account the identity  for Bessel functions \cite{GradshteinRyzhik}
$$
\frac{\partial}{\partial x}\left(x^tI_t\left(\frac{x}{2g}\right)\right)=\frac{x^t}{2g}I_{t-1}\left(\frac{x}{2g}\right),
$$
as well as eq. (\ref{SkernelIntegrand}), the application of l'H$\hat{\mbox{o}}$pital's rule to the right-hand side of equation (\ref{Skernel}) yields formula (\ref{SXX}).
Alternatively  eq. (\ref{SkernelIntegrand}) can be shown to directly result from the limit of eq. (\ref{Integrand}) and
eq.  (\ref{SXX}) as the limit of eq. (\ref{MainExactKernelyy}), following Section \ref{Proof-Thm1.5}.
Proposition \ref{PropositionDensity1} is proved.\qed

Our next aim is to show that the one-point correlation function $\mathbb{S}(x,x;g)$ defined by equation (\ref{SXX})
converges to the one-point correlation function $S^{\sqrt{\Bessel}}(x,x)$ defined by equation (\ref{SBesselXX}) as $g\rightarrow 0$.
Starting from  equation (\ref{SXX}) we need
the asymptotic expansion of the modified Bessel functions of first and second kind \cite{GradshteinRyzhik} for $z>0$
\begin{eqnarray}\label{KIasymptotic}
I_{t}(z)&\sim&\frac{\e^z}{\sqrt{2\pi} {z}^{\frac{1}{2}}}\left(1-\frac{\Gamma(t+\frac{3}{2})}{2z\Gamma(t-\frac{1}{2})}
+\frac{\Gamma(t+\frac{5}{2})}{2(2z)^2\Gamma(t-\frac{3}{2})}
+O\left(z^{-2}\right)\right),\nonumber\\
K_{s+\nu}(z)&\sim&\sqrt{\frac{\pi}{2}}\frac{\e^{-z}}{{z}^{\frac{1}{2}}}
\left(1+\frac{\Gamma(s+\nu+\frac{3}{2})}{2z\Gamma(s+\nu-\frac{1}{2})}
+\frac{\Gamma(s+\nu+\frac{5}{2})}{2(2z)^2\Gamma(s+\nu-\frac{3}{2})}
+O\left(z^{-2}\right)\right),
\end{eqnarray}
as $z\rightarrow+\infty$. This gives in particular
\begin{eqnarray}\label{WXX}
I_{t}\left(\frac{x}{2g}\right)K_{s+\nu}\left(\frac{x}{2g}\right)
&=&
\frac{g}{x}-\frac{g^2}{x^2}\left(
\frac{\Gamma(t+\frac{3}{2})}{\Gamma(t-\frac{1}{2})}
-\frac{\Gamma(s+\nu+\frac{3}{2})}{\Gamma(s+\nu-\frac{1}{2})}\right)\nonumber\\
&&+\frac{g^3}{x^3}\left(
\frac{\Gamma(s+\nu+\frac{5}{2})}{2\Gamma(s+\nu-\frac{3}{2})}
+\frac{\Gamma(t+\frac{5}{2})}{2\Gamma(t-\frac{3}{2})}
-\frac{\Gamma(s+\nu+\frac{3}{2})\Gamma(t+\frac{3}{2})}{\Gamma(s+\nu-\frac{1}{2})\Gamma(t-\frac{1}{2})}
\right)\nonumber\\
&&+O\left(\frac{g^4}{x^4}\right)\nonumber\\
&=&\frac{g}{x}+\frac{g^2}{x^2}\left((s+\nu)^2-t^2\right)\nonumber\\
&&+\frac{g^3}{x^3}\left(
\left((s+\nu)^2-t^2\right)^2 -2\left((s+\nu)^2+t^2\right)+1
\right)
+O\left(\frac{g^4}{x^4}\right).
\end{eqnarray}
Applying this asymptotic expansion to eq. (\ref{SkernelIntegrand})
we obtain
\begin{equation}
\begin{split}
0=&
\frac{1}{(2\pi i)^2}\oint\limits_{\Sigma}dt\oint\limits_{\Sigma}ds
\frac{\Gamma(-t)\Gamma(-s)x^{t+s+\nu}}{\Gamma(t+\nu+1)\Gamma(s+\nu+1)}
\left[\mathcal{P}(s,t,\nu)-g(s^2+t^2+\nu s-st)\right]
\nonumber\\
&\times\left(
\frac{g}{x}+\frac{g^2}{x^2}\left((s+\nu)^2-t^2\right)+
+\frac{g^3}{x^3}\left(
\left((s+\nu)^2-t^2\right)^2 -2\left((s+\nu)^2+t^2\right)+1
\right)
+O\left(\frac{g^4}{x^4}\right)
\right).
\end{split}
\end{equation}
By comparing order by order in $g$ the following integral identities are obtained at order $O(g)$:
\begin{equation}\label{Identity1}
0=\oint\limits_{\Sigma}dt\oint\limits_{\Sigma}ds
\frac{\Gamma(-t)\Gamma(-s)x^{t+s+\nu}}{\Gamma(t+\nu+1)\Gamma(s+\nu+1)}
\frac1x\mathcal{P}(s,t,\nu)\ ,
\end{equation}
at order $O(g^2)$:
\begin{equation}\label{Identity2}
0=\oint\limits_{\Sigma}dt\oint\limits_{\Sigma}ds
\frac{\Gamma(-t)\Gamma(-s)x^{t+s+\nu}}{\Gamma(t+\nu+1)\Gamma(s+\nu+1)}
\left\{\frac{\mathcal{P}(s,t,\nu)}{x^2}\left((s+\nu)^2-t^2\right)
-\frac1x\left(s^2+t^2+\nu s-st\right)\right\},
\end{equation}
and at order $O(g^3)$:
\begin{equation}\label{Identity3}
\begin{split}
0=&\oint\limits_{\Sigma}dt\oint\limits_{\Sigma}ds
\frac{\Gamma(-t)\Gamma(-s)x^{t+s+\nu}}{\Gamma(t+\nu+1)\Gamma(s+\nu+1)}
\left\{
\frac{\mathcal{P}(s,t,\nu)}{x^3}\left(
\left((s+\nu)^2-t^2\right)^2 -2\left((s+\nu)^2+t^2\right)+1
\right)\right.
\\
&\quad\quad\quad\quad\quad\quad\quad\quad\quad
\quad\quad\quad\quad\quad\quad\quad\left.-\frac{1}{x^2}
\left(s^2+t^2+\nu s-st\right)\left((s+\nu)^2-t^2\right)
\right\}.
\end{split}
\end{equation}
We note in passing that these integral identities could be translated into identities among Bessel functions.

Reusing the expansion (\ref{WXX}) with $t-1$ instead of $t$ we obtain
from eq. (\ref{TransitionKuijlaarsZhang})
\begin{eqnarray}
\mathbb{S}(x,x;g)&=&
\frac{1}{g^2x(2\pi i)^2}\oint\limits_{\Sigma}dt\oint\limits_{\Sigma}ds
\frac{\Gamma(-t)\Gamma(-s)x^{t+s+\nu}}{\Gamma(t+\nu+1)\Gamma(s+\nu+1)}
\nonumber\\
&\times&\left[\mathcal{P}(s,t,\nu)-g(s^2+t^2+\nu s-st)\right]
\left(
\frac{g}{x}+\frac{g^2}{x^2}\left((s+\nu)^2-(t-1)^2\right)+O\left(g^3\right)
\right),
\end{eqnarray}
as $g\rightarrow 0$. After the repeated application of the identities (\ref{Identity1}) and (\ref{Identity2})
we find
\begin{equation}
\label{Slim}
\lim_{g\to0}{\mathbb{S}}(x,x;g)=
\frac{1}{(2\pi i)^2}\oint\limits_{\Sigma}dt\oint\limits_{\Sigma}ds
\frac{\Gamma(-t)\Gamma(-s)x^{t+s+\nu-3}}{\Gamma(t+\nu+1)\Gamma(s+\nu+1)}\ 2t\mathcal{P}(s,t,\nu).
\end{equation}
The first equality in Proposition \ref{PropositionDensity2} is thus proved. It remains to show that the contour integral in eq. (\ref{Slim}) above equals
the Bessel density $S^{\sqrt{\Bessel}}(x,x)$ eq. (\ref{SBesselXX}). For this purpose we need three further identities expressing Bessel functions in terms of contour integrals. These follow by simple differentiation of the identities of Proposition \ref{PropositionB} by $\xi\frac{\partial}{\partial \xi}$
\begin{equation}\label{Identity4}
\begin{split}
\frac{1}{2\pi i}\oint\limits_{\Sigma}\frac{dt\,t^2\Gamma(-t)}{\Gamma(\nu+1+t)}\xi^t=&
\ \xi^{-\frac{\nu}{2}+\frac{1}{2}}J_{\nu+1}\left(2\xi^{\frac{1}{2}}\right)
- \xi^{-\frac{\nu}{2}+1}J_{\nu+2}\left(2\xi^{\frac{1}{2}}\right)
,\\
\frac{1}{2\pi i}\oint\limits_{\Sigma}\frac{dt\,t^2(t+\nu)\Gamma(-t)}{\Gamma(\nu+1+t)}\xi^t=&
\ \xi^{-\frac{\nu}{2}+1}J_{\nu}\left(2\xi^{\frac{1}{2}}\right)
- \xi^{-\frac{\nu}{2}+\frac32}J_{\nu+1}\left(2\xi^{\frac{1}{2}}\right)
,\\
\frac{1}{2\pi i}\oint\limits_{\Sigma}\frac{dt\,t^2(t+\nu)(t+\nu-1)\Gamma(-t)}{\Gamma(\nu+1+t)}\xi^t=&
\ \xi^{-\frac{\nu}{2}+\frac{3}{2}}J_{\nu-1}\left(2\xi^{\frac{1}{2}}\right)
-\xi^{-\frac{\nu}{2}+2}J_{\nu}\left(2\xi^{\frac{1}{2}}\right).
\end{split}
\end{equation}
Inserting these results together with Proposition \ref{PropositionB} into eq. (\ref{Slim}), using eq. (\ref{MainPolynomial3}) we obtain
\begin{eqnarray}
\label{Sfinallim}
\lim_{g\to0}{\mathbb{S}}(x,x;g)&=&
\frac{-1}{2x^3}\left[
x^2 J_{\nu+1}\left(2x^{\frac{1}{2}}\right)J_{\nu-1}\left(2x^{\frac{1}{2}}\right)+x^{\frac32}
J_{\nu-1}\left(2x^{\frac{1}{2}}\right)J_{\nu}\left(2x^{\frac{1}{2}}\right)
-x^2J_{\nu}\left(2x^{\frac{1}{2}}\right)^2
\right.\nonumber\\
&& \quad\quad
-\nu x^{\frac32}
J_{\nu+1}\left(2x^{\frac{1}{2}}\right)J_{\nu}\left(2x^{\frac{1}{2}}\right)
-\nu x J_{\nu}\left(2x^{\frac{1}{2}}\right)^2
+\nu x^{\frac32}
J_{\nu+1}\left(2x^{\frac{1}{2}}\right)J_{\nu}\left(2x^{\frac{1}{2}}\right)
\nonumber\\
&&\left.\quad\quad
-x^2\left(J_{\nu+1}\left(2x^{\frac{1}{2}}\right)\right)^2
+x^2 J_{\nu+2}\left(2x^{\frac{1}{2}}\right)J_{\nu}\left(2x^{\frac{1}{2}}\right)
\right]\nonumber\\
&=&\frac{1}{x}\left[ J_{\nu}\left(2x^{\frac{1}{2}}\right)^2
-J_{\nu+1}\left(2x^{\frac{1}{2}}\right)J_{\nu-1}\left(2x^{\frac{1}{2}}\right)
\right] =\ S^{\sqrt{\Bessel}}(x,x)\ .
\end{eqnarray}
In the last step we have used several times the identity for Bessel functions
eq. (\ref{BesselId}). Proposition \ref{PropositionDensity2}
is thus proved.
\qed

\begin{appendix}
\section{Heuristic map to the Bessel-kernel
}\label{Heuristic}
In this appendix we present a heuristic map of the interpolating kernel to the Bessel-kernel as well as some additional checks.
It is heuristic as we first take the limit $\mu\to0$ and then $N\to\infty$.

Recall that the correlation kernel $K_N(x,y,\mu)$ can be expressed in terms of functions $P_n(x)$ and $Q_n(y)$, see
equation (\ref{K}). The functions $P_n(x)$ and $Q_n(y)$ are defined by equations (\ref{P1}) and (\ref{Q1}) respectively.
Let us obtain asymptotic formulae for $P_n(x)$ and $Q_n(y)$ as $\mu\rightarrow 0$.
\begin{prop}\label{PropositionPQAsymptoticsMu0} As $\mu\rightarrow 0$, the following asymptotic formulae hold true
for $x,y>0$:
$$
P_n(x)\sim\frac{(-1)^n\Gamma^2(n+1)}{(2\pi)^{\frac{1}{2}}}L_n^{(\nu)}\left(2x^{\frac{1}{2}}\right)
\frac{\e^{-x^{\frac{1}{2}}}}{x^{\frac{1}{4}}}\e^{\frac{x^{\frac{1}{2}}}{\mu}},
$$
and
$$
Q_n(y)\sim\frac{(-1)^n(2\pi)^{\frac{1}{2}}\left(2y^{\frac{1}{2}}\right)^{\nu}}{\Gamma(n+1)\Gamma(n+1+\nu)}L_n^{(\nu)}\left(2y^{\frac{1}{2}}\right)
\frac{\e^{-y^{\frac{1}{2}}}}{y^{\frac{1}{4}}}\e^{-\frac{y^{\frac{1}{2}}}{\mu}}.
$$
\end{prop}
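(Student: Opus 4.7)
The plan is to substitute the large-argument Bessel function asymptotics from Proposition \ref{PropositionBesselLargeArgumentAsymtotics} directly into the finite-sum representations (\ref{P1}) and (\ref{Q1}), and then recognise the remaining sum as an associated Laguerre polynomial. Since $n$ is kept fixed as $\mu\to 0$, the Bessel orders $k$ and $l+\nu$ that appear in the sums are bounded independently of $\mu$, so the leading term of each Bessel asymptotic is identical for every term of the sum and can be pulled out as an overall factor.

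For $P_n(x)$, applying $I_k\bigl(\tfrac{1-\mu}{\mu}x^{1/2}\bigr)\sim \mu^{1/2}(2\pi)^{-1/2}x^{-1/4}\e^{x^{1/2}/\mu-x^{1/2}}$ uniformly for $k\in\{0,\dots,n\}$ and simultaneously replacing $(1-\mu)^{-k}$ by $1$ to leading order in $\mu$, the formula (\ref{P1}) reduces to
\[
P_n(x)\sim \frac{(-1)^n(\nu+n)!\,n!}{\nu!\,(2\pi)^{1/2}}\,\frac{\e^{-x^{1/2}}}{x^{1/4}}\,\e^{x^{1/2}/\mu}\sum_{k=0}^n\frac{(-n)_k}{(\nu+1)_k\,k!}\bigl(2x^{1/2}\bigr)^k.
\]
Using $(-n)_k=(-1)^k n!/(n-k)!$ and $(\nu+1)_k=\Gamma(\nu+1+k)/\Gamma(\nu+1)$, the remaining truncated hypergeometric sum collapses to $\tfrac{n!\,\nu!}{(n+\nu)!}\,L_n^{(\nu)}\bigl(2x^{1/2}\bigr)$ by the standard series representation of the associated Laguerre polynomials. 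Collecting the factorial prefactors immediately produces the claimed expression, with the combination $(\nu+n)!\,n!\cdot n!\,\nu!/(\nu!\,(n+\nu)!)=(n!)^2=\Gamma^2(n+1)$.

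The computation for $Q_n(y)$ is completely parallel, using $K_{l+\nu}\bigl(\tfrac{1+\mu}{\mu}y^{1/2}\bigr)\sim \pi^{1/2}\mu^{1/2}\,2^{-1/2}y^{-1/4}\e^{-y^{1/2}/\mu-y^{1/2}}$ and $(1+\mu)^{-(l+\nu)}\to 1$ to leading order in $\mu$. Factoring out the $\mu$-independent factor $\bigl(2y^{1/2}\bigr)^\nu$ from the sum yields again the Laguerre combination $\tfrac{n!\,\nu!}{(n+\nu)!}\,L_n^{(\nu)}\bigl(2y^{1/2}\bigr)$, and the overall prefactors rearrange into $(2\pi)^{1/2}/(\Gamma(n+1)\Gamma(n+1+\nu))$ exactly as stated.

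The only technicalities are verifying the uniformity of the Bessel asymptotics over the bounded set of indices, which follows from the standard full asymptotic expansion in \cite{GradshteinRyzhik}, and checking that the $O(\mu)$ subleading corrections in each Bessel asymptotic remain subleading after the finite summation over $n+1$ terms. Neither step is a real obstacle; the only substantive algebraic content is the identification of the truncated hypergeometric sum $\sum_{k=0}^n (-n)_k\,z^k/((\nu+1)_k\,k!)$ with $\tfrac{n!\,\nu!}{(n+\nu)!}\,L_n^{(\nu)}(z)$.
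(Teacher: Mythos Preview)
Your proof is correct. The paper's primary argument goes via the contour integral representations of Proposition~\ref{PropositionPQContourIntegralRepresentation}: it inserts the Bessel asymptotics of Proposition~\ref{PropositionBesselLargeArgumentAsymtotics} into (\ref{Pn}) and (\ref{Qn}) and then evaluates the remaining $t$- and $s$-integrals by the Residue Theorem, which produces the Laguerre polynomials directly in closed form. You instead substitute the Bessel asymptotics into the finite sums (\ref{P1}) and (\ref{Q1}) and identify the truncated ${}_1F_1$-type sum $\sum_{k=0}^n (-n)_k z^k/((\nu+1)_k k!)$ with $\tfrac{n!\,\nu!}{(n+\nu)!}\,L_n^{(\nu)}(z)$ by hand. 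This is exactly the alternative route the paper itself flags at the end of its proof (``Alternatively we could have directly inserted the asymptotic expressions \ldots\ into the definitions eqs.\ (\ref{P1}) and (\ref{Q1})''), so the two arguments are equivalent in content; the contour-integral version has the minor advantage that the residue calculation packages the Laguerre identification automatically, while your version makes the uniformity of the Bessel asymptotics over the bounded index set $k\in\{0,\dots,n\}$ more transparent.
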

\begin{proof}
We use the contour integral representations for $P_n(x)$ and $Q_n(y)$ obtained in Proposition \ref{PropositionPQContourIntegralRepresentation},
and replace the Bessel functions in the formulas for $P_n(x)$ and $Q_n(y)$ by their asymptotic expressions for  large values of the arguments derived in Proposition \ref{PropositionBesselLargeArgumentAsymtotics}, eq. (\ref{IKasymptotic}).
Next, a computation using the Residue Theorem gives
$$
\frac{1}{2\pi i}\oint\limits_{\Sigma_n}dt
\frac{\Gamma(t-n)\left(\frac{2}{1-\mu}x^{\frac{1}{2}}\right)^t}{\Gamma(\nu+1+t)\Gamma(t+1)}=\frac{(-1)^n}{\Gamma(\nu+1+n)}L_n^{(\nu)}\left(\frac{2}{1-\mu}x^{\frac{1}{2}}\right),
$$
and
$$
\frac{1}{2\pi i}
\oint\limits_{\Sigma_n}ds
\frac{\Gamma(s-n)\left(\frac{2}{1+\mu}y^{\frac{1}{2}}\right)^{s+\nu}}{\Gamma(\nu+1+s)\Gamma(s+1)}=
\frac{(-1)^n}{\Gamma(\nu+1+n)}\left(\frac{2}{1+\mu}y^{\frac{1}{2}}\right)^{\nu}L_n^{(\nu)}\left(\frac{2}{1+\mu}y^{\frac{1}{2}}\right),
$$
with
$$
L_n^{(\nu)}(x)=\sum_{k=0}^n\frac{(-1)^k\Gamma(n+\nu+1)}{(n-k)!\Gamma(\nu+k+1)}
\frac{x^k}{k!}\ .
$$
In this way we obtain the asymptotic expressions for $P_n(x)$ and $Q_n(y)$  in the statement of this proposition. Alternatively we could have directly inserted the asymptotic expressions from Proposition \ref{PropositionBesselLargeArgumentAsymtotics}, eq. (\ref{IKasymptotic})
into the definitions eqs. (\ref{P1}) and (\ref{Q1}).
\end{proof}
\begin{rem} To check the formulae in Proposition \ref{PropositionPQAsymptoticsMu0} we  compute $\int_0^{\infty}P_n(x)Q_n(x)dx$. Using the orthogonality relation for the Laguerre polynomials, we see that this integral is indeed equal to $1$ as it should be.  Also, Proposition \ref{PropositionPQAsymptoticsMu0} enables us to check
the recurrence relations for $P_n(x)$ and $Q_n(y)$
asymptotically at small $\mu$
obtained by the authors in \cite[Prop. 3.5]{AkemannStrahov}.
\end{rem}
Next, we obtain the asymptotics of the kernel $K_N(x,y;\mu)$ as $\mu\rightarrow0$ in terms of the Laguerre polynomials.
\begin{prop}\label{PropositionKNMU} As $\mu\rightarrow 0$,
\begin{equation}
\begin{split}
K_N(x,y;\mu)\sim&\sum\limits_{n=0}^{N-1}\frac{\Gamma(n+1)}{\Gamma(n+1+\nu)}L_n^{(\nu)}\left(2x^{\frac{1}{2}}\right)L_n^{(\nu)}\left(2y^{\frac{1}{2}}\right)\\
&\times\left[\left(2x^{\frac{1}{2}}\right)^{\frac{\nu}{2}}\frac{\e^{-x^{\frac{1}{2}}}}{x^{\frac{1}{4}}}\right]
\left[\left(2y^{\frac{1}{2}}\right)^{\frac{\nu}{2}}\frac{\e^{-y^{\frac{1}{2}}}}{y^{\frac{1}{4}}}\right]
\left(\frac{y^{\frac{1}{2}}}{x^{\frac{1}{2}}}\right)^{\frac{\nu}{2}}\frac{\e^{\frac{x^{\frac{1}{2}}}{\mu}}}{\e^{\frac{y^{\frac{1}{2}}}{\mu}}}\\
=& K_N^{\Laguerre}(x,y).
\end{split}
\end{equation}
\end{prop}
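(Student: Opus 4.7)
The plan is direct: insert the asymptotic formulae for $P_n(x)$ and $Q_n(y)$ from Proposition \ref{PropositionPQAsymptoticsMu0} into the biorthogonal expansion $K_N(x,y;\mu)=\sum_{n=0}^{N-1}P_n(x)Q_n(y)$ from eq. (\ref{K}), and then recognise the result as the Laguerre kernel $K_N^{\Laguerre}(x,y)$ given by eq. (\ref{KLaguerre}). Because $N$ is fixed and the sum is finite, asymptotic relations established termwise can be summed without worrying about uniformity in $n$.

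First, I would multiply the two asymptotic formulae. The signs combine via $(-1)^n\cdot(-1)^n=1$, the $(2\pi)^{1/2}$ factors cancel, and the Gamma-factors combine to $\Gamma^2(n+1)/[\Gamma(n+1)\Gamma(n+1+\nu)]=\Gamma(n+1)/\Gamma(n+1+\nu)$, which is exactly the normalising constant appearing in (\ref{KLaguerre}). The exponentials in $1/\mu$ combine to the ratio $\e^{x^{1/2}/\mu}/\e^{y^{1/2}/\mu}$, and the factors $\e^{-x^{1/2}-y^{1/2}}/(xy)^{1/4}$ match the corresponding $h$-factors in the Laguerre kernel.

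The only non-obvious step is the bookkeeping of the $\nu$-powers. The asymptotic expression for $Q_n(y)$ carries a factor $(2y^{1/2})^{\nu}$, while the Laguerre kernel carries the product
\begin{equation*}
(2x^{1/2})^{\nu/2}(2y^{1/2})^{\nu/2}\,(y^{1/2}/x^{1/2})^{\nu/2}\ =\ 2^{\nu}y^{\nu/2}\ =\ (2y^{1/2})^{\nu},
\end{equation*}
so the two expressions agree. Putting these pieces together yields the first equality in the statement; the second equality is just formula (\ref{KLaguerre}).

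There is no substantive obstacle here, since the heavy lifting has already been carried out in Proposition \ref{PropositionPQAsymptoticsMu0}. The only point one should be careful about is that the asymptotic expansions in Proposition \ref{PropositionPQAsymptoticsMu0} hold pointwise in $x,y>0$ and for each fixed $n$; because the outer sum runs over the finite range $0\le n\le N-1$ with $N$ fixed, termwise asymptotics suffice and no uniformity argument in $n$ is required. As a sanity check one can observe that the $\mu$-independent prefactor reproduces, after the same manipulation of $\nu$-powers, exactly the summand in (\ref{KLaguerre}), so that the identification with $K_N^{\Laguerre}(x,y)$ is automatic once the asymptotic product has been simplified.
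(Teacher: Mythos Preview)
Your proposal is correct and follows exactly the same approach as the paper: insert the asymptotics from Proposition \ref{PropositionPQAsymptoticsMu0} into the finite sum \eqref{K} and identify the result with \eqref{KLaguerre}. The paper merely says this is ``straightforward,'' whereas you have spelled out the cancellations and the $\nu$-power bookkeeping in detail, all of which are accurate.
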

\begin{proof} Once the asymptotics of the functions $P_n(x)$ and $Q_n(y)$ is given by Proposition \ref{PropositionPQAsymptoticsMu0}, the asymptotic formula for the correlation kernel
$K_N(x,y;\mu)$ in the statement of  Proposition \ref{PropositionKNMU} can be derived  straightforwardly. By direct comparison to eq. (\ref{KLaguerre})
we see that this agrees with the kernel of Laguerre polynomials defined there.
\end{proof}

Because of the agreement of our kernel in the limit $\mu\to0$ with the kernel of Laguerre polynomials just stated it is clear that it satisfies the same Christoffel-Darboux formula (\ref{Kmu0}), as well as the same Bessel asymptotics eq. (\ref{BesselKernel}). Thus we arrive at the same statement as  Theorem
\ref{TheoremTransitionToBesselKernel}, (a).

We note however, that the order of limits $\mu\to0$ and $N\to\infty$ is subtle as in the double scaling limit with $\mu=g/N$ and $N\to\infty$ with $g>0$ we get a different answer, namely  Theorem
\ref{TheoremTransitionToBesselKernel}, (b).

\begin{rem}
We can obtain the same formula for $K_N(x,y;\mu)$
as the right-hand side of
eq. (\ref{Kmu0}) (and thus the Bessel-kernel)
 starting from Proposition \ref{TheoremChristoffelDarboux}, using
explicit formula for the coefficients $a_{-2,N}$, $a_{-1,N}$, $a_{1,N}$ and $a_{2,N}$ in the statement of Proposition \ref{TheoremChristoffelDarboux},
and the asymptotic formulae for $P_n(x)$ and $Q_n(y)$ of Proposition \ref{PropositionPQAsymptoticsMu0}.
Namely, starting from Proposition \ref{TheoremChristoffelDarboux} we find
\begin{equation}\label{Kexpanded}
\begin{split}
K_N(x,y;\mu)\sim&\frac{\Gamma(N+1)}{\Gamma(N+\nu)}\frac{1}{4(x-y)}\\
&\biggl\{-(N+\nu-1)L_{N-2}^{(\nu)}\left(2x^{\frac{1}{2}}\right)L_{N}^{(\nu)}\left(2y^{\frac{1}{2}}\right)+(N+\nu-1)L_{N}^{(\nu)}\left(2x^{\frac{1}{2}}\right)
L_{N-2}^{(\nu)}\left(2y^{\frac{1}{2}}\right)\\
&-(N+1)L_{N-1}^{(\nu)}\left(2x^{\frac{1}{2}}\right)L_{N+1}^{(\nu)}\left(2y^{\frac{1}{2}}\right)+(N+1)L_{N+1}^{(\nu)}\left(2x^{\frac{1}{2}}\right)
L_{N-1}^{(\nu)}\left(2y^{\frac{1}{2}}\right)\\
&+2(2N+\nu)L_{N}^{(\nu)}\left(2x^{\frac{1}{2}}\right)L_{N-1}^{(\nu)}\left(2y^{\frac{1}{2}}\right)-2(2N+\nu)L_{N-1}^{(\nu)}\left(2x^{\frac{1}{2}}\right)
L_{N}^{(\nu)}\left(2y^{\frac{1}{2}}\right)\biggr\}\\
&\times\left[\left(2x^{\frac{1}{2}}\right)^{\frac{\nu}{2}}\frac{\e^{-x^{\frac{1}{2}}}}{x^{\frac{1}{4}}}\right]
\left[\left(2y^{\frac{1}{2}}\right)^{\frac{\nu}{2}}\frac{\e^{-y^{\frac{1}{2}}}}{y^{\frac{1}{4}}}\right]
\left(\frac{y^{\frac{1}{2}}}{x^{\frac{1}{2}}}\right)^{\frac{\nu}{2}}\frac{\e^{\frac{x^{\frac{1}{2}}}{\mu}}}{\e^{\frac{y^{\frac{1}{2}}}{\mu}}},
\end{split}
\end{equation}
as $\mu\rightarrow 0$. Using the recurrence relation for the Laguerre polynomials
$$
xL_n^{(\nu)}(x)=-(n+1)L_{n+1}^{(\nu)}(x)+(2n+\nu+1)L_n^{(\nu)}(x)-(N+\nu)L_{n-1}^{(\nu)}(x),
$$
it is not hard to see that equation (\ref{Kexpanded}) is equivalent to the right-hand side of equation (\ref{Kmu0}).
In particular, such a calculation confirms that the formulae
stated in Proposition \ref{TheoremChristoffelDarboux} hold true.
\end{rem}

\section{Justification for interchanging sum and integral in the proof of Proposition \ref{Prop4.1}
}\label{AppB}

Let $A>0$, $B>0$, $\nu>0$, and $c>0$ be fixed real positive numbers.
Set
$$
f_n(z)=\frac{(-1)^nB^n\Gamma(z+\nu+n)A^z\Gamma(z)}{n!\Gamma(1+\nu+n)},
$$
where $z=c+iy$, $y\in\left(-\infty,+\infty\right)$, and where $n=0,1,\ldots$
To justify the interchange of the sum and the integral in the proof of Proposition 4.1
means
to prove that
\begin{equation}\label{MainEquation}
\int\limits_{c-i\infty}^{c+i\infty}\left(\sum\limits_{n=0}^{\infty}f_n(z)\right)dz=
\sum\limits_{n=0}^{\infty}\left(\int\limits_{c-i\infty}^{c+i\infty}f_n(z)dz\right).
\end{equation}
Recall that the gamma function $\Gamma(x)$ is positive for positive $x>0$. In what follows  we will use the following inequalities  for the gamma function:\\
(I) For $x>0$ we have (see Ref. \cite{NIST}, Section 5.6, inequality 5.6.6):
\begin{equation}\label{G1}
\left\vert\Gamma(x+iy)\right\vert\leq \Gamma(x).
\end{equation}
(II) If $z=x+iy$, and $x\geq 0$, then (see Ref. \cite{NIST}, Section 5.6, inequality 5.6.9):
\begin{equation}\label{G2}
\left\vert\Gamma(z)\right\vert\leq\left(2\pi\right)^{\frac{1}{2}}|z|^{x-\frac{1}{2}}e^{-\frac{\pi|y|}{2}}e^{\frac{1}{6|z|}}.
\end{equation}
Choose $\Lambda>0$. First, let us prove that
\begin{equation}\label{MainEquationRestricted}
\int\limits_{c-i\Lambda}^{c+i\Lambda}\left(\sum\limits_{n=0}^{\infty}f_n(z)\right)dz=
\sum\limits_{n=0}^{\infty}\left(\int\limits_{c-i\Lambda}^{c+i\Lambda}f_n(z)dz\right).
\end{equation}
Since the integration contour in the integrals above is finite, it is enough to show that
$$
\sum\limits_{n=0}^{\infty}f_n(z)
$$
converges uniformly on $\left\{z:z=c+iy,\; -\Lambda\leq y\leq \Lambda\right\}$. But we have
$$
\vert f_n(z)\vert=A^{c}\frac{B^n}{n!}\frac{\vert\Gamma(z+\nu+n)\vert}{\Gamma(1+\nu+n)}|\Gamma(z)|
\leq c_1\frac{B^n}{n!}\frac{\Gamma(c+\nu+n)}{\Gamma(1+\nu+n)},
$$
where the constant $c_1$ does not depend on $y$, and where we have used inequality (\ref{G1}).
Since the series
$$
\sum\limits_{n=0}^{\infty}\frac{B^n}{n!}\frac{\Gamma(c+\nu+n)}{\Gamma(1+\nu+n)}
$$
converges, we conclude that $\sum_{n=0}^{\infty}f_n(z)$ converges uniformly on $\left\{z:z=c+iy,\; -\Lambda\leq y\leq \Lambda\right\}$.
Thus equation (\ref{MainEquationRestricted}) holds true.

Now we can write
\begin{equation}\label{ManyParts}
\begin{split}
\int\limits_{c-i\Lambda}^{c+i\Lambda}\left(\sum\limits_{n=0}^{\infty}f_n(z)\right)dz&=\sum\limits_{n=0}^{\infty}\left(\int\limits_{c-i\Lambda}^{c+i\Lambda}f_n(z)dz\right)\\
&=\sum\limits_{n=0}^{\infty}\left(\int\limits_{c-i\infty}^{c+i\infty}f_n(z)dz\right)-\sum\limits_{n=0}^{\infty}\left(\int\limits_{c+i\Lambda}^{c+i\infty}f_n(z)dz\right)
-\sum\limits_{n=0}^{\infty}\left(\int\limits_{c-i\infty}^{c-i\Lambda}f_n(z)dz\right).
\end{split}
\end{equation}
Set
$$
J_1(\Lambda)=\sum\limits_{n=0}^{\infty}J_1^{(n)}(\Lambda),\;\;\;J_1^{(n)}(\Lambda)=\int\limits_{c+i\Lambda}^{c+i\infty}f_n(z)dz,
$$
and
$$
J_2(\Lambda)=\sum\limits_{n=0}^{\infty}J_2^{(n)}(\Lambda),\;\;\;J_2^{(n)}(\Lambda)=\int\limits_{c-i\infty}^{c-i\Lambda}f_n(z)dz.
$$
The principal observation is that in order to see that equation (\ref{MainEquation}) holds true it is enough to check that
$$
\underset{\Lambda\rightarrow\infty}{\lim}J_1(\Lambda)=0,\;\;\;\underset{\Lambda\rightarrow\infty}{\lim}J_2(\Lambda)=0,
$$
as it follows from equation (\ref{ManyParts}). Let us estimate $J_1(\lambda)$. We have
$$
|J_1(\Lambda)|\leq\sum\limits_{n=0}^{\infty}|J_1^{(n)}(\Lambda)|,
$$
and
$$
|J_1^{(n)}(\Lambda)|\leq \frac{B^nA^c}{n!\Gamma(1+\nu+n)}\int\limits_{\Lambda}^{+\infty}|\Gamma(c+iy+\nu+n)|
|\Gamma(c+iy)|dy.
$$
Moreover,
$$
|\Gamma(c+\nu+n+iy)|\leq\Gamma(c+\nu+n),
$$
as it follows from inequality (\ref{G1}). Using inequality (\ref{G2}) we obtain
$$
|\Gamma(c+iy)|\leq(2\pi)^{\frac{1}{2}}\left(\sqrt{c^2+y^2}\right)^{c-\frac{1}{2}}e^{-\frac{\pi y}{2}}e^{\frac{1}{6\sqrt{c^2+y^2}}}
\leq c_2e^{-c_3y},
$$
for some positive constants $c_2$, $c_3$, and for $y>0$. This gives
$$
|J_1^{(n)}(\Lambda)|\leq c_4\frac{B^n\Gamma(c+\nu+n)}{n!\Gamma(1+\nu+n)}e^{-c_3\Lambda}.
$$
Since the series
$$
\sum\limits_{n=0}^{\infty}\frac{B^n}{n!}\frac{\Gamma(c+\nu+n_)}{\Gamma(1+\nu+n)}
$$
converges, we conclude that $\underset{\Lambda\rightarrow\infty}{\lim}J_1(\Lambda)=0$ holds true indeed.
In the same way we can show that $\underset{\Lambda\rightarrow\infty}{\lim}J_2(\Lambda)=0$ holds too.

\end{appendix}
%


\begin{thebibliography}{99}


\bibitem{AIp}
Akemann, G.; Ipsen, J.R.
Recent exact and asymptotic results for products of independent random matrices.
Acta Physica Polonica B, Vol. {\bf 46}, no. 9 (2015) 1747--1784.


\bibitem{AkemannIpsenKieburg}
Akemann, G.; Ipsen, J.; Kieburg M.
Products of rectangular random matrices: singular values and progressive scattering.  Phys. Rev. E {\bf 88} (2013) 052118, 14pp.

\bibitem{AkemannKieburgWei}
Akemann, G.; Kieburg M.; Wei, L.
Singular value correlation functions for products of Wishart random matrices.
J. Phys. A. {\bf 46} (2013) 275205, 24pp.

\bibitem{AkemannStrahov}
Akemann, G.; Strahov, E.
Dropping the independence: singular values for products of two coupled random matrices.
Commun. Math. Phys. {\bf 345} no. 1 (2016) 101--140.

\bibitem{Bertola1}
Bertola, M.; Gekhtman, M.; Szmigielski, J. The Cauchy two-matrix
model. Commun. Math. Phys. {\bf 287} (2009) 983--1014.

\bibitem{Bertola2}
Bertola, M.; Gekhtman, M.; Szmigielski, J.
Cauchy-Laguerre two-matrix
model and the Meijer-G random point field. Commun.
Math. Phys. {\bf 326} (2014)
111--144.

\bibitem{Bertola3}
Bertola, M.; Bothner, T.
Universality conjecture and results for a model of several coupled positive-definite matrices.
Commun. Math. Phys.
{\bf 337} (2015)
1077--1141.

\bibitem{Bohigas06}
Bohigas, O.; Pato, M. P. Randomly incomplete spectra and intermediate statistics. 
Phys. Rev. E {\bf 74}
(2006) 036212, 6pp.

\bibitem{Bohigas09}
Bohigas, O.; De Carvalho, J. X.; Pato, M. P. 
Deformations of the Tracy-Widom distribution. 
Phys. Rev. E {\bf 79}
(2009) 031117, 6pp.


\bibitem{Borodin:1998}
Borodin, A. {Biorthogonal ensembles}.
Nucl. Phys. B {\bf 536},
Issue 3 (1998)
704--732.


\bibitem{Che}
Cheliotis, D.
Triangular random matrices and biorthogonal ensembles.
Preprint arXiv:1404.4730 [math.PR] (2014) 21 pp.

\bibitem{DeiftIntegrableOperators}
Deift, P. A. Integrable operators. In: Buslaev, V: Solomyak, M;
Yafaev, D. (eds.). Differential operators and spectral theory:
M.~Sh.~Birman's 70th anniversary collection. American Mathematical
Society Translations, ser. 2, {\bf 189}, Providence, R.I., AMS, (1999).


\bibitem{DIZ}
Deift, P.; Its, A.; Zhou, X. A Riemann-Hilbert approach to asymptotic problems arising in the theory of random matrix
models and also in the theory of integrable statistical mechanics. Ann.  Math. {\bf 146}, no.1 (1997) 149--235.


\bibitem{Forrester93}
Forrester, P. J. {The spectrum edge of random matrix ensembles}.
Nucl. Phys. B {\bf  402}, Issue 3 (1993) 709--728.


\bibitem{ForresterLogGases}
 Forrester, P. J. Log-gases and random matrices. London Mathematical Society Monographs Series, 34. Princeton University Press, Princeton, NJ, 2010.

\bibitem{FW}  Forrester, P. J.; Wang, D.
Muttalib--Borodin ensembles in random matrix theory -- realisations and correlation functions.
Preprint 	arXiv:1502.07147 [math-ph] (2015) 42 pp.



\bibitem{FKS}
Fyodorov, Y. V.; Khoruzhenko B. A.; Sommers, H.-J.
Almost-Hermitian Random Matrices: Eigenvalue Density in the Complex Plane.
Phys. Lett. A {\bf 226} (1997) 46--52.

\bibitem{FKS2}
Fyodorov, Y. V.; Khoruzhenko B. A.; Sommers, H.-J.
Almost-Hermitian Random Matrices: Crossover from Wigner-Dyson to Ginibre eigenvalue statistics.
Phys. Rev. Lett. {\bf 79} (1997) 557--560. 



\bibitem{GradshteinRyzhik}
Gradshteyn, I.S.; Ryzhik, I.M. Table of Integrals, Series, and Products.
A. Jeffrey and D. Zwillinger (eds.). Fifth edition (January 1994), Academic Press, NY.

\bibitem{JespersThesis}
Ipsen, J.R. Products of Independent Gaussian Random Matrices. PhD Thesis, Bielefeld University 2015, 144pp, preprint arXiv:1510.06128 [math-ph].

\bibitem{Its}
Its, A. R.; Izergin, A. G.; Korepin, V. E.;  Slavnov, N. A.
Differential equations for quantum correlation functions. Intern. J.
Mod. Phys. B {\bf 4} (1990) 1003--1037.

\bibitem{Its1}
 Its, A. Painlev$\acute{\mbox{e}}$ transcendents. In: G. Akemann, J. Baik, P. Di Francesco (eds) The Oxford handbook of random matrix theory, 176--197, Oxford Univ. Press, Oxford, 2011.

\bibitem{ItsHarnad}
Its, A.; Harnad, J. Integrable Fredholm operators and dual isomonodromic deformations. Commun. Math. Phys. {\bf 226}, no. 3 (2002) 497--530.

\bibitem{JohanssonGumbel}
Johansson, K. From Gumbel to Tracy-Widom. Probab. Theory Related Fields {\bf 138}, no. 1-2 (2007) 75--112.


\bibitem{KW}
Kanazawa, T.; Wettig, T.; Yamamoto, N.
Singular values of the Dirac operator in dense QCD-like theories.
JHEP {\bf 12} (2011) 007, 85 pp.



\bibitem{ArnoDries}
Kuijlaars A.B.J.; Stivigny, D.
Singular values of products of random matrices and polynomial ensembles.
Random Matrices: Theory and Applications {\bf 3} (2014) 1450011, 22 pp.




\bibitem{KuijlaarsZhang}
Kuijlaars, A.B.J.; Zhang, L.
Singular values of products of Ginibre random matrices, multiple orthogonal polynomials and hard edge scaling limits.
Commun. Math. Phys. {\bf 332} (2014) 759--781.

\bibitem{LSZ}
Lueck, T.; Sommers, H.-J.; Zirnbauer, M.R.
Energy correlations for a random matrix model of disordered bosons.
J. Math. Phys. {\bf 47} (2006) 103304, 20 pp.


\bibitem{Moshe}
Moshe, M.; Neuberger, H.; Shapiro, B. Generalized ensemble of random matrices. Phys. Rev. Lett. {\bf 73} (1994) 1497--1500.

\bibitem{Muttalib}
Muttalib, K. A. Random matrix models with additional interactions, J. Phys. A {\bf 28}, no. 5 (1995)
L159--L164.

\bibitem{NIST}
Olver, F. W. J.; Lozier, D. W.;  Boisvert, R. F.; and Clark, C. W. (Eds.), NIST Handbook of
Mathematical Functions, Cambridge Univ. Press, Cambridge, 2010.


\bibitem{James}
Osborn, J.C.
Universal results from an alternate random matrix model for QCD with a baryon chemical potential.
Phys. Rev. Lett. {\bf 93} (2004) 222001, 4pp.

\bibitem{PandeyMehta}
Pandey, A.; Mehta, M. L.
Gaussian ensembles of random Hermitian matrices intermediate between orthogonal and unitary ones.
Commun. Math. Phys. {\bf 87}, no. 4 (1982/83) 449--468.


\bibitem{SV93}
Shuryak, E.; Verbaarschot, J.J.M.
Random matrix theory and spectral sum rules for the Dirac operator in QCD.
Nucl. Phys. A {\bf 560} (1993) 306--320.

\bibitem{ES14}
Strahov, E. Differential equations for singular values of products of Ginibre random matrices. 	J. Phys. A: Math. Theor. {\bf 47} (2014) 325203, 27 pp.

\bibitem{Tracy}
 Tracy, C. A.; Widom, H. Level spacing distributions and the Bessel kernel. Commun. Math. Phys. {\bf 161}, no. 2 (1994) 289--309.



\bibitem{EMW} Wright; E.M.
The asymptotic expansion of the generalized Bessel function.
Proc. London Math. Soc.
{\bf 38} (1934) 257--270.





\end{thebibliography}
\end{document}